\newcommand{\nextverbatimspread}[1]{%
  \def\verbatim@font{%
    \linespread{#1}\normalfont\ttfamily% Updated definition
    \gdef\verbatim@font{\normalfont\ttfamily}}% Revert to old definition
}
\titleformat*{\section}{\large \bfseries}
\titleformat*{\subsection}{\normalsize \bfseries}
\titleformat*{\subsubsection}{\normalsize \bfseries}
\numberwithin{equation}{section}
\newtheorem{condition}{Condition}
\theoremstyle{definition}
\newtheorem{theorem}{Theorem}
\newtheorem{definition}{Definition}
\newtheorem{assumption}{Assumption}
\newtheorem{lemma}{Lemma}
\newtheorem{example}{Example}
\newtheorem{proposition}{Proposition}
\newtheorem{remark}{Remark}
\newcommand\norm[1]{\left\lVert#1\right\rVert}
\begin{document}
\pagenumbering{roman}

%A Graph Topology Measure of a Time Series Regression-based \\ Risk Matrix with Tail Estimates

\title{ {\Large \textbf{Statistical Estimation for Covariance Structures with Tail Estimates using Nodewise Quantile Predictive Regression Models}\thanks{The current study was initiated during my PhD studies at the University of Southampton. The first manuscript was titled: "\textit{Optimal Portfolio Allocation Using Financial Networks}". The paper was presented during the PhD Workshop in March 2019 and in October 2019 at the Department of Economics, University of Southampton. A revised manuscript was titled: "\textit{Optimal Portfolio Choice and Stock Centrality under Tail Events}" which can be found on \href{https://arxiv.org/abs/2112.12031}{ArXiv:2112.12031}. \textbf{Article history:} December 2021, March 2022, May 2023.  
\\

I wish to thank my main PhD supervisor Prof. Jose Olmo for guidance and continuous encouragement throughout the PhD programme as well as Prof. Peter W. Smith for helpful discussion and for stimulating the further development of the current study in the direction of node exclusion and feature selection. In addition, I wish to thank Zudi Lu, Jean-Yves Pitarakis, Tassos Magdalinos, Hector Calvo Pardo, Zacharias  Maniadis, Ruben Sanchez Garcia and Tullio Mancini as well as Departmental Seminar Series speakers at the University of Southampton: Luis E. Candelaria, Juan Carlos Escanciano, Marcelo C. Medeiros, Sebastian  Engelke, Markus Pelger and Loriano Mancini for helpful conversations. 
\\

Lecturer in Economics, University of Exeter Business School, Exeter EX4 4PU, United Kingdom. Email: \textcolor{blue}{c.katsouris@exeter.ac.uk}
} 
}
}

%\href{https://arxiv.org/abs/2112.12031}{(arXiv:2112.12031)}

%Ph.D. Graduate, Department of Economics, University of Southampton, Southampton, SO17 1BJ, UK.

\author{\textbf{Christis Katsouris}\\ \textit{University of Exeter} }
%}
%\\
%\small \textit{Department of Economics, University of Southampton} 
%\\ 
%\small Southampton SO17 1BJ, United Kingdom
%\\
%\small
%\texttt{C.Katsouris@soton.ac.uk}
%\\
%\small \textit{Department of Economics, University of Exeter}
%\\
%\small Exeter EX4 4PU, United Kingdom
%\small First Draft: 19 March 2019
%}

%\href{https://sites.google.com/view/christiskatsouris/home}{Website}.

%\small{Department of Economics, University of Southampton}

\date{July 24, 2023}

\maketitle

%proposes a novel graph topology measure for two multivariate random variables $\boldsymbol{Y}_t$ and $\boldsymbol{X}_{t-1}$ such that $\boldsymbol{Y} = \left( Y_{1t},... Y_{kt} \right)^{\prime}$ and $\boldsymbol{X} = \left( X_{1t-1},... X_{pt-1} \right)^{\prime}$ are time series observations with one time-lag difference, which are employed to construct a novel regression-based risk matrix. The tail risk matrix operates under the assumption of stationarity and is constructed with nodewise quantile predictive regressions. 

% based on nodewise quantile predictive regression models. 

% based on our regression-based risk matrix, we 
 
\begin{abstract}
\vspace*{-0.3 em}
This paper considers the specification of covariance structures with tail estimates. We focus on two aspects: (i) the estimation of the $\mathsf{VaR-\Delta CoVaR}$ risk matrix in the case of larger number of time series observations than assets in a portfolio using quantile predictive regression models without assuming the presence of nonstationary regressors and; (ii) the construction of a novel variable selection algorithm, so-called \textit{Feature Ordering by Centrality Exclusion} (FOCE), which is based on an assumption-lean regression framework, has no tuning parameters and is proved to be consistent under general sparsity assumptions. We illustrate the usefulness of our proposed methodology with numerical studies of real and simulated datasets when modelling systemic risk in a network. 
\\

\textit{Keywords:} feature ordering, centrality measures,  covariance matrix, quantile regression.

\vspace{1.2pt}

\textit{AMS 2000 Classification:} 62F07, 62H05, 62H12, 62H20. 
\end{abstract}

%In this paper motivated by the portfolio risk optimization in financial networks survey study, we examine the relationship between portfolio risk and eigenvector centrality based on the global minimum variance portfolio. This particular aspect is of special interest when an economic agent considers the optimal portfolio choice within a network topology. 

%In other words, we introduce a novel risk matrix and as a by product we also introduce a novel risk measure. 

%In this paper we propose a novel algorithm for feature selection, we call "Feature Sorting by Centrality Exclusion" (FSCE).

%We consider the eigenvector centrality as a statistical measure that characterizes the degree of connectedness of nodes within a network and show that there is no monotonic relationship between the centrality of an asset and its optimal portfolio allocations as previously thought.     

%and centrality measures 

%, while in the finance literature a recent research question of interest is the optimal portfolio allocation problem in financial networks (see, \cite{peralta2016network} and \cite{olmo2021optimal}). 

%\paragraph{Simultaneous Causation}

%The construction of the proposed risk matrix based on the nodewise quantile predictive regression models can be employed to test for the presence of systemic risk effects. In a sense the particular testing methodology can be considered as testing for a simultaneous causation, a terminology defined by Rosenberg (1998). 

%%-------------------------------------------------------------------------%%
\newpage 

\setcounter{page}{1}
\pagenumbering{arabic}

\section{Introduction.}

Network analysis has seen  in recent decades a growing attention in the statistics, econometrics and quantitative finance literature; with a common aspect of consideration the role of node centrality in model estimation and portfolio optimization problems. Specifically, in the statistics literature related applications include graphical models and node exclusion tests (see, \cite{roverato1998isserlis} and \cite{salgueiro2005power}), while in the econometrics literature recent applications investigate the modelling of interconnectedness and financial spillover effects based on graph structures that capture interactions among economic agents (see, \cite{hong2009granger}, \cite{billio2012econometric},  \cite{diebold2012better}, \cite{hardle2016tenet}, \cite{blasques2016spillover},  \cite{barigozzi2017network}, \cite{barunik2018measuring}). In the finance and risk management literature, a relevant aspect of interest is the optimal portfolio allocation problem using financial networks\footnote{A specific research question of these studies is the relation between stock centrality and portfolio risk. \cite{peralta2016network} examine the minimum variance and mean-variance asset allocation problems by connecting the optimal portfolio weights of each investment strategy to the eigenvector centrality. The authors use the correlation decomposition of the covariance matrix to represent the network topology. Thus, the variance-covariance matrix is decomposed into a quadratic form with the correlation matrix as a quadratic matrix and a diagonal matrix with the variances of the assets. By defining a correlation matrix as a matrix which has $1'$s on its main diagonal and the pairwise correlations on its off diagonal elements then this matrix is considered as an adjacency matrix.} (see, \cite{peralta2016network}, \cite{olmo2021optimal},  \cite{katsouris2021optimal} and \cite{lin2022portfolio}). All aforementioned studies consider suitable statistical dependence measures that are robust to deformations of the network topology, a commonly occurring phenomenon during a financial crisis. Our main research objective is to investigate methodologies for modelling extreme events in graphs combining both tail risk and the induced centrality measures. 

Our first contribution to the literature is the construction of a regression-based covariance-type matrix for modelling tail dependency in graphs using pairwise quantile regression models\footnote{The econometric specification for obtaining the risk measures of VaR and CoVaR are presented in the studies of \cite{adrian2016covar} and \cite{hardle2016tenet} (see also, \cite{white2015var}).}; which to the best our knowledge is a novel estimation approach. Specifically, the proposed risk matrix captures higher-order dependency induced by the tail behaviour of the underline distributions of a fixed graph\footnote{A graph representation of a financial network with the proposed risk matrix is proposed by \cite{katsouris2021optimal} who show that this covariance-type induces a positive-definite quadratic form which permits to consider optimal portfolio allocation problems under tail events. In this paper, the focus is on the proposed feature selection algorithm.}, under the assumption of time series stationarity\footnote{In this paper, we consider that the pairwise quantile predictive regression models are driven only by the predictive regression equation without using a nonstationary specification for regressors. An extension to the case in which the proposed tail-risk matrix depends nonlinearly on the degree of persistence is currently work in progress by the author.}. Although we do not directly model the multivariate extreme dependence as in the study of \cite{engelke2020graphical}, the constructed large covariance-type matrix captures a general form of network interconnectedness through pairwise graph causality. In particular, the conditional tail (extremal) dependency across the nodes of the graph is measured in relation to the risk measures of VaR and CoVaR obtained as tail estimates (see, Section \ref{Section2}).

Our second contribution to the literature, is a novel feature selection algorithm based on the proposed risk matrix, that we call \textit{Feature Ordering by Centrality Exclusion} (FOCE). This node selection algorithm provides an ordering mechanism using the parametrically specified covarariance type matrix that corresponds to a fixed quantile level, denoted with $\boldsymbol{\Gamma}_n ( \uptau )$ for some $\uptau \in (0,1)$, which implies that the covariance structure is approximated with tail estimates obtained from pairwise quantile (predictive) regressions fitted on a sample size of $n$ observations. Recently, \cite{caner2022sharpe} present a framework for nodewise regression when the residuals from a fitted factor model are used and a feasible residual-based nodewise regression is proposed to estimate the precision matrix of errors.

%%-------------------------------------------------------------------------%%
\newpage

In this paper we focus on the introduction of the novel risk matrix as well as in the introduction of the novel graph topology measure. In particular, the proposed risk matrix captures tail dependence and causality in graphs and thus the proposed algorithm of feature ordering is based on this conditional tail dependence measure providing this way a methodology to order characteristic statistics of the graph topology such as centrality measures based on the tail risk matrix. Specifically, the proposed algorithm has an oracle property in the sense that if we plug the true population parameters then our algorithm gives the true feature selection by centrality exclusion of the population variables. Using the proposed algorithm and risk matrix, we aim to model multivariate tail behaviour by considering the tail behaviour of each pair of nodes in the graph. More importantly, it appears to have good performance in simulated and real data sets. Therefore, we focus on the development of the FSCE as well as the proofs of its consistency. The proposed algorithm is not possible to be implemented in low-dimensional models due to convergence and consistency issues based on the proposed stopping rule of the algorithm. We aim to investigate whether our procedure can be estimated with respect to convergence rate $\mathcal{O} \left( n \ \mathsf{log} n \right)$ where $n$ is the sample size.

Usually in the literature the main concern is to show that for $p >> n$ the Sharpe Ratio estimators are consistent in global minimum-variance and mean-variance portfolios. However, in this paper our main concern is to demonstrate the use of the proposed tail-risk matrix in portfolio optimization problems. Specifically, covariance matrices estimated using high dimensional statistical models are used in asset allocation settings as a precision matrix (inverse of the covariance matrix). On the other hand, in high dimensional settings a sparsity condition on the precision matrix of the errors from a statistical model (such as a factor model, a nodewise quantile predictive regression model etc.), can be justified in empirical finance studies. Furthermore, the statistical interpretation of of zero entries on a precision matrix can be attributed into conditional independence.  Then, our proposed methodology focuses on estimating a covariance-type matrix using the tail forecasts rather than based on the estimated residuals from a factor model as in the study of \cite{caner2022sharpe}.

\subsection{Illustrative Examples}

\begin{example}
Consider the conventional covariance structure that corresponds to the errors of the factor model (see, \cite{fan2015risks}, \cite{caner2022sharpe} among others) such that 
\begin{align}
\label{covariance}
\hat{\sigma}_{ij}^o = \frac{1}{T} \sum_{t=1}^T u_{it} u_{jt} \ \ \ \text{and} \ \ \ \hat{\omega}_{ij}^o = \frac{ \hat{\sigma}_{ij}^o  }{ \big( \hat{\sigma}_{ii}^o  \hat{\sigma}_{jj}^o  \big)^{1/2} }    
\end{align}
where the $\sigma^o$ notation stands for oracle, which indicates that these estimators are not feasible because the true model errors are required. Clearly, the covariance structure given by \eqref{covariance}, implies that all related statistical concepts such as deriving probability bounds for the estimation error, deriving consistent estimators of portfolio performance measures (such as sharpe ratios) in a global minimum-variance or mean-variance portfolio, will depend on the unknown error of the underline factor model. On the other hand, the novelty of our proposed framework is that the covariance matrix under consideration corresponds to the true values of the tail risk measures, $\mathsf{VaR}-\mathsf{CoVaR}$, which are estimated as forecasts\footnote{Notice that for the purpose of this paper, we assume that these risk measures have good elicitability properties. } from nodewise quantile predictive regression models rather than their errors as it is the common practice in the literature (see, \cite{callot2021nodewise} and \cite{caner2022sharpe}). 
\end{example}

%%-------------------------------------------------------------------------%%
\newpage

A different stream of literature considers shrinkage methods such as the framework proposed by \cite{ledoit2012nonlinear}, who consider a nonlinear shrinkage estimator in which small eigenvalues of the sample covariance matrix are increased and large eigenvalues are decreased. Although methodologies for shrinkaging the eigenvalues of covariance matrices which are used for asset allocation purposes ensure that the optimization is not ill-conditioned, this aspect is of independent interest since our main contribution for this paper is to study the main properties of this novel tail-risk matrix. A conjecture is that similar to the case when the covariance matrix is estimated from the errors of factor models, a larger number of included factors will result on a slower convergence rate of the estimation error. Similarly, when the tail-based covariance matrix is estimated with nodewise quantile predictive regressions, an increased number of regressors (either stationary or nonstationary) will result to a slower convergence of the estimation error. Notice that with estimation error here we mean the error in estimating the true structure of the covariance matrix. Moreover, the conventional approach in the literature is to separate between the number of units (e.g., $p$ nodes in the network) versus the number of time series observations. In our case we have another relevant parameter which is the number of regressors included in each quantile predictive regression model (similar to the case of factors when considering factor models). In this paper, we focus in the case in which the number of nodes in the network is not necessarily much larger than the number of time series observations, that is, $p < n$ (or $N < T$ in an equivalent notation).

In the conventional factor structure model case, the following theorem holds for the precision matrix. 
\begin{theorem}[\cite{caner2022sharpe}]
Under the main assumptions it holds that
\begin{align}
\underset{ 1 \leq j \leq p }{ \mathsf{max} } \ \norm{ \widehat{\boldsymbol{\Gamma}}_j - \boldsymbol{\Gamma}_j }_1 \equiv \mathcal{O}_p \big( \bar{m} \ell_n \big) = o_p(1).    
\end{align}
and secondly, 
\begin{align}
\norm{ \widehat{\boldsymbol{\mu}} - \boldsymbol{\mu} }_{\infty} = \mathcal{O}_p \left( \mathsf{max} \left\{K \sqrt{ \frac{ ln(n) }{n} }, \sqrt{ \frac{ ln(p) }{n} }  \right\} \right) = o_p(1).      
\end{align}
\end{theorem}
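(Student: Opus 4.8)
The plan is to derive both displays from a residual-based nodewise (Meinshausen--B\"uhlmann) Lasso analysis, in the spirit of the classical precision-matrix construction but with the extra layer needed because the idiosyncratic errors are not observed, exactly as in \cite{caner2022sharpe}. Write the factor decomposition $\boldsymbol{R}_t = \boldsymbol{\mu} + \boldsymbol{B}\boldsymbol{f}_t + \boldsymbol{u}_t$ with $\boldsymbol{u}_t$ mean-zero sub-exponential, $\boldsymbol{\Sigma}_u = \E[\boldsymbol{u}_t\boldsymbol{u}_t^\top]$ and $\boldsymbol{\Theta}=\boldsymbol{\Sigma}_u^{-1}$, whose $j$-th row in nodewise form is $\boldsymbol{\Gamma}_j = \tau_j^{-2}\,(-\gamma_{j,1},\dots,1,\dots,-\gamma_{j,p})$, where $\gamma_j^0$ solves the population regression of $u_{\cdot j}$ on $u_{\cdot,-j}$ and $\tau_j^2$ is the associated residual variance. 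The estimator $\widehat{\boldsymbol{\Gamma}}_j$ is built from the Lasso regression of the $j$-th \emph{estimated} residual column on the remaining ones, with plug-in residual variance $\hat\tau_j^2$, and $\bar m = \max_{1\le j\le p} s_j$ denotes the maximal row sparsity of $\boldsymbol{\Theta}$.

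\textbf{First display.} I would split the argument into a deterministic part and a probabilistic part. On the event that (i) the empirical Gram matrix $\widehat{\boldsymbol{\Sigma}}_u$ of the estimated residuals satisfies a compatibility/restricted-eigenvalue condition over the relevant cone, and (ii) the KKT score term $\norm{\tfrac1n \widehat{\boldsymbol{U}}_{\cdot,-j}^\top\hat\eta_j}_\infty \le \lambda_n/2$ uniformly in $j$, the standard Lasso oracle inequality gives $\norm{\hat\gamma_j - \gamma_j^0}_1 = \mathcal{O}(s_j\lambda_n)$ and $|\hat\tau_j^2-\tau_j^2| = \mathcal{O}(s_j\lambda_n)$; combining these with $\tau_j^2$ bounded away from $0$ and $\infty$ yields $\norm{\widehat{\boldsymbol{\Gamma}}_j - \boldsymbol{\Gamma}_j}_1 = \mathcal{O}(\bar m\lambda_n)$ on that event, uniformly in $j$. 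The probabilistic part then verifies that events (i)--(ii) hold with probability tending to one for $\lambda_n \asymp \ell_n$: writing $\widehat{\boldsymbol{\Sigma}}_u = \boldsymbol{\Sigma}_u + (\boldsymbol{\Sigma}_u^{\mathrm{emp}}-\boldsymbol{\Sigma}_u) + (\widehat{\boldsymbol{\Sigma}}_u - \boldsymbol{\Sigma}_u^{\mathrm{emp}})$ separates the genuine sampling fluctuation, handled by a sub-exponential maximal inequality giving entrywise error $\mathcal{O}_p(\sqrt{\log p/n})$, from the factor-estimation contamination $\widehat{\boldsymbol{\Sigma}}_u - \boldsymbol{\Sigma}_u^{\mathrm{emp}}$, which is controlled through bounds on $\norm{\widehat{\boldsymbol{B}}-\boldsymbol{B}}$, $\norm{\widehat{\boldsymbol{f}}_t-\boldsymbol{f}_t}$ and their cross-products (PCA/least-squares rates under the pervasiveness assumption, producing the $K\sqrt{\log n/n}$ piece). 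Propagating these through a perturbation lemma for the compatibility constant and through the score bound fixes $\ell_n$ as the maximum of these rates; the final $\max_{1\le j\le p}$ costs only an extra $\log p$ in the union bound (already absorbed in $\ell_n$), and $\bar m\ell_n = o(1)$ follows from the sparsity-growth assumption, giving the stated $o_p(1)$.

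\textbf{Second display.} For $\norm{\widehat{\boldsymbol{\mu}}-\boldsymbol{\mu}}_\infty$ I would use the decomposition $\widehat{\boldsymbol{\mu}} - \boldsymbol{\mu} = \bar{\boldsymbol{u}} + \boldsymbol{B}(\bar{\boldsymbol{f}}-\E\boldsymbol{f}_t) + (\widehat{\boldsymbol{B}}-\boldsymbol{B})\bar{\boldsymbol{f}}$ (up to the particular construction of $\widehat{\boldsymbol{\mu}}$), bounding $\max_{1\le j\le p}|\bar u_j| = \mathcal{O}_p(\sqrt{\log p/n})$ by a Bernstein-type maximal inequality over the $p$ coordinates, and bounding the two factor-driven terms coordinatewise by $\max_i\norm{b_{i\cdot}}\cdot\norm{\bar{\boldsymbol{f}}-\E\boldsymbol{f}_t}$ and $\max_i\norm{\hat b_{i\cdot}-b_{i\cdot}}\cdot\norm{\bar{\boldsymbol{f}}}$, each of which is $\mathcal{O}_p(K\sqrt{\log n/n})$ after invoking the factor-model rates and a maximal inequality over the $K$ factors. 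Taking the maximum of the two contributions gives the claimed rate, and $o_p(1)$ is then immediate from $\log p = o(n)$ and $K^2\log n = o(n)$, which are part of the maintained assumptions.

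The genuine obstacle is the probabilistic part of the first display: showing that running the nodewise Lasso on \emph{estimated} rather than true idiosyncratic errors degrades neither the restricted-eigenvalue constant nor the admissible tuning level beyond order $\ell_n$. This is precisely where the residual-based construction departs from classical nodewise regression, and it forces the factor-estimation error to be, uniformly in $t$ and in the coordinate, of order no larger than the intrinsic Lasso noise $\sqrt{\log p/n}$; reconciling this with the PCA rates is what pins down the joint growth restrictions on $(p,n,K,\bar m)$ that make up "the main assumptions", and it is the step I would expect to consume most of the work.
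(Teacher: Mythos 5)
This statement is not proved in the paper at all: it is quoted verbatim, with attribution, from \cite{caner2022sharpe} as background motivation in the introduction, so there is no internal proof to compare your attempt against. Your sketch does follow the strategy of the cited source — residual-based nodewise Lasso, a deterministic oracle inequality under a compatibility/restricted-eigenvalue condition plus a KKT score bound, a union bound over $j$ to get the uniform $\mathcal{O}_p(\bar m\,\ell_n)$ rate, and for the mean a coordinatewise maximal inequality combined with factor-estimation rates producing the $\mathsf{max}\{K\sqrt{\ln n/n},\sqrt{\ln p/n}\}$ bound — so the route is the right one. Be aware, though, that what you have written is a roadmap rather than a proof: the step you yourself single out as the genuine obstacle (showing, uniformly in $j$ and $t$, that replacing the true idiosyncratic errors by estimated residuals perturbs neither the compatibility constant nor the admissible tuning level by more than order $\ell_n$) is precisely the content of the main supporting lemmas in \cite{caner2022sharpe}, and in your proposal it is asserted rather than established; the same holds for the precise definition of $\ell_n$ and the growth conditions on $(p,n,K,\bar m)$ that deliver $\bar m\,\ell_n=o(1)$. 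As a reconstruction of the cited argument the proposal is sound; as a self-contained proof it is incomplete at exactly the point where the real work lies.
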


\begin{remark}
Using a direct estimation procedure (i.e., without a pseudo-inverse approximation) for the precision matrix provides a faster convergence to the true population values of the portfolio performance metrics. However, regardless of the direct or indirect estimation of the precision matrix, a larger number of $p$ (nodes in the network) affects the error by a logarithmic factor. On the other hand, the estimation error also increases with the non-sparsity of the precision matrix, especially as the dimensionality of the matrix increases. Thus, in the case of a non-sparse precision matrix, we can only get consistency when $p << n$, that is, the number of nodes in the network is much smaller than the time series observations on which the statistical model is fitted on. A key assumption in the conventional framework is that, $\big( \boldsymbol{Y}_t, \boldsymbol{X}_t  \big)_{t=1}^n$ are both stationary and ergodic vectors.    
\end{remark}

\begin{lemma}
The following probability bound holds
\begin{align}
\mathbb{P} \left(  \underset{ 1 \leq j \leq p }{ \mathsf{max} } \ \underset{ 1 \leq \ell \leq p }{ \mathsf{max} } \left| \frac{1}{n} \sum_{t=1}^n u_{\ell, t} u_{j,t} - \mathbb{E} \big[ u_{\ell,t} u_{j,t}   \big] \right| > C \sqrt{ \mathsf{ln} (p) / n } \right) = \mathcal{O} \left( \frac{1}{p^2}  \right)  
\end{align}
when the covariance structure is based on a prespecified factor model. 
\end{lemma}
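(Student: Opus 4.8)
The plan is to pass from the maximal deviation to a single-pair concentration bound via a union bound, and to establish the single-pair bound with a Bernstein-type inequality for sub-exponential summands. First I would fix indices $j,\ell\in\{1,\dots,p\}$ and set $Z_{t}:=u_{\ell,t}u_{j,t}-\mathbb{E}\big[u_{\ell,t}u_{j,t}\big]$, so that the quantity inside the probability is $\max_{j,\ell}\big|\tfrac{1}{n}\sum_{t=1}^{n}Z_{t}\big|$. Under the maintained moment restrictions on the factor-model errors (uniformly sub-Gaussian marginals, i.e.\ a uniform bound on the Orlicz $\psi_{2}$ norm of $u_{it}$), each product $u_{\ell,t}u_{j,t}$ is sub-exponential with $\psi_{1}$ norm bounded by a constant that does not depend on $(j,\ell,t)$; hence $Z_{t}$ is a centered sub-exponential variable with a uniformly bounded variance proxy $\sigma^{2}$ and scale parameter $b$.

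Second, I would invoke a Bernstein-type deviation inequality for the sequence $(Z_{t})_{t=1}^{n}$. In the i.i.d.\ case this is the classical Bernstein inequality; under the dependence conditions implicit in the preceding remark (stationarity and ergodicity, strengthened to, say, an exponential mixing rate or a summable functional-dependence-measure bound on $(\boldsymbol{Y}_{t},\boldsymbol{X}_{t})$ and hence on the errors), a block decomposition of Bernstein type delivers a bound of the same shape, namely
\[
\mathbb{P}\!\left(\left|\frac{1}{n}\sum_{t=1}^{n}Z_{t}\right|>t\right)\leq 2\exp\!\left(-c\,n\min\!\left\{\frac{t^{2}}{\sigma^{2}},\ \frac{t}{b}\right\}\right),
\]
with constants $c,\sigma^{2},b>0$ uniform in $(j,\ell)$ and $n$. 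Taking $t=C\sqrt{\mathsf{ln}(p)/n}$, once $n$ is large enough that $\sqrt{\mathsf{ln}(p)/n}\leq \sigma^{2}/b$ the quadratic branch of the minimum is active, so the single-pair probability is at most $2\exp\!\big(-c\,C^{2}\sigma^{-2}\,\mathsf{ln}(p)\big)=2\,p^{-cC^{2}/\sigma^{2}}$. A union bound over the $p^{2}$ index pairs then yields
\[
\mathbb{P}\!\left(\max_{1\leq j\leq p}\ \max_{1\leq \ell\leq p}\left|\frac{1}{n}\sum_{t=1}^{n}Z_{t}\right|>C\sqrt{\frac{\mathsf{ln}(p)}{n}}\right)\leq 2\,p^{\,2-cC^{2}/\sigma^{2}},
\]
and choosing the universal constant $C$ large enough that $cC^{2}/\sigma^{2}\geq 4$ gives the stated $\mathcal{O}(p^{-2})$ rate.

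The main obstacle is the concentration step rather than the union bound: the products $u_{\ell,t}u_{j,t}$ are neither bounded nor independent across $t$, so Hoeffding does not apply directly and one needs either a sub-exponential Bernstein inequality combined with a mixing/blocking argument, or a martingale-difference reformulation with respect to the filtration generated by the errors together with a Freedman-type inequality. The two technical checks that need care are (i) the uniform $\psi_{1}$ bound on the products, obtained from Cauchy--Schwarz on the Orlicz norms together with the sub-Gaussian marginal assumption, and (ii) that the weak-dependence coefficients of the error process are inherited from those of $(\boldsymbol{Y}_{t},\boldsymbol{X}_{t})$ through the factor-model identities. Everything downstream of the single-pair bound is routine.
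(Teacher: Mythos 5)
The paper itself contains no proof of this lemma: it is stated as an imported result for the conventional factor-model setting (the framework of \cite{caner2022sharpe}), with the only primitive conditions on record being the sub-Gaussian tail bound of Assumption (A1) and the stationarity/ergodicity of $(\boldsymbol{Y}_t,\boldsymbol{X}_t)$ mentioned in the accompanying Remark. Your argument supplies precisely the standard route by which such bounds are established, and it is correct in its essentials: a uniform $\psi_2$ bound on the errors makes each product $u_{\ell,t}u_{j,t}$ sub-exponential with a $\psi_1$ norm bounded uniformly in $(j,\ell,t)$, a Bernstein-type inequality gives the single-pair exponential tail at level $C\sqrt{\ln(p)/n}$, and the union bound over the $p^{2}$ pairs together with a sufficiently large constant $C$ (so that the exponent dominates $p^{2}$, e.g. $cC^{2}/\sigma^{2}\geq 4$) yields the stated $\mathcal{O}(p^{-2})$ rate. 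Two points should be made explicit if this were written out in full. First, the quadratic branch of the Bernstein bound is only active under a rate condition of the form $\ln p = o(n)$ (equivalently $C\sqrt{\ln(p)/n}\leq \sigma^{2}/b$ eventually); this is consistent with the $p<n$ regime the paper works in, but it is an assumption, not a consequence. Second, stationarity and ergodicity alone, which is all the Remark records, do not deliver exponential concentration for time averages; the strengthening you flag (geometric mixing, a summable functional-dependence measure, or a martingale-difference structure for the products combined with a Freedman-type inequality) is genuinely required rather than optional, and your proposal is appropriately candid about this. With such a weak-dependence condition added, your proof closes the gap the paper leaves by citation.
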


%%-------------------------------------------------------------------------%%
\newpage 

\begin{example}
\label{example2}
Consider a given factor structure for stock returns such that 
\begin{align}
\boldsymbol{Y} = \boldsymbol{B} \boldsymbol{X}  + \boldsymbol{U}     
\end{align}
Then, the covariance matrix of $\boldsymbol{Y}$ is expressed as below
\begin{align}
\boldsymbol{\Sigma}_y =  \boldsymbol{B} \mathsf{Cov} \left( \boldsymbol{f}_t \right) \boldsymbol{B}^{\prime} + \boldsymbol{\Sigma}_n.     
\end{align}
while the precision matrix, defined by $\boldsymbol{\Gamma}_y := \boldsymbol{\Sigma}_y^{-1}$, such that  
\begin{align}
\boldsymbol{\Gamma}_y := \boldsymbol{\boldsymbol{\Omega}}_n - \boldsymbol{\boldsymbol{\Omega}}_n^{-1} \boldsymbol{B} \left[ \mathsf{Cov} \left( \boldsymbol{f}_t \right)^{-1} + \boldsymbol{B}^{\prime}   \right]     
\end{align}
where $\boldsymbol{\Omega}_n = \boldsymbol{\boldsymbol{\Sigma}}_n^{-1} \equiv \left\{ \mathbb{E} \left[ \boldsymbol{u}_t \boldsymbol{u}_t^{\prime} \right] \right\}^{-1}$.
\end{example}

Notice that Example \ref{example2} illustrates the decomposition of the precision matrix when the vector of stock returns have a factor structure representation. On the other hand, in this article we assume that the there exists an underline network structure that describes the stochastic behaviour and comovements of these nodes in the graph. Some key points are summarized below:

\begin{itemize}

\item Our proposed risk matrix is equivalent to the covariance matrix $\boldsymbol{\Sigma}_y$, after fitting the nodewise quantile predictive regression models in the system, while $\boldsymbol{\Omega} := \left\{ \mathbb{E} \left[ \boldsymbol{u}_t \boldsymbol{u}_t^{\prime} \right] \right\}^{-1}$, i.e., the precision matrix corresponds to the inverse of the covariance matrix of errors. However, the advantage of our approach is that we focus directly on the inversion of this tail risk matrix by assuming that $\boldsymbol{\Gamma}_{j_1 j_2 }^{-1} = \frac{ \omega_{j_1 j_2} }{ \omega_{j_1 j_1}  \omega_{j_2 j_2 }  }$. A novelty in the construction of this risk matrix is that the estimation of these risk measures is conditioned in the presence of covariates (i.e., stationary or nonstationary regressors) which is more informative than based solely on conditional distribution functionals. 

\item  In terms of the relevant testing methodology this involves finding statistical significant blocks based on the tail interdependence can be interpreted as revealing blocks which are significantly important using a form of pooling these extreme events together based on the risk measures of $\mathsf{VaR}$ and $\mathsf{CoVaR}$.  Although, we do not study this here, the role of the degree of persistence of these covariates included in the models, is an interesting future research question. 
    
\end{itemize}

\subsection{Related Literature.}

Extreme events (such as financial crises) are commonly employed for causal discovery since the relationships between variables may manifest themselves especially in the largest observations. In particular, a modelling methodology based on extreme value theory is presented in the study of \cite{gnecco2021causal}\footnote{In the particular framework the authors use $X_k := \sum_{ k \in pa (j, G) } \beta_{jk} X_k + \epsilon_j, \ j \in V$ under the assumption that the noise sequence $\epsilon_j, j \in V$, are heavy-tailed with extreme value index $\gamma > 0$ such that $P ( \epsilon_j > x  ) \sim \ \ell(x)x^{-2/ \gamma}, \ \ x \to \infty$.}. A discussion about tail dependence in graphs is provided in the paper of \cite{engelke2020graphical} (see also discussion in  \cite{dombry2016asymptotic}). On the other hand, our regression-based risk matrix takes into utilized tail estimates rather than relying on fitting a parametric model to heavy-tailed data, which implies that the corresponding matrix estimator is robust to the presence of potential outliers in financial and macroeconomic data.

%%-------------------------------------------------------------------------%%
\newpage 

The estimation of covariance matrices is widely used in various applications such as optimal portfolio choice problems and other statistical problems such as multivariate analysis, principal components analysis and graphical modeling among others. The commonly used assumption for the estimation of large covariance matrices is the assumption of multivariate normality. For instance, \cite{brown1975techniques} and \cite{browne1984asymptotically} proposed statistical methodologies for the analysis of covariance matrices such as asymptotically distribution-free methods. In particular, the majority of estimators of parameters in structural models for covariance matrices are members of the class of generalized least squares (GLS) estimators. Specifically, this type of estimators include the maximum Wishart likelihood estimators (MWL) in the sense that a generalized least squares discrepancy function can be constructed which will in general attain its minimum at the point defined by the MWL estimator (see, \cite{browne1984asymptotically}). A recent study who consider the MLE estimation of covariance type matrices when both rows and columns can be correlated is presented by \cite{drton2021existence}.     

The statistical literature on methodologies for the robust estimation of large covariance matrices has been thrived the last decades. In particular, assumptions such as the Wishart distribution for the unknown covariance matrix allows to implement asymptotically distribution-free methods for the robust estimation. However, features such as heavy-tailed time series, temporal dependence or even the presence of structural instability in time-series affects the robustness of the estimation procedure. For instance, \cite{dendramis2021estimation} accommodate the time variation, dependence and heavy-tailedness of distributions with implementation of a time-varying covariance matrix estimation procedure that includes thresholding. Therefore, the novelty of our estimation procedure is the use of quantile predictive regression models for capturing these features, which implies a regression-based tail risk matrix. Moreover, the particular proposition allows to take into consideration the time series properties via the estimation methodology rather than implementing ex-ante corrections to the estimation procedure which implies implementing bias corrections for the possible effect of the particular features to the robustness and degree of accuracy of the large covariance matrix. 

A different stream of literature investigates the use of centrality measures (see, a discussion in \cite{aamari2021graph}) for the development of graph-based modelling methodologies. Our approach is different than portfolio sorting procedures such as the studies of \cite{mcgee2020optimal} and \cite{ledoit2018efficient} who 
propose methodologies for sorting portfolios based on variable characteristics and nonlinear time series models respectively, however our motivation is driven by the construction  of an endogenous to the system sorting mechanism. Furthermore, the aspect of bridging centrality and extremity is examined in the study of \cite{einmahl2015bridging}. In terms of rank centrality measures a related approach is presented in the study of \cite{negahban2017rank}, while a formal asymptotic theory framework based on the random matrix theory is given by \cite{Li2021central}. Furthermore, \cite{negahban2017rank} propose a rank centrality measure which is an iterative rank aggregation algorithm for discovering scores for objects (or items) from pairwise comparisons in a statistical sense. 

A third line of literature which is related to our study is the use of the concept of \textit{conditional independence}. Specifically, a novel conditional independence measure is proposed by \cite{azadkia2021simple} which corresponds to the case of conditional mean preserving measures. Lastly, although we mainly focus with the aspects of statistical estimation, relevant inference procedures include testing for misspecification of the underline covariance structure (see, \cite{guo2021specification} and \cite{chang2022testing}).  Further limit theorems on covariance structures are given in the recent study of \cite{liu2021robust} while \cite{chang2021central} establish central limit theorems for high-dimensional dependent data.

%%-------------------------------------------------------------------------%%
\newpage

In summary, our proposed modelling approach focuses on two key aspects: \textit{(i)} the construction of the risk matrix based on tail dependence measures and \textit{(ii)} the implementation of the feature selection algorithm. Although our feature selection algorithm is applied to the tail risk matrix, we conjecture that this estimation procedure can be also used in alternative structures of covariance matrices. 

%A different stream of literature investigates related limit results to statistical inference under network dependence, which is beyond the scope of the current study\footnote{A relevant discussion to the challenging problem of conducting valid inference in modelling settings under network dependence is presented by \cite{lee2021network}.}. 

%\paragraph{Structure.} The paper is organized as follows. Section \ref{Section2} first introduces the regression-based tail risk matrix and then establishes our main result on the rate of convergence of the regression-based matrix estimator. Section \ref{Section3} presents our novel centrality measures based on the proposed regression-based matrix.

%\paragraph{Notation.} Throughout this paper, we use bold capital letters, for example $\boldsymbol{A}$ to represent matrices and bold letters in lowercase, such that $\boldsymbol{a}$, to denote vectors.  

%%-------------------------------------------------------------------------%%
%\newpage 

\section{The regression-based tail risk matrix.}
\label{Section2}

Specifically, the construction of the proposed risk matrix captures causality\footnote{A discussion related to statistical causality can be found in \cite{schenone2018causality}.} for extreme values based on nodewise quantile predictive regression models which capture the quantile behaviour of the underline distributions. Our novel risk matrix is constructed using nodewise regression models in the sense that the elements of the risk matrix are estimated using pairwise quantile regression models.

\subsection{Modelling Environment.}

 Given the singular value decomposition $\boldsymbol{\Gamma} = \boldsymbol{U} \boldsymbol{D} \boldsymbol{V}^{\prime}$. Moreover, denote with $\big\{ \boldsymbol{\Gamma}_t \big\}_{t=0}^T$ be the sequence obtained by the iteration of Algorithm 1. 

\bigskip

\begin{remark}(Initialization and the Stopping Rule) We suggest to initialize the algorithm with $\boldsymbol{\Gamma}_0$ in Algorithm 1, but because the optimization problem is convex, this can be replaced by any matrix. Furthermore, the algorithm converges faster when it is initialized with a matrix that is close to the minimizer. Furthermore, we suggest to stop the algorithm at iteration $T$ satisfying the condition $\big| F( \boldsymbol{\Gamma}_{T+1} ) - F( \boldsymbol{\Gamma}_{T} ) \big| \leq \epsilon$, for some small $\epsilon > 0$. 

\end{remark}

\medskip

Furthermore, in this section we derive some useful oracle inequalities such as provide suitable error bounds for the difference between the sequence of matrices $\boldsymbol{\Gamma}_n$ generated by Algorithm 1 and the true matrix $\boldsymbol{\Gamma}$. These results are determined by the strong convexity of the optimization function $\rho_{\uptau}$. 

\begin{assumption}
We impose the following conditions. 
 
\begin{itemize}

\item[\textbf{(A1)}] There exists $C > 0$ such that for $u_{ij} = Y_{ij} - \boldsymbol{X}_{i}^{\prime} \boldsymbol{\Gamma}_{.j}$ such that it holds that
\begin{align}
\mathbb{P} \big( | u_{ij} | > s \big) \leq \mathsf{exp} \left(  1 - \frac{s^2}{C^2} \right), \forall \ s \geq 0
\end{align}
with sub-gaussian norm given by 
\begin{align}
\norm{ u_{ij} }_{ \psi_2 } \overset{ \mathsf{def} }{ = } \underset{ p \geq 1 }{ \mathsf{sup} } \ p^{-1/2} \big( \mathbb{E} | u_{ij} |^p \big)^{1 / p} \ \ \text{and} \ \ K_u \overset{ \mathsf{def} }{ = } \underset{ 1 \leq j \leq m }{ \mathsf{max} } \ \norm{ u_{ij} }_{ \psi_2 }. 
\end{align}

\item[\textbf{(A2)}] Conditional on $\boldsymbol{X}_i$, $Y_{ij}$ are independent over $j$. 

\end{itemize} 
\end{assumption}

Notice that Assumption A3 is required in order to obtain the bounds on the tail probabilities of the estimation error. Furthermore, we consider the non-asymptotic bound for $\norm{ \boldsymbol{\Gamma}_t - \boldsymbol{\Gamma} }_F$ in the general situation that the number of factors can be increasing with $n$.

%%-------------------------------------------------------------------------%%
\newpage

\subsection{Estimation Methodology.}

Our risk matrix is employed to model tail causality for the nodewise regression models. In particular, these nodewise regressions correspond to quantile predictive regression models as defined by \cite{adrian2016covar} which are used to obtain estimates for the risk measures of VaR and CoVaR. Consider that we have the vector $\underline{Y} = ( Y_{1,t},..., Y_{p,t} )$ such that $i \in \left\{ 1,..., p \right\}$ and $j \in \left\{ 1,..., p \right\}$.  Consider the single index model for estimating the $\mathsf{CoVaR}$ and $\mathsf{VaR}$ risk measures
\begin{align}
Y_{i,t} &= \mu_i + \gamma_i  X_{t-1} + u_{i,t}
\\
Y_{j,t} &= \mu_{j|i} + \gamma_{j|i} X_{t-1} + \beta_{j|i} Y_{i,t} + u_{j|i, t}
\end{align}

% Therefore, by operating within the framework of moderate and large deviations principles we ensure that the estimation procedure is versatile enough to accommodate features such as nonstationarity in the time series properties of regressors. The second aspect of interest are the risk measures employed for the construction of the particular risk matrix. 

% Thus, our method can be examined from the estimation perspective which implies that we can be estimated with the use of these models which are wonderful models and have very nice properties.

% The inteconnectedness in the network is assumed to be driven by tail events based on a cross-section of firms that are linked with the pairwise quantile predictive regression models (see, \cite{adrian2016covar} and \cite{lee2016predictive}). 

% Moreover, we could define the particular matrix for a general structure of regression models not necessarily the quantile predictive regression model. This way we can focus on the proposed feature sorting algorithm. 

% In particular, we operate under the assumption of stationary time series data such that the pair $\left\{ y_t, x_{t-1} \right\}_{t=1}^n$ is stationary and ergodic where $1 \leq t \leq n$. Furthermore, notice that our framework we operate under the assumption that the data observations can be independent. In this sense we exclude cases such as spurious associations due to network dependence. More specifically, we explore the presence of network dependence due to the endogenous formation of a network modelled with the pairwise quantile predictive regression models. 

The given specification is motivated from the literature of financial connectedness and tail risk and in particular is based on the seminal works of \cite{adrian2016covar} and \cite{hardle2016tenet} who examine statistical estimation methodologies and their properties. We firstly consider the following covariance-type structure  
\begin{align}
  \mathbf{\widetilde{\Sigma}}_{ij} =  \begin{bmatrix}
    \text{VaR}_{1|1} &  \text{CoVaR}_{1|2}  & \dots \text{CoVaR}_{1|N} \\
    \vdots & \ddots & \vdots  \\
    \text{CoVaR}_{N|1} &  \dots & \text{VaR}_{N|N} \\
\end{bmatrix}
\end{align}
where  $\mathbf{\widetilde{\Sigma}}_{ij} \in \mathbb{R}^{ N \times N }$ represents the financial connectedness matrix and provides a robust representation of idiosyncratic and systemic risk within the financial network.  Our proposed risk matrix belongs to the family of dynamic high dimensional covariance matrices and captures network tail risk dependence, since is based on tail risk measures. We estimate the time-varying VaRs and CoVaRs conditional on a vector of lagged state variables $\boldsymbol{X}_{t-1}$. Both VaR and CoVaR risk measures represent quantiles of the distribution of returns under different conditioning sets; hence, to obtain one-period ahead predicted values for these risk quantities we employ the quantile regression specifications proposed by the seminal study of \cite{Koenker1978regression} (see, also \cite{KoenkerXiao02}). The construction of one-period ahead forecasts for the dynamic portfolio allocation, based on a graph representation, consists one of the main contributions of the empirical application of the paper. 

%The particular estimation procedure is employed for each of the pairwise predictive regressions, to obtain the elements of the VaR-$\Delta$CoVaR matrix. Next, we examine  the model fit and specification of the proposed risk matrix $\widetilde{ \mathbf{\Gamma} }$. 

Consider the conditional quantile function estimated via $Q_{y_i} \left( \tau | x_i \right) = F_{y_i}^{-1} \left( \tau | x_i \right)$. Then, the optimization function to obtain the model estimates is expressed as below
\begin{align}
\label{quantile.estimation}
Q_{\tau} \left( y_i | x_i \right) = \underset{ q(x)}{\text{arg min}} \ \mathbb{E} \big[ \rho_{\tau} \left( y_i - q(x_i) \right) \big]
\end{align}
where $\tau \in (0,1)$ is a specific quantile level, and $\rho_{\tau}( u ) = u \left( \tau - \mathbf{1}_{ \left\{ u < 0 \right\} } \right)$ is the check function (see, \cite{newey1987asymmetric}). Suppose $1 \leq t \leq T$, then based on the  estimation method given by \eqref{quantile.estimation} the VaR and CoVaR are generated by the following econometric specifications 
\begin{align}
R_{(i),t} &= \nu_{(i)}  +  \boldsymbol{\xi}_{(i)}^{\prime} \boldsymbol{X}_{t-1} + u_{(i),t} \label{VaR} \\
R_{(j),t} &= \nu_{(j)} + \boldsymbol{\xi}_{(j)}^{\prime} \boldsymbol{X}_{t-1} + \beta_{(j)} R_{(i),t}  + v_{(j),t} \label{CoVaR}
\end{align} 
where $\{ R_{(i),t} \}$ for $i \in \left\{ 1,...,N \right\}$ is the vector of portfolio returns at time $t$, and $\boldsymbol{X}_{t-1}$ is a vector of exogenous regressors containing macroeconomic characteristics common across assets and denote with $\Theta = \{ \nu_{(i)}, \boldsymbol{\xi}_{(i)}, \nu_{(j)}, \boldsymbol{ \xi }_{(j)}, \beta_{(j)} \}$ to be a compact parameter space.

%%-------------------------------------------------------------------------%%
\newpage 

Let $Q_{\tau} \left( \cdot\ | \ \mathcal{F}_{t-1} \right)$ denote the quantile operator for $\tau \in (0,1)$ conditional on an information set $\mathcal{F}_{t-1}$. Then, the error terms satisfy $Q_{\tau} \left( u_{(i),t} \ | \ \boldsymbol{X}_{t-1} \right)=0$ for the quantile regression model \eqref{VaR}, and $Q_{\tau} \left( v_{(j),t} \ | \ R_{(j),t}, \boldsymbol{X}_{t-1} \right) = 0$ for the quantile regression model given by expression \eqref{CoVaR}. Moreover, denote with $\boldsymbol{\theta}_{(1)} = \left( \nu_{(i)}, \boldsymbol{\xi}_{(i)}^{\prime} \right)$ and $\widetilde{\boldsymbol{X} }_{t-1}^{(1)} = \left( 1, \boldsymbol{X}_{t-1}^{\prime} \right)^{\prime}$ for model \eqref{VaR}, and with $\boldsymbol{\theta}_{(2)} = \left( \nu_{(j)}, \boldsymbol{\xi}_{(j)}^{\prime}, \beta_{(j)} \right)$ and  $\widetilde{\boldsymbol{X}}_{t-1}^{(2)} = \left( 1, \boldsymbol{X}_{t-1}^{\prime}, R_{(i),t} \right)^{\prime}$ for model \eqref{CoVaR}. 

Then, the model estimate 
\begin{align}
\widehat{ \boldsymbol{\theta} }_{(1)} \left( \tau  \right) = \underset{ \boldsymbol{\theta}_{(1)} \in \mathbb{R}^{p+2} }{  \text{arg min} } \sum_{t=1}^{T} \rho_{\tau} \left( R_{(i),t} - \boldsymbol{\theta}_{(1)}^{\prime} \widetilde{\boldsymbol{X} }_{t-1}^{(1)}  \right)
\end{align}
Similarly, the quantile estimator of \eqref{CoVaR} is obtained via the optimization function below 
\begin{align}
\widehat{ \boldsymbol{\theta} }_{(2)} \left( \tau  \right) = \underset{ \boldsymbol{\theta}_{(2) } \in \mathbb{R}^{p+1} }{  \text{arg min} } \sum_{t=1}^{T} \rho_{\tau} \left( R_{(j),t} - \boldsymbol{\theta}_{(2)}^{\prime} \widetilde{\boldsymbol{X} }_{t-1}^{(2)}  \right)
\end{align}
The state variables, that is, the macroeconomic and financial variables, included in the econometric model, capture time variation in the conditional moments of asset returns. Therefore, the $\mathsf{VaR}$ and $\mathsf{CoVaR}$ of each financial institution are estimated as vectors from a time-varying distribution. 

This approach allows to explain the optimal portfolio allocation in terms of changes in the systemic risk and financial connectedness in the network. To implement this, we use a rolling window, within which the quantile regressions given by  \eqref{VaR} and \eqref{CoVaR} are fitted and the $\boldsymbol{\Gamma}$ risk matrix is constructed based on one-period ahead forecasts. Specifically, the one-period ahead $\mathsf{VaR}_{t+1}$ for a coverage probability $\tau \in (0,1)$ is obtained by regressing  $R_{(i),t}$ on $\boldsymbol{X}_{t-1}$ as given by the first step regression (\ref{VaR}) which gives the parameter estimates $\widehat{\nu}_{(i)}$ and $\widehat{ \boldsymbol{\xi} }_{(i)}$. The forecasted one-period ahead VaR is computed as below 
\begin{align} \label{equ:VaR}
\widehat{ \mathsf{VaR} }_{(i),t+1} ( \tau ) = \widehat{\nu}_{(i)} + \widehat{ \boldsymbol{\xi}}^{\prime}_{(i)} \boldsymbol{X}_{t}.  
\end{align}

Similarly, the one-period ahead forecast of $\mathsf{CoVaR}_{t+1}$ is obtained by regressing $R_{(j),t}$ on $\boldsymbol{X}_{t-1}$ and $R_{(i),t}$ as shown in the second step regression (\ref{CoVaR}) using the quantile estimation as defined by \eqref{quantile.estimation}. To do this, we collect the parameter estimates $\widehat{\nu}_{(j)}, \widehat{\boldsymbol{\xi} }_{(j)}, \widehat{\beta}_{(j)}$ from the second regression, and construct the forecasted one-period ahead $\mathsf{CoVaR}_{t+1}$ measure using the following expression: 
\begin{align}
\label{equ:CoVaR}
\widehat{ \mathsf{CoVaR} }_{(j),t+1} ( \tau ) = \widehat{\nu}_{(j)} + \widehat{\mathbf{ \xi  }}_{(j)}  \mathbf{M}_{t} +  \widehat{\beta}_{(j)} \widehat{\mathsf{VaR}}_{i,t+1} ( \tau ),
\end{align}
where $\widehat{\mathsf{VaR}}_{i,t+1}( \tau )$ replaces the actual $\mathsf{VaR}_{i,t+1} ( \tau )$ measure.
Then, the $\mathsf{\Delta \text{CoVaR}}$ measure is computed as the difference of the two risk measures
\begin{equation}
\Delta \mathsf{CoVaR} = \widehat{\mathsf{CoVaR}}_{(j),t+1} ( \tau ) - \widehat{\mathsf{VaR}}_{(j),t+1 } ( \tau ),
\end{equation}
An important aspect of our proposed estimation methodology is the assumption of a graph representation with nodes being the stock returns and edges being pairwise tail forecasts, which can be thought as the level of simultaneous Granger causality due to the existence of tail and graph dependence. Therefore, these tail forecasts are estimated using the aforementioned predictive regression models based on the bipartite graph structure, which implies an estimation implementation for each pair $(i,j)$ of nodes for the graph $\mathcal{G} = (\mathcal{V}, \mathcal{E})$.

%%-------------------------------------------------------------------------%%
\newpage 

\subsection{Main Results}

\subsubsection{Asymptotic theory for the VaR-$\Delta$CoVaR risk matrix}

In this section, we derive results for establishing the asymptotic theory for the estimator of the proposed risk matrix that correspond to the population risk measures. Define with $\mathtt{Upper} ( \mathbf{\Gamma} ) \in \mathbb{R}^{ N \times N}$ the upper triangular part of the matrix, excluding the diagonal entries and with $\mathtt{Lower} ( \mathbf{\Gamma} ) \in \mathbb{R}^{ N \times N}$ the lower triangular part of the matrix, excluding the diagonal entries. Then, the risk matrix $\mathbf{\Gamma} \in \mathbb{R}^{ N \times N}$ is given by the following expression: 
\begin{small}
\begin{align*} 
\mathbf{\Gamma}  
:=  
\begin{bmatrix}
\mathsf{VaR}^{+}_{1}  &  (\mathsf{VaR}^{+}_{1} \mathsf{VaR}^{+}_{2})^{1/2} \Delta \mathsf{CoVaR}_{(1,2)} & \dots (\mathsf{VaR}^{+}_{1} \mathsf{VaR}^{+}_{N})^{1/2} \Delta \mathsf{CoVaR}_{(1,N)} \\
\vdots       &  \vdots        & \vdots             \\
\vdots       &  \ddots        & \vdots              \\
(\mathsf{VaR}^{+}_{N} \mathsf{VaR}^{+}_{1})^{1/2} \Delta \mathsf{CoVaR}_{(1,N)} &  \dots         & \mathsf{VaR}^{+}_{N}       \\
\end{bmatrix}
\end{align*}
\end{small}
where the off-diagonal elements consist of the term:
\begin{align}
\Delta \mathsf{CoVaR}_{(i,j)} = \mathsf{CoVaR}_{(i,j)} - \mathsf{VaR}^{+}_{i}
\end{align}
%Therefore the proposed risk matrix, $\mathbf{\Gamma} \in \mathbb{R}^{ N \times N}$ can be decomposed as below: 
%\begin{align}
%\mathbf{\Gamma}_{ N \times N } \equiv \left\{ \mathtt{Lower} ( \mathbf{\Gamma} ) + \mathsf{diag} \left( \mathsf{VaR}^{+}_{1},..., \mathsf{VaR}^{+}_{N} \right) + \mathtt{Upper} ( \mathbf{\Gamma} ) \right\}
%\end{align}
%where the risk matrix $\mathbf{\Gamma}_{ N \times N } $ is assumed to be positive-definite.

Since the main diagonal of the risk matrix  includes only the VaR tail measure which correspond to the underline node specific distribution, then we suppose that is generated by a common cumulative distribution function $F(.)$. An important condition for our risk matrix to be utilized within the optimal portfolio problem is to ensure that it converges in probability to a positive definite matrix for large sample size, that is, in our case, as the number of nodes in the network grows to infinity, i.e., $N \to \infty$.    

\medskip

Then, we can observe that the proposed covariance structure has a similar decomposition to the covariance matrix as below:   
\begin{small}
\begin{align*}
\mathbf{\Gamma} \equiv 
\begin{pmatrix}
\sqrt{ \mathsf{VaR}^{+}_{1} } & & & 
\\
& \sqrt{ \mathsf{VaR}^{+}_{2} }
\\
& & \ddots
\\
& & & \sqrt{ \mathsf{VaR}^{+}_{N} }
\end{pmatrix}
\begin{pmatrix}
1 & \Delta \mathsf{CoVaR}_{1,2}  & & 
\\
\Delta \mathsf{CoVaR}_{2,1} & 1
\\
& & \ddots
\\
& & & 1
\end{pmatrix}
\begin{pmatrix}
\sqrt{ \mathsf{VaR}^{+}_{1} } & & & 
\\
& \sqrt{ \mathsf{VaR}^{+}_{2} }
\\
& & \ddots
\\
& & & \sqrt{ \mathsf{VaR}^{+}_{N} }
\end{pmatrix}
\end{align*}
\end{small}
Moreover, we are interested in estimating the precision matrix of the $\mathsf{VaR}-\Delta \mathsf{CoVaR}$ risk matrix, $\boldsymbol{\Omega} := \boldsymbol{\Gamma}^{-1} \equiv \displaystyle \frac{ \omega_{j_1, j_2} }{ \omega_{j_1,j_1} \omega_{j_2,j_2} }$ where $\omega_{j_1, j_2}$ correspond to the elements of the precision matrix of the $\mathsf{VaR}-\Delta \mathsf{CoVaR}$ matrix for some $(j_1, j_2) \in \left\{ 1,..., N \right\}$. 

The asymptotic behaviour of the risk matrix can be determined by considering the limiting distribution of the expression $
\sqrt{N} \left( \widehat{ \mathbf{\Gamma} } - \mathbf{\Gamma}_0 \right)$. Denote with 
\begin{align}
\mathcal{A}_1 &:= \sqrt{N} \left(  \mathtt{Lower} ( \widehat{ \mathbf{\Gamma} } ) - \mathtt{Lower} ( \mathbf{\Gamma} )  \right)
\\
\mathcal{A}_2 &:= \sqrt{N} \left(  \mathtt{Upper} ( \widehat{ \mathbf{\Gamma} } ) - \mathtt{Upper} ( \mathbf{\Gamma} )  \right)
\\
\mathcal{A}_3 &:= \sqrt{N} \left\{  \text{diag} \left( \widehat{\mathsf{VaR}}^{+}_{1},..., \widehat{ \mathsf{VaR} }^{+}_{N}  \right) - \text{diag} \left( \mathsf{VaR}^{+}_{1},..., \mathsf{VaR} ^{+}_{N} \right)  \right\}
\end{align}
The term $\mathcal{A}_3$ converges to a Gaussian random variable, with a non-stochastic covariance matrix under the assumption of stationary and ergodic regressors in the quantile predicitve regression model. 

%%-------------------------------------------------------------------------%%
\newpage 

%Consider a vector $\boldsymbol{B}$ such that
%\begin{align}
%\boldsymbol{B} = \left( \frac{ \lambda_1 }{ 1 - \rho_1 },..., \frac{ \lambda_m }{ 1 - \rho_m } \right)^{\top}    
%\end{align}
%and
%\begin{align}
%\boldsymbol{V}_{j, \ell} 
%= 
%\begin{cases}
%\gamma_j^2 & \ \text{if} \ j = \ell,
%\\
%\gamma_j \gamma_{\ell} & \ \text{if} \ j = \ell 
%\end{cases}

%Then, it holds that 
%\begin{align}
%\bigg( \sqrt{k_1} \left( \widehat{\gamma}_1 ( k_1 ) - \gamma_1  \right),..., \sqrt{k_m} \left( \widehat{\gamma}_m ( k_m ) - \gamma_m  \right)  \bigg)^{\top} \overset{d}{\to} \mathcal{N} \big( \boldsymbol{B}, \boldsymbol{V} \big).   
%\end{align}
%In particular, if $\boldsymbol{\gamma} = \gamma \boldsymbol{1}$, then it holds that 
%\begin{align}
%\sqrt{k} \left( \widehat{\gamma}_n ( \omega ) - \gamma \right) \overset{d}{\to} \mathcal{N} \big(  \boldsymbol{\omega}^{\top} \boldsymbol{B}_{ \boldsymbol{c} }, \boldsymbol{\omega}^{\top} \boldsymbol{V}_{ \boldsymbol{c} } \boldsymbol{\omega} \big),    
%\end{align}
%with 
%\begin{align}
%\boldsymbol{B}_{ \boldsymbol{c} } = \sqrt{ \sum_{i=1}^m c_i^{-1} } \left(  \sqrt{ c_1 } \frac{ \lambda_1 }{ 1 - \rho_1 },..., \sqrt{ c_m }  \frac{ \lambda_m }{ 1 - \rho_m } \right)^{\top}     
%\end{align}

\medskip

\begin{example}
As an illustration of the required derivations to obtain asymptotic theory results, consider the case in which the graph matrix depends on the unknown error terms of $\boldsymbol{\Gamma} := [u_{ij}]$. We follow similar derivations as in the framework of \cite{guo2021specification}. Define with $\mathcal{G}_1 := \big\{ i : (i,j) \in \mathcal{G}, \ \text{for some} \ j \leq p \big\}$. For any $(k \times k)$ matrix $\boldsymbol{A}$, denote by $\lambda_i (\boldsymbol{A})$ the $i-$th largest eigenvalue of $\boldsymbol{A}$ for $i = 1,..., k$. Then, some useful results are given below. 
\end{example}

\medskip

\begin{lemma}
Under the assumptions of the Theorem, there exists $\zeta_1 > 0$ and $\zeta_2 > 0$, such that
\begin{align}
\mathbb{P} \left(  \underset{ (i,j) \in \mathcal{G} }{ \mathsf{max} } \ T^{1/2} \left|  \frac{1}{T} \sum_{t=1}^T \hat{u}_{it} \hat{u}_{jt} - \frac{1}{T} \sum_{t=1}^T u_{it} u_{jt} \right| \geq \zeta_1 \right) < \zeta_2   
\end{align}
where $\zeta_1 = o \big( ( \mathsf{log}(p) )^{-1/2} \big)$ and $\zeta_1 = o(1)$.
\end{lemma}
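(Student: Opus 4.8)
The plan is to expand the product of estimated residuals around the product of true residuals, bound the resulting error terms by a combination of the nodewise quantile estimation rate and exponential concentration inequalities, and then pass to the maximum over $\mathcal{G}$ by a union bound. Write $\boldsymbol{\delta}_{i} := \widehat{\boldsymbol{\theta}}_{(i)} - \boldsymbol{\theta}_{(i)}$ for the estimation error of the $i$-th nodewise quantile regression, so that $\hat u_{it} - u_{it} = - \widetilde{\boldsymbol{X}}_{t-1}^{\prime}\boldsymbol{\delta}_{i}$ for each $t$, and analogously for $j$. The identity
\begin{align*}
\hat u_{it}\hat u_{jt} - u_{it}u_{jt} = (\hat u_{it} - u_{it})\,u_{jt} + u_{it}\,(\hat u_{jt} - u_{jt}) + (\hat u_{it} - u_{it})(\hat u_{jt} - u_{jt})
\end{align*}
gives, after averaging over $t = 1,\dots,T$,
\begin{align*}
\frac{1}{T}\sum_{t=1}^{T}\big(\hat u_{it}\hat u_{jt} - u_{it}u_{jt}\big) = - \boldsymbol{\delta}_{i}^{\prime}\mathbf{a}_{j} - \boldsymbol{\delta}_{j}^{\prime}\mathbf{a}_{i} + \boldsymbol{\delta}_{i}^{\prime}\widehat{\mathbf{S}}_{T}\boldsymbol{\delta}_{j},
\end{align*}
with $\mathbf{a}_{j} := \tfrac{1}{T}\sum_{t}\widetilde{\boldsymbol{X}}_{t-1}u_{jt}$ and $\widehat{\mathbf{S}}_{T} := \tfrac{1}{T}\sum_{t}\widetilde{\boldsymbol{X}}_{t-1}\widetilde{\boldsymbol{X}}_{t-1}^{\prime}$.

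First I would treat the two bilinear cross terms with H\"older's inequality, $|\boldsymbol{\delta}_{i}^{\prime}\mathbf{a}_{j}| \le \norm{\boldsymbol{\delta}_{i}}_{1}\norm{\mathbf{a}_{j}}_{\infty}$. The factor $\norm{\boldsymbol{\delta}_{i}}_{1}$ is the $\ell_{1}$-estimation rate of the nodewise quantile estimators, of order $r_T := \bar m\,\ell_n$ in the notation quoted earlier from the Theorem, taken to hold simultaneously over all $i$ on an event of probability at least $1 - \zeta_2/2$. For the factor $\norm{\mathbf{a}_{j}}_{\infty}$, under Assumption (A1)--(A2) the coordinates of $\widetilde{\boldsymbol{X}}_{t-1}u_{jt}$ are conditionally mean-zero and sub-exponential (products of sub-Gaussians), so a Bernstein-type maximal inequality with a union bound over the at most $p^{2}$ pairs and the $p$ coordinates gives $\max_{j}\norm{\mathbf{a}_{j}}_{\infty} = \mathcal{O}_p\big(\sqrt{\log p / T}\big)$ in the regime $\log p = o(T)$. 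For the quadratic remainder I would use $|\boldsymbol{\delta}_{i}^{\prime}\widehat{\mathbf{S}}_{T}\boldsymbol{\delta}_{j}| \le \norm{\boldsymbol{\delta}_{i}}_{2}\norm{\boldsymbol{\delta}_{j}}_{2}\,\lambda_{1}(\widehat{\mathbf{S}}_{T})$ together with the $\ell_2$-rate for the $\boldsymbol{\delta}$'s and a concentration bound showing $\lambda_{1}(\widehat{\mathbf{S}}_{T})$ is bounded with high probability.

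Collecting the three estimates, on an event of probability at least $1 - \zeta_2$,
\begin{align*}
\underset{(i,j) \in \mathcal{G}}{\mathsf{max}}\ T^{1/2}\left|\frac{1}{T}\sum_{t=1}^{T}\hat u_{it}\hat u_{jt} - \frac{1}{T}\sum_{t=1}^{T}u_{it}u_{jt}\right| \ \lesssim\ T^{1/2}\Big( r_T\sqrt{\tfrac{\log p}{T}} + r_T^{2}\Big) = r_T\sqrt{\log p} + T^{1/2}r_T^{2},
\end{align*}
and under the sparsity and rate conditions inherited from the Theorem both summands on the right are $o\big((\log p)^{-1/2}\big)$, which identifies $\zeta_1$; $\zeta_2$ is made arbitrarily small by inflating the Bernstein constants.

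The main obstacle is not the decomposition but supplying the uniform rates $\norm{\boldsymbol{\delta}_{i}}_{1}$, $\norm{\boldsymbol{\delta}_{i}}_{2}$ for the quantile estimators: since the check loss $\rho_{\tau}$ is non-smooth, establishing $r_T$ requires an empirical-process argument --- a restricted identifiability/strong-convexity condition for the expected quantile loss near $\boldsymbol{\theta}_{(i)}$ and a maximal inequality for the subgradient of the quantile objective, uniform in $i$ --- rather than a Taylor expansion. A secondary point is that $\widetilde{\boldsymbol{X}}_{t-1}u_{jt}$ is only sub-exponential, so the concentration step needs the two-term Bernstein rate $\sqrt{\log p/T} + \log p/T$ and one must check the second term is dominated; and if $\boldsymbol{X}_{t-1}$ is only stationary and ergodic rather than i.i.d., a mixing-type Bernstein inequality is required there.
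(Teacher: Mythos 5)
Your opening move coincides with the paper's: the same three-term identity for $\hat u_{it}\hat u_{jt}-u_{it}u_{jt}$ is the first display of the paper's proof. After that the routes genuinely diverge. The paper immediately declares that it suffices to control the quadratic term $\sum_t(\hat u_{it}-u_{it})(\hat u_{jt}-u_{jt})$, factors it by Cauchy--Schwarz into nodewise sums of squared fit errors, writes $\hat u_{it}-u_{it}=(b_i'-\hat b_i')z_t$ as a linear-model fit error, and then works inside the smooth parametric framework of \cite{guo2021specification}: the covariance entries are $\phi_{ij}(\theta)$, a second-order Taylor expansion bounds the remainder by $\norm{\hat\theta-\theta_0}_2^2=\mathcal{O}_p(\log p/T)$, and the first-order condition of the fitting criterion yields a Bahadur-type linearization $\hat\theta-\theta_0$ through $M_{ij}^*$. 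You instead keep all three terms, control the two linear cross terms by the $\ell_1$--$\ell_\infty$ H\"older bound $\abs{\boldsymbol{\delta}_i'\mathbf{a}_j}\le\norm{\boldsymbol{\delta}_i}_1\norm{\mathbf{a}_j}_\infty$ with a Bernstein-type maximal inequality and union bound, and the quadratic term by an operator-norm bound on the empirical Gram matrix, importing uniform $\ell_1/\ell_2$ rates for the nodewise estimators. This is the standard nodewise-regression (Callot/Caner-style) route; it is more self-contained in one respect --- you actually bound the two cross terms $u_{it}(\hat u_{jt}-u_{jt})$, which the paper's "it suffices to show" dismisses without argument --- and it avoids the smooth parametrization entirely, at the cost of needing the non-trivial empirical-process work for the check-loss estimators that the paper sidesteps via Taylor expansion of $\phi_{ij}(\theta)$.

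One caveat in your recasting: your claim that the coordinates of $\widetilde{\boldsymbol{X}}_{t-1}u_{jt}$ are conditionally mean-zero is automatic for the linear-model errors the paper actually works with (where $Q_{\tau}$ is replaced by a conditional-mean restriction, or the Guo et al.\ parametric setting applies), but it is not implied by the quantile restriction $Q_{\tau}(u_{jt}\mid\boldsymbol{X}_{t-1})=0$, which only makes $\boldsymbol{X}_{t-1}\big(\tau-\mathbf{1}\{u_{jt}\le 0\}\big)$ a valid mean-zero score. If $\mathbb{E}[\widetilde{\boldsymbol{X}}_{t-1}u_{jt}]\neq 0$, then $\norm{\mathbf{a}_j}_\infty$ is $\mathcal{O}_p(1)$ rather than $\mathcal{O}_p(\sqrt{\log p/T})$, and your cross-term bound degrades to order $\norm{\boldsymbol{\delta}_i}_1$, which no longer meets the $o\big(T^{-1/2}(\log p)^{-1/2}\big)$ target without a much stronger estimation rate. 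So either add the mean-orthogonality of the errors as an explicit assumption (as the paper's setting implicitly does) or center $\mathbf{a}_j$ at its population value and absorb the bias term separately.
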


\begin{proof}
Since it holds that 
\begin{align}
\sum_{t=1}^T \big( \hat{u}_{it} \hat{u}_{jt} - u_{it} u_{jt} \big)  
= 
\sum_{t=1}^T \bigg\{ \big( \hat{u}_{it} - u_{it} \big) \big( \hat{u}_{jt} - u_{jt} \big)   + u_{it} \big( \hat{u}_{jt} - u_{jt} \big) + \big( \hat{u}_{it} - u_{it} \big) u_{jt} \bigg\}
\end{align}
Therefore, it suffices to show that 
\begin{align}
\underset{ (i,j) \in \mathcal{G} }{ \mathsf{max} } \left| \sum_{t=1}^T  \big( \hat{u}_{it} - u_{it} \big) \big( \hat{u}_{jt} - u_{jt} \big) \right| = o_p \left( T^{1/2} \zeta_1 \right)   
\end{align}
In particular, it holds that 
\begin{align*}
\underset{ (i,j) \in \mathcal{G} }{ \mathsf{max} } \left| \sum_{t=1}^T  \big( \hat{u}_{it} - u_{it} \big) \big( \hat{u}_{jt} - u_{jt} \big) \right| 
&\leq
\underset{ (i,j) \in \mathcal{G} }{ \mathsf{max} } \sum_{t=1}^T \left| \big( \hat{u}_{it} - u_{it} \big) \big( \hat{u}_{jt} - u_{jt} \big) \right|
\\
&\leq
\underset{ (i,j) \in \mathcal{G} }{ \mathsf{max} } \left\{  \sum_{t=1}^T \big( \hat{u}_{it} - u_{it} \big)^2 . \sum_{t=1}^T \big( \hat{u}_{jt} - u_{jt} \big)^2 \right\}^{1/2}
\\
& = 
\underset{ i \in \mathcal{G}_1 }{ \mathsf{max} } \left\{ \sum_{t=1}^T \big( \hat{u}_{it} - u_{it} \big)^2 \right\}^{1/2}  \underset{ j \in \mathcal{G}_2 }{ \mathsf{max} } \left\{ \sum_{t=1}^T \big( \hat{u}_{jt} - u_{jt} \big)^2 \right\}^{1/2}
\\
& =
\underset{ i \in \mathcal{G}_1 }{ \mathsf{max} } \left\{ \sum_{t=1}^T \left[ \big( b_i^{\prime} - \hat{b}_i^{\prime} \big) z_t \right]^2 \right\}^{1/2}  \underset{ j \in \mathcal{G}_2 }{ \mathsf{max} } \left\{ \sum_{t=1}^T \left[ \big( b_j^{\prime} - \hat{b}_j^{\prime} \big) z_t \right]^2 \right\}^{1/2}
\end{align*}
Consider the following expression 
\begin{align*}
M_{ij} 
:= \frac{1}{T} \sum_{t=1}^T \left\{ u_{it} u_{jt} - \phi_{ij} ( \theta_0) 
- \frac{ \partial \phi_{ij} ( \theta)  }{ \partial \theta^{\top}    } \bigg|_{\theta = \theta_0 }  \left[  \sum_{(k, \ell) \in \mathcal{H} } \frac{ \partial \phi_{k \ell} (\theta) }{  \partial \theta } \frac{ \partial \phi_{k \ell} (\theta) }{  \partial \theta^{\top} }  \right]^{-1}  \times \sum_{(k, \ell) \in \mathcal{H} } \bigg[ u_{kt} u_{\ell t} - \phi_{k \ell } (\theta_0)     \bigg] \frac{ \partial \phi_{k \ell} (\theta) }{ \partial \theta }  \bigg|_{\theta = \theta_0 }  \right\}    
\end{align*}
where $\theta^{*}$ lies between $\hat{\theta}$ and $\theta_0$.

%%-------------------------------------------------------------------------%%
\newpage

Specifically, we will show that 
\begin{align}
\mathbb{P} \left(  \underset{ (i,j) \in \mathcal{H} }{ \mathsf{max} }         \ T^{1/2} \left|  \phi_{ij} \left( \hat{\theta} \right) - \phi_{ij} ( \theta_0 ) - M_{ij}^* \right| \geq \zeta_1 \right) < \zeta_2,
\end{align}
where 
\begin{align*}
M_{ij}^{*} = \frac{1}{T} \sum_{t=1}^T \left( \frac{ \partial \phi_{ij} (\theta) }{  \partial \theta^{\top} } \bigg|_{\theta = \theta_0 } \left[  \sum_{(k, \ell) \in \mathcal{H} } \frac{ \partial \phi_{k \ell} (\theta) }{  \partial \theta } \frac{ \partial \phi_{k \ell} (\theta) }{  \partial \theta^{\top} }  \right]^{-1}  \sum_{ ( k,\ell ) \in \mathcal{H} } \left\{ \bigg[ u_{kt} u_{\ell t} - \phi_{k \ell } (\theta_0)  \bigg] \frac{ \partial \phi_{k \ell} (\theta) }{ \partial \theta }  \bigg|_{\theta = \theta_0 } \right\} \right)    
\end{align*}
We know that
\begin{align}
\underset{ (i,j) \in \mathcal{H} }{ \mathsf{max} } \norm{ \frac{1}{2} \left( \hat{\theta} - \theta_0 \right)^{\top} \frac{ \partial^2 \phi_{ij} (\theta) }{ \partial \theta  \partial \theta^{\top} } \bigg|_{ \theta = \theta^* }  \left( \hat{\theta} - \theta_0 \right)  }_2 \leq \norm{ \hat{\theta} - \theta_0  }_2^2  = \mathcal{O}_p \left(  \frac{ \mathsf{log} (p) }{T} \right).    
\end{align}
because of the definition of $\hat{\theta}$, which implies that 
\begin{align}
\sum_{ (i,j) \in \mathcal{H} } \big( \hat{\sigma}_{ij} - \phi_{ij} (\hat{\theta} ) \big)  \frac{ \partial \phi_{ij} (\theta) }{ \partial \theta} \bigg|_{ \theta = \hat{\theta} } = 0_q.
\end{align}
From Taylor expansion, we have that 
\begin{align*}
0_q 
&= 
\sum_{ (i,j) \in \mathcal{H} } \left[ \left( \hat{\sigma}_{ij} - \phi_{ij} (\hat{\theta} ) \right)  \frac{ \partial \phi_{ij} (\theta) }{ \partial \theta} \bigg|_{ \theta = \theta_0 } \right]  
\\
&+
\left( \sum_{ (i,j) \in \mathcal{H} } \left[ \left( \hat{\sigma}_{ij} - \phi_{ij} ( \theta^* ) \right)  \frac{ \partial^2 \phi_{ij} (\theta) }{ \partial \theta \partial \theta^{\top}  } \bigg|_{  \theta = \theta^*  } \right] - \sum_{ (i,j) \in \mathcal{H} } \left[ \frac{ \partial \phi_{ij} (\theta) }{  \partial \theta} \bigg|_{ \theta = \theta^* } \frac{ \partial \phi_{ij} (\theta) }{ \partial \theta^{\top} } \bigg|_{ \theta = \theta^* } \right] \right) \left( \hat{\theta} - \theta_0    \right)
\end{align*}
where $\theta^*$ lies between $\hat{\theta}$ and $\theta_0$. Therefore, it holds that 
\begin{align}
\hat{\theta} - \theta_0 = N^{-1} \sum_{ (i,j) \in \mathcal{H} } \left[ \left( \hat{\sigma}_{ij} - \phi_{ij} ( \theta_0 ) \right)  \frac{ \partial \phi_{ij} (\theta) }{ \partial \theta} \bigg|_{ \theta = \theta_0 } \right]     
\end{align}

\end{proof}

\begin{remark} 
Notice that it is important to distinguish the source of high-dimensionality. In particular, for $\beta_j, j \in \left\{ 1,..., p \right\}$, we can consider asymptotic expressions in the limit where both $n, p \to \infty$ and focus on the high-dimensional regime with $c^{*} = \mathsf{lim} \ p / n \in (0, \infty]$. Moreover, our estimation method differs from various current methodologies in the literature. For example, in the study of \cite{gonzalo2021spurious} the authors are concerned with the interactions of persistence and dimensionality in the context of the eigenvalue estimation problem of large covariance matrices arising in cointegration and principal components analysis. In other words, we are not concerned about the presence of spurious relationships since our covariance matrix is constructed based tail estimates and not the error terms of cointegrating systems. On the other hand, there is a clear interplay between the dimensionality of the statistical problem of interest and granger  causality.    
\end{remark}

%To add related eigenvalue conditions since we consider a high dimensional problem. In particular, such assumptions are common in the high-dimensional statistics literature. The main complexity is that, rather than for the original design matrix, the restricted eigenvalue condition is imposed on the transformed design matrices. 

%\item A second aspect to be explained is the implementation of a predictive matrix and the main difference between a predictive model and a predictive regression model. 

%%-------------------------------------------------------------------------%%
\newpage

\subsection{Optimal Portfolio Optimization.} 
\label{SectionI}

We begin with the traditional optimal portfolio allocation problem. We extend this framework by proposing a novel risk matrix that captures tail events. We present regularity conditions which ensure the existence of a closed-form solution for the associated quadratic risk function and optimization functions. Our proposed risk matrix induces a well defined quadratic form which permits to be employed for optimal portfolio choice problems\footnote{A rational investor who holds a set of securities aims to minimize the portfolio risk based on the portfolio optimization framework introduced by \cite{markowitz1956optimization}. The specific linear optimization mechanism has seen growing attention in various fields of financial economics such as portfolio management, risk management, capital investment and asset pricing. Furthermore, the seminal paper of \cite{sharpe1964capital} introduced the fundamental principles of investment under conditions of uncertainty in financial markets. Furthermore, various studies examined the effects of constraints on the portfolio performance measures such as the expected return and the portfolio risk (see \cite{fishburn1977mean}, \cite{jagannathan2003risk}, and references therein).}.

\subsubsection{Traditional Minimum-Variance Portfolio Optimization.}
\label{SectionI.A}

The optimal portfolio theory proposed by \cite{Markowitz52} considers the minimization of the portfolio variance without any further restriction on the corresponding expected return. More formally, let $\mathbf{R}_t = ( R_{1,t} ,..., R_{N,t} )^{\prime}$ be the $N-$dimensional vector of assets returns at time $t$ and denote with $\text{\boldmath$\mu$} := \mathbb{E} \left( \mathbf{R}_t \right) = \left( \mathbb{E} \left( R_{1,t} \right) ,..., \mathbb{E} \left( R_{N,t} \right) \right)^{\prime}$ the vector of expected returns and  $\mathbf{\Sigma} := \text{Cov}\left( \mathbf{R}_t \right) = \mathbb{E} \left( \mathbf{R}_t \mathbf{R}_t^{\prime} \right) - \text{\boldmath$\mu$} \text{\boldmath$\mu$}^{\prime}$ the associated variance-covariance matrix. Let $r_t^{p} = \mathbf{w}_i^{\prime} \mathbf{R}_t$ denote the return on a portfolio of $N$ assets, with $\mathbf{w}_i= \left( w_{1},...,w_{N} \right)^{\prime}$ the vector of portfolio weights representing the financial position of the investor as a weight allocation of assets to the portfolio. The variance of this portfolio is $V \left( r_t^{p} \right) = \mathbf{w}^{\prime} \mathbf{\Sigma} \mathbf{w}$. Then, the Markowitz's minimum variance optimization problem is expressed as below
\begin{align}
\label{the problem}
\underset{\mathbf{w} \in \mathbb{R}^N }{\text{arg min}} \ \big\{  \mathbf{w}^{\prime} \mathbf{\Sigma} \mathbf{w} \big\} \ \text{subject to }  \  \mathbf{w}^{\prime} \mathbf{1}= 1.
\end{align}
The first order conditions to the minimization problem yield the optimal portfolio allocation given by the following vector of weights
\begin{align}
\label{minv}
\mathbf{w}^{*} = \frac{\mathbf{\Sigma}^{-1} \mathbf{1}}{\mathbf{1}^{\prime} \mathbf{\Sigma}^{-1} \mathbf{1} }.
\end{align}

\subsubsection{Portfolio Allocation and Asset Centrality.}

\begin{align} 
\label{matrix.new.specification}
Q \left( \widehat{ \boldsymbol{\Sigma} },\mathbf{w} \right)= \mathbf{w}^{\prime} \widehat{ \boldsymbol{\Sigma} } \mathbf{w}, 
\end{align}

The symmetry of $\widehat{ \mathbf{\Sigma} }$ yields the spectral decomposition $\widehat{ \mathbf{\Sigma} } = \mathbf{U} \mathbf{D}_{\sigma} \mathbf{U}^{\prime}$, with $\mathbf{U}$ is an $N \times N$ orthonormal matrix such that $\mathbf{U}^{\prime} = \mathbf{U}^{-1}$ which includes as columns the linearly independent eigenvectors of $\widehat{ \mathbf{\Sigma} }$, and $\mathbf{D}_{\sigma}$ an $N \times N$ diagonal matrix with the corresponding eigenvalues which is defined as $\mathbf{D}_{ \sigma } = diag \left\{ \lambda^{ \sigma }_1,..., \lambda^{ \sigma }_N \right\}$. Moreover, we can rearrange the diagonal eigenvalue matrix as $\mathbf{D}_{\sigma} =\mathbf{U}^{\prime} \widehat{ \mathbf{\Sigma} } \mathbf{U}$.  

%%-------------------------------------------------------------------------%%
\newpage 

This allow us to express the eigenvalues of the risk matrix  $\widehat{ \mathbf{\Sigma} }$ in terms of the elements of the variance-covariance matrix as below
\begin{align} 
\label{lambda1}
\lambda^{ \sigma }_{k} = \overset{N}{\underset{i=1}{\sum}}  u_{ik}^2 \sigma^{2}_{k} + \overset{N}{\underset{i=1}{\sum}}  \overset{N}{\underset{\underset{j \neq i}{j=1}}{\sum}} u_{ik} u_{jk} \sigma_{ij}, \ \ \text{for} \ k \in \left\{ 1,...,N \right\}.
\end{align}
where $\left\{ u_{ij} \right\}_{i,j = 1,...,N}$ denotes the eigenvectors of the risk matrix $\widehat{ \mathbf{\Sigma} } \in \mathbb{R}^{N \times N}$. 
\begin{assumption} 
\label{assumption1}
The eigenvectors $\{ u_{ij} \}_{i,j=1,...,N}$ of the risk matrix $\mathbf{\Gamma}$ satisfy the condition 
\begin{equation}
\overset{N}{\underset{i=1}{\sum}} u_{ik}^2 > - \overset{N}{\underset{i=1}{\sum}}  \overset{N}{\underset{\underset{j \neq i}{j=1}}{\sum}} u_{ik} u_{jk} \frac{ \widetilde{ \gamma}_{i|j} }{  \gamma_{k}^{0} },
\end{equation}
for all $k=1,\ldots,N$.
\end{assumption}
The condition in Assumption \ref{assumption1} guarantees that $\widetilde{ \mathbf{\Gamma} }$ is positive definite, and as a result implies that the corresponding quadratic form $\mathbf{w}^{\prime} \widetilde{ \mathbf{\Gamma} } \mathbf{w}$ has certain desirable properties as summarized by the following propositions.  

\begin{proposition}
\label{propositon2}
Let $Q( \widetilde{ \mathbf{\Gamma} } ,\mathbf{w}) = \mathbf{w}^{\prime} \widetilde{ \mathbf{\Gamma} } \mathbf{w}$ denote the portfolio risk, where the risk matrix $\widetilde{ \mathbf{\Gamma} }$ satisfying Assumption \ref{assumption1}. Then the following conditions hold
\begin{itemize}
\item[(i)] $Q( \widetilde{ \mathbf{\Gamma} },\mathbf{w})= 0$ if and only if $\mathbf{w}=0$.
\item[(ii)] $Q( \widetilde{ \mathbf{\Gamma} },\mathbf{w}) > 0$ for $\mathbf{w} \neq 0$.
\item[(iii)] The associated quadratic form defined by the Lagrangian of the objective function \eqref{minVaR} given by $Q( \widetilde{ \mathbf{\Gamma} },\mathbf{w}) + \zeta (\mathbf{w}^{\prime} \mathbf{1} - 1)$, for some $\zeta > 0$,  has a global minimum on $\mathbf{w}$.
\end{itemize}
\end{proposition}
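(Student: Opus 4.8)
The strategy is to reduce all three claims to a single structural fact: Assumption~\ref{assumption1} is exactly the condition guaranteeing that every eigenvalue of $\widetilde{\mathbf{\Gamma}}$ is strictly positive, so that $\widetilde{\mathbf{\Gamma}}$ is a symmetric positive definite matrix; parts (i)--(iii) are then standard consequences. First I would apply the eigenvalue identity \eqref{lambda1} to $\widetilde{\mathbf{\Gamma}}$ itself, with the role of $\sigma_k^2$ played by the diagonal entry $\gamma_k^{0}=\mathsf{VaR}^{+}_k$ and the role of $\sigma_{ij}$ played by the off-diagonal entry $\widetilde{\gamma}_{i|j}=(\mathsf{VaR}^{+}_i\mathsf{VaR}^{+}_j)^{1/2}\Delta\mathsf{CoVaR}_{(i,j)}$, where $\{u_{ik}\}$ are the (orthonormal) eigenvectors of $\widetilde{\mathbf{\Gamma}}$. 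Since the diagonal entries $\gamma_k^{0}$ are positive $\mathsf{VaR}^{+}$ quantities, dividing by $\gamma_k^{0}$ gives
\begin{align*}
\frac{\lambda_k^{\widetilde{\mathbf{\Gamma}}}}{\gamma_k^{0}}
= \sum_{i=1}^{N} u_{ik}^2 + \sum_{i=1}^{N}\sum_{\substack{j=1\\ j\neq i}}^{N} u_{ik}u_{jk}\,\frac{\widetilde{\gamma}_{i|j}}{\gamma_k^{0}},
\end{align*}
and the right-hand side is strictly positive precisely by the inequality in Assumption~\ref{assumption1}. Hence $\lambda_k^{\widetilde{\mathbf{\Gamma}}}>0$ for every $k=1,\ldots,N$, and together with the symmetry of $\widetilde{\mathbf{\Gamma}}$ this yields $\widetilde{\mathbf{\Gamma}}\succ 0$.

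Parts (i) and (ii) then follow at once. Writing the spectral decomposition $\widetilde{\mathbf{\Gamma}}=\mathbf{U}\mathbf{D}\mathbf{U}^{\prime}$ with $\mathbf{D}=\mathrm{diag}(\lambda_1^{\widetilde{\mathbf{\Gamma}}},\ldots,\lambda_N^{\widetilde{\mathbf{\Gamma}}})$, one has
\begin{align*}
Q(\widetilde{\mathbf{\Gamma}},\mathbf{w})=\mathbf{w}^{\prime}\widetilde{\mathbf{\Gamma}}\mathbf{w}=\sum_{k=1}^{N}\lambda_k^{\widetilde{\mathbf{\Gamma}}}\big(\mathbf{u}_k^{\prime}\mathbf{w}\big)^2\;\geq\;\lambda_{\min}\,\norm{\mathbf{w}}_2^2,
\qquad \lambda_{\min}:=\min_k\lambda_k^{\widetilde{\mathbf{\Gamma}}}>0,
\end{align*}
so $Q(\widetilde{\mathbf{\Gamma}},\mathbf{w})\geq 0$ with equality if and only if $\norm{\mathbf{w}}_2=0$, i.e.\ $\mathbf{w}=0$; this is (i), and (ii) is its contrapositive restricted to $\mathbf{w}\neq 0$.

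For part (iii) I would argue by convexity and coercivity. The map $\mathbf{w}\mapsto Q(\widetilde{\mathbf{\Gamma}},\mathbf{w})+\zeta(\mathbf{w}^{\prime}\mathbf{1}-1)$ is a quadratic with Hessian $2\widetilde{\mathbf{\Gamma}}\succ 0$, hence strictly convex, and using $Q(\widetilde{\mathbf{\Gamma}},\mathbf{w})\geq\lambda_{\min}\norm{\mathbf{w}}_2^2$ and $|\zeta(\mathbf{w}^{\prime}\mathbf{1}-1)|\leq\zeta(\sqrt{N}\norm{\mathbf{w}}_2+1)$ it is coercive, i.e.\ it tends to $+\infty$ as $\norm{\mathbf{w}}_2\to\infty$. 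A continuous, strictly convex, coercive function on $\mathbb{R}^N$ attains a unique global minimum; its first-order condition $2\widetilde{\mathbf{\Gamma}}\mathbf{w}+\zeta\mathbf{1}=0$ combined with $\mathbf{w}^{\prime}\mathbf{1}=1$ reproduces the closed-form weight vector in the spirit of \eqref{minv}, confirming the minimizer is well defined. Alternatively one restricts to the nonempty closed affine set $\{\mathbf{w}:\mathbf{w}^{\prime}\mathbf{1}=1\}$ and invokes the Weierstrass extreme value theorem on the coercive restriction. I expect the only genuinely delicate point to be the bookkeeping in the first paragraph: one must verify that the $u_{ik}$ appearing in Assumption~\ref{assumption1} are the eigenvectors of $\widetilde{\mathbf{\Gamma}}$ (and not of some auxiliary matrix), that identity \eqref{lambda1} transfers verbatim to $\widetilde{\mathbf{\Gamma}}$ with its own entries, and that $\gamma_k^{0}>0$ so the division is legitimate; once the equivalence ``Assumption~\ref{assumption1} $\Longleftrightarrow$ all eigenvalues positive'' is pinned down, the remainder is routine linear algebra and convex analysis.
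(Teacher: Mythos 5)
Your proof is correct and takes essentially the approach the paper intends: the text immediately preceding the proposition asserts that Assumption \ref{assumption1} guarantees positive definiteness of $\widetilde{\mathbf{\Gamma}}$ through the eigenvalue expansion \eqref{lambda1}, and then states (i)--(iii) without supplying further details. Your completion of that sketch---dividing the eigenvalue identity by $\gamma_k^{0}>0$ so that Assumption \ref{assumption1} reads precisely as $\lambda_k^{\widetilde{\mathbf{\Gamma}}}>0$, then obtaining (i)--(ii) from the spectral decomposition and (iii) from strict convexity together with coercivity of the Lagrangian---is exactly the standard argument the paper leaves implicit, and your flagged bookkeeping caveat (that the $u_{ik}$ in Assumption \ref{assumption1} must be the eigenvectors of $\widetilde{\mathbf{\Gamma}}$ itself and that the diagonal entries are positive) is the right point to be careful about.
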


\medskip

\begin{remark}
Notice that the optimal choice of weights goes beyond the portfolio allocation optimization problem. For example, the choice of $\boldsymbol{w}$ can be based on the vector of naive weights or a vector estimated through an optimization procedure. For example, the covariance matrix used in portfolio allocation problems is positive-definite which allow to estimate the vector of optimal weights. We provide some examples below:
\end{remark}

\begin{itemize}

\item[$\boldsymbol{1.}$] $\textbf{(Variance-Optimal Weights)}$. There is a unique solution to the minimization problem of $\boldsymbol{w}^{\top} \boldsymbol{V} \boldsymbol{w}$ subject to the constraint that $\boldsymbol{w}^{\top} \boldsymbol{1} = 1$, which is
\begin{align}
\boldsymbol{w}_{gmvp}  = \frac{ \boldsymbol{V}_c^{-1} \boldsymbol{1} }{ \boldsymbol{1}^{\top} \boldsymbol{V}_c^{-1} \boldsymbol{1} }, \ \sqrt{k} \big( \widehat{\gamma}_n \left( \boldsymbol{w}_{gmvp} \right)  - \gamma \big) \overset{d}{\to} \mathcal{N} \left( \frac{ \boldsymbol{1}^{\top} \boldsymbol{V}_c^{-1} \boldsymbol{B}_c }{ \boldsymbol{1}^{\top} \boldsymbol{V}_c^{-1} \boldsymbol{1} },  \frac{1}{ \boldsymbol{1}^{\top} \boldsymbol{V}_c^{-1} \boldsymbol{1} }         \right).  
\end{align}

\item[$\boldsymbol{2.}$] $\textbf{(AMSE-Optimal Weights)}$. There is a unique solution to the minimization problem of $\mathsf{AMSE}(\boldsymbol{w}) =  \displaystyle \frac{1}{k} \left[ \left( \boldsymbol{w}^{\top} \boldsymbol{B}_c \right)^2 + \boldsymbol{w}^{\top} \boldsymbol{V}_c \boldsymbol{w} \right]$ subject to the constraint $\boldsymbol{w}^{\top} \boldsymbol{1} = 1$. 
\end{itemize}
Lastly, if $\boldsymbol{w}_n \boldsymbol{n} = 1$ with $\widehat{\boldsymbol{w}}_n \overset{p}{\to} \boldsymbol{w}$, then the composite estimator $\widehat{\gamma}_n ( \widehat{\boldsymbol{w}}_n )$ is $\sqrt{k}-$asymptotically equivalent to $\widehat{\gamma}_n ( \boldsymbol{w}_n )$ in the sense that $\sqrt{k} \big( \widehat{\gamma}_n ( \widehat{\boldsymbol{w}}_n ) - \widehat{\gamma}_n ( \widehat{\boldsymbol{w}}_n ) \big) = o_p(1)$.

%%-------------------------------------------------------------------------%%
\newpage

\newpage

\section{A Novel Centrality Measure.}
\label{Section3}

\subsection{Graph Topology and Node Centrality.}
\label{SectionII.A}

We denote with $ \mathcal{G} = \{ \mathcal{V} , \mathcal{E} \}$ a graph structure, representing a financial network, which consists of a set of nodes, $\mathcal{V}$, and a set of edges, $\mathcal{E}$, connecting the pairs of nodes. A full characterization of the information in the network is provided by the $N \times N$ adjacency matrix $\mathbf{\Omega}^0$ whose element $\{ \Omega_{ij}^0 \}_{i,j=1,...,N}$ determine whether there is a link connecting node $i$ and $j$ for the graph $\mathcal{G}$. The link between two elements can be characterized by a binary variable which determines the existence of a connection or by a real value that corresponds to the pairwise weight between nodes. 

We propose the following adjacency matrix $\mathbf{\Omega}^0 = \widetilde{ \mathbf{\Gamma} } - diag \left( \widetilde{ \mathbf{\Gamma} } \right)$, where $diag(\cdot)$ is an $N \times N$ diagonal matrix, implying that $diag \left( \widetilde{ \mathbf{\Gamma} } \right) :=  \text{diag} \{ \gamma^{0}_1, \ldots, \gamma^{0}_N \}$. Therefore, within our framework we consider a weighted network, where the connections between stocks are determined by the $\widetilde{ \gamma}_{i|j}$ measures as defined by the off-diagonal elements of the $\widetilde{ \mathbf{\Gamma} }$ matrix. The adjacency matrix $\mathbf{\Omega}^0$ is symmetric by definition, since the main diagonal has a vector of zeros and the off-diagonal terms are equivalent to the symmetric risk matrix $\widetilde{ \mathbf{\Gamma} }$, that is, $\Omega_{ij}^0= \widetilde{ \gamma}_{i|j}$ for $i \neq j$. 

The symmetry of the adjacency matrix entails the spectral decomposition $\mathbf{\Omega}^0 = \mathbf{Z}_{ \Omega^0 } \mathbf{D}_{ \Omega^0 } \mathbf{Z}_{ \Omega^0 }^{\prime}$, with $\mathbf{Z}_{ \Omega^0 }$ an $N \times N$ orthonormal matrix such that $\mathbf{Z}_{ \Omega^0 }^{\prime} = \mathbf{Z}_{ \Omega^0 }^{-1}$ that contains the linearly independent eigenvectors of $\mathbf{\Omega}^0$, and $\mathbf{D}_{\Omega^0}$ is an $N \times N$ diagonal matrix with the corresponding eigenvalues. Moreover, we express the eigenvalues of the adjacency matrix $\mathbf{\Omega}^0$ using the spectral vector decomposition as below   
\begin{align} 
\label{lambda1b}
\lambda_{k}^{ \Omega^0 } =\sum_{i=1}^{N} \overset{N}{\underset{\underset{j \neq i}{j=1}}{\sum}} z_{ik} z_{jk} \widetilde{ \gamma}_{i|j},
\end{align}
with $z_{ij}$ the eigenvectors of the matrix $\mathbf{\Omega}^0 \in \mathbb{R}^{N \times N}$. We decompose the loss function $\mathbf{w}' \widetilde{ \mathbf{\Gamma} } \mathbf{w}$  as the sum of a quadratic loss function given by the adjacency matrix and a quadratic loss function given by a diagonal risk matrix with elements given by $\gamma^{0}_i$. More formally,
the quadratic form becomes
\begin{equation} 
\label{decom}
Q( \widetilde{ \mathbf{\Gamma} } ,\mathbf{w}) = \mathbf{w}' \widetilde{ \mathbf{\Gamma} }
 \mathbf{w} \equiv \mathbf{w}' \mathbf{\Omega}^0 \mathbf{w} + \sum_{i=1}^{N} w_{i}^2 \gamma^{0}_i.
\end{equation}
The study of the contribution of the centrality of assets to the loss function $Q( \widetilde{ \mathbf{\Gamma}} , \mathbf{w})$ is examined by \cite{katsouris2021optimal} and \cite{olmo2021optimal}. To do so, we further examine the components of expression $\mathbf{w}' \mathbf{ \Omega }^0 \mathbf{w}$ and provide a definition of asset centrality with respect to the eigenvalue decomposition of the adjacency matrix. The notion of centrality quantifies the influence of certain nodes in a given network. There are several measurements in the literature each corresponding to a specific definition of centrality. We focus on the eigenvector centrality, see also \cite{Bonacich72}. This measure of centrality is closely related to the concept of Katz centrality, see \cite{katz1953new}. The eigenvector centrality of asset $i$ is defined as the proportional sum of its neighbors' centrality, see \cite{Bonacich72}.

%%-------------------------------------------------------------------------%%
\newpage

\subsection{Main Theory.}

\begin{itemize}

\item Let $\succsim$ be a non-trivial binary relation over $X$. We say that $\succsim$ has a weak representation if there exists a non-constant real valued function $v : X \to \mathbb{R}$ such that for all $x, y \in X, v(x) > v(y)$ implies $x \succ y$.

\item We say that $v$ is a partial representation for $\succsim$ if for all $x,y \in X, x \succ y$ implies $v(x) > v(y)$. 

\item The function $v$ is called strong representation of $\succsim$ if for all $x, y \in X, x \succ y$ if and only if $v(x) > v(y)$, that is, $v$ is a strong representation if it is both a weak and partial representation. 

\end{itemize}

A centrality measure although has no formal definition in the literature in this paper we propose some regularity properties for these measures to correspond to centrality ranking. Motivated from the above findings we propose a novel centrality measure. In particular, in this paper we define the context of recursive monotonicity. In mathematical terms recursive monotonicity means that if vertex $i$ has more neighbors than vertex $j$, and $i$'s neighbours are more central than those of $j$, then $i$ must be more central than $j$. Therefore, we focus on two different network topologies, a topology of low connectivity and a topology of strong connectivity (see, \cite{sadler2022ordinal}). 
\color{black}

A measure of centrality is a function $C$ that takes a node $i \in V$ and the graph $G = ( V, E)$ it belongs to, and returns a non-negative real number $C(i ; G) \geq 0$ which translates to a quantity which specifies how central $i$ is in $G$. Currently, there is no agreement on a formal definition of a centrality measure the following properties should hold

\begin{itemize}

\item \textit{Invariance}.  

\item \textit{Monotonicity}. Adding an edge between $i$ and another node $j$, then the centrality of $i$ does not increase.  

\end{itemize} 

\begin{assumption}[Centrality measures] Under the assumption of a graph $\mathcal{G} = \left( E, V \right)$ it holds:

\begin{itemize}

\item[\textit{\textbf{(a)}}] \textit{Linear Homogeneity}. Removing one node from the graph, then the properties of the centrality measure remain valid since it provides a ranking statistic in relation to the graph topology, that is, the ranked association of the node $i \in \left\{ 1,..., N \right\}$ with respect to the remaining nodes.  

\item[\textit{\textbf{(b)}}] \textit{Invariance}. The centrality measure remains invariant under a linear transformation. A linear (affine) transformation of the adjacency matrix does not affect the ranking statistics in relation to the graph topology. Furthermore, the centrality measures are invariant under graph automorphisms, that is, nodes re-labeling. 

\item[\textit{\textbf{(c)}}] \textit{Recursive Monotonicity}.  A relevant definition of recursive monotonicity is given by \cite{bommier2017monotone}.  

\item[\textit{\textbf{(d)}}] \textit{Sub-Additivity}. 

\end{itemize}

\end{assumption}

Furthermore and more precisely, given a preorder on the vertex set of a graph, the neighborhood of vertex $i$ dominates that of another vertex $j$ if there exists an injective function from the neighbourhood of $j$ to that of $i$ such that every neighbor of $j$ is mapped to a higher ranked neighbour of $i$. 

%%-------------------------------------------------------------------------%%
\newpage 

The preorder satisfies recursive monotonicity, and is therefore an ordinal centrality, if it ranks $i$ higher than $j$ whenever $i$'s neighbourhood dominates that of $j$. Consider for example commonly used centrality measures in the literature such as degree centrality, Katz-Bonacich centrality and eigenvector centrality, all of these centrality measures satisfy assumption 1 (recursive monotonicity).     
Therefore, following our preliminary empirical findings of the optimal portfolio performance measures with respect to the low centrality topology as well as the high centrality topology, we obtain some useful insights regarding the how the degree of connectedness in the network and the strength of the centrality association can affect the portfolio allocation. Building on this idea, in this section we focus on the proposed algorithm which provides a statistical methodology for constructing a centrality ranking as well as a characterization of the strength of the centrality association between nodes. 

More specifically, we assume that a node exhibits "strong centrality" when all the rank centrality statistics assign a higher preference to vertex $i$ above $j$ (in terms of graph topology), which implies that in a sense $i$ is considered to be more central than the node $j$ for all $i,j \in \left\{ 1,..., N \right\}$ such that $i \neq j$. On the other hand, we assume that a node exhibits "weak centrality" when vertex $i$ is weakly more central than vertex $j$ if there exists some strict ordinal centrality that ranks $i$ higher than $j$. 

According to \cite{sadler2022ordinal} weak centrality itself is a strict ordinal centrality, so it is the maximal such order, and it extends strong centrality. Moreover, in this paper we employ these terminologies for the development of an algorithm for feature ordering by centrality exclusion.
We define some useful mathematical definitions such as the binary relation $\succsim$ on $V$ is a set of ordered pairs of vertices, and we write $ i \succsim j$ if $(i,j) \in \succsim$.  

\begin{definition}[recursive monotonicity]
Given a graph $G = ( V, E)$ and a binary relation $\succsim$ on $V$, the subset $S \subseteq V$ $\mathbf{\succsim}-$dominates $S^{\prime} \subseteq V$, and we write $S \succsim S^{\prime}$, if there exists an injective function $f: S^{\prime} \to S$ such that $f(i) \succsim i$ for all $i \in S^{\prime}$. Furthermore, the binary relation $\succsim$ satisfies \textit{\textbf{recursive monotonicity}} if $i \succsim j$ whenever $G_i \succsim G_j$. 
\end{definition}

\begin{remark}
An ordinal centrality $\succsim$ on $G = (V, E)$ is a preorder on $V$ that satisfies recursive monotonicity. Recursive monotonicity captures the essence of a board class of centrality measures. Thus a vertex $i$ is more central than vertex $j$ if $i$ has more connections to more central nodes than $j$. 
\end{remark}

\begin{definition}[preorder]
The function $c: V \to \mathbb{R}$ represents the preorder $\succsim$ if we have $i \succsim j$ if and only if $c(i) \succsim c(j)$. 
\end{definition}

\begin{definition}[strongly more central]
Given a graph $G = ( V, E )$, node $i$ is \textit{\textbf{strongly more central}} than node $j$, and we write $i \succsim_s j$, if and only if $i \succsim j$ for every ordinal centrality on $G$.
\end{definition}

%\color{magenta}
%\begin{definition}
%A function $f$ from $S \subset \mathbb{R}^n$ into $\mathbb{R}^m$ is Lipschitz continuous at $x \in S$ if there is a constant $C$ such that 
%\begin{align}
%\norm{ f(y) - f(x) } \leq C \norm{ y - x }
%\end{align}
%for all $y \in S$ sufficiently near $x$. 
%\end{definition}

%\begin{remark}
%Notice that Lipschitz continuity at a point depends only on the behaviour of the function near that point. For f to be Lipschitz continuous at $x$, the inequality result must hold for all $y$ sufficiently near $x$, but it is not necessary that the above relation hold if $y$ is not near $x$.  
%\end{remark}

%%-------------------------------------------------------------------------%%
%\newpage 

\subsection{Centrality Rank Statistic.}

We follow the example presented in \cite{janssen2003bootstrap}. Given $k(n)$ let $\big( R_1,..., R_{k(n)} \big)$ be uniformly distributed ranks, that is, a random variable with uniform distribution on a set of permutations $\mathcal{P}_{k(n)}$ of $\left\{ 1,..., k(n) \right\}$. Furthermore, an additional index $n$ concerning the ranks is suppressed throughout. Consider a sequence of simple linear permutation statistics such that 
\begin{align}
S_n = \sum_{i=1}^{k(n)} \big( c_{n_i} - \bar{c}_n \big) d_n(R_i)
\end{align}
where $c_{n_i}$ are regression coefficients, $1 \leq i \leq k(n)$, with 
\begin{align}
\sum_{i=1}^{k(n)} \big( c_{n_i} - \bar{c}_n \big)^2 = 1, \ \ \ \ \bar{c}_n := \frac{1}{k_n} \sum_{i=1}^{k(n)} c_{n_i}
\end{align}
and random scores $d_n(i): \tilde{\Omega} \mathbb{R}$, $1 \leq i 
\leq k(n)$ with 
\begin{align}
\frac{1}{k(n) - 1} \sum_{i=1}^{k(n)} \big( d_n(i) - \bar{d}_n  \big)^2 = 1 \ \ \ \text{on} \ \ \ \mathcal{A}_n := \left\{ \sum_{i=1}^{k(n)} \big( d_n(i) - \bar{d}_n \big)^2  > 0 \right\}
\end{align}
The $c'$s are here considered to be fixed and the $d'$ are allowed to be random variables independent of the ranks $R_j$. Without restrictions we may assume ordered regression coefficients
\begin{align}
c_{n_1} \leq c_{n_2} \leq ... \leq c_{n k(n)}.
\end{align}  
Then, we have equality in distribution for $\big( d_n (R_i)   \big)_{ i \leq k(n) } \overset{ D }{ = } \big( d_{ R_{i : k(n) } } \big)_{ i \leq k(n) }$, which is a consequence of the ex-changeability of these variables. Therefore, in these cases ranks and order statistics are independent. Thus, without restriction we may assume that the score functions are also ordered
\begin{align}
d_n(1) \leq d_n(2) \leq ... \leq d_n \big( k(n) \big). 
\end{align}  
Therefore, since $\mathcal{E} \big( S_n | W \big) = 0$ holds always convergence subsequences of $S_n$ exist and under some regularity conditions we can classify all possible limit distributions of subsequences with respect to the distributional convergence. We define the normalized random variables as below
\begin{align}
Y_{n,i} := \frac{ \displaystyle X_{n,i} - \bar{X}_n }{  \displaystyle \left\{ \sum_{i=1}^{ k(n) } \big( X_{n,i} - \bar{X}_n   \big)^2 \right\}^{1/2} }. 
\end{align}

\medskip

%\begin{lemma}
%Let $\xi_n$ be a distributionally convergence sequence of random variables with values in a complete separable metric space. Then, there exists almost surely convergent versions of $\xi_n$ on a separable probability space.   
%\end{lemma}

%%-------------------------------------------------------------------------%%
%\newpage 

\subsection{Feature Sorting by Centrality Exclusion.}

The proposed algorithm of exclusion centrality is constructed using the proposed high-dimensional tail forecast risk matrix. However, the feature ordering by centrality exclusion (FOCE) procedure in practise can be applicable for covariance matrices for graphical models. On the other hand, when the FOCE procedure is implemented based on the VaR-$\Delta$CoVaR risk matrix, then the feature selection is based on the tail dependence of the response variable $\boldsymbol{Y}_t$ given the predictors $\boldsymbol{X}_{t-1}$. Furthermore, since the construction of the proposed risk matrix is based on the quantile predictive regression models, then we specifically focus on a type of granger causality in the tails of the underline distributions.  Thus, our proposed algorithm provides a methodology for ordering the vertices of the graph based on the effect of their centrality in relation to the other nodes in the graph when we exclude the most central node in the graph. It produces an ordering of the predictors according to their predictive power. This ordering is used for variable selection without putting any assumption on the distribution of the data or assuming any particular underlying model. 

\newpage

The simplicity of the estimation of the conditional dependence coefficient makes it an efficient method for variable ordering and variable selection that can be used for high dimensional settings. In this section, motivated by the study of centrality measures and the proposed risk matrix of the paper, we propose a novel feature selection algorithm for multivariate regression models using the centrality exclusion procedure based on our novel regression-based risk matrix. Notice that other methodologies currently proposed in the literature include the model-based methods.   

Let $Y$ be the response variable and let $\boldsymbol{X} = \left( X_j   \right)_{ 1 \leq j \leq p }$ be the set of predictors. The data consists of $n$ \textit{i.i.d} copies of $\left( Y, \boldsymbol{X}   \right)$. Similar to the framework proposed by \cite{azadkia2021simple}, first, choose $j_1$ to be the index $j$ that maximizes $T_n( Y, X_j )$. Then, having obtained $j_1,..., j_k$, choose $j_{k+1}$ to be the index $j \not\in \left\{ j_1,..., j_k \right\}$ that maximizes $T_n \left( Y, X_j | X_{j_1},...., X_{j_k} \right)$. Furthermore, continue like this until arriving at the first $k$ such that  $T_n \left( Y, X_{j_{k+1} } | X_{j_1},...., X_{j_k} \right) \leq 0$, and then declare the chosen subset to be $\hat{S} := \big\{  j_1,..., j_k  \big\}$. If there is no such $k$, define $\hat{S}$ to be the whole set of variables. Furthermore, this may also happen that $T_n \big( Y, X_{j_1} \big) \leq 0$. 

In that case, we declare $\hat{S}$ to be the empty set. In the next section, we prove the consistency of FSCE under a set of assumptions on the law of $\left( Y, \boldsymbol{X} \right)$. Furthermore, we focus on proposing on a well-defined stopping rule. Our stopping rule seems to work well in practise, and we are able to prove consistency of variable selection for this rule.

Suppose that $\boldsymbol{X}$ is a normal random vector with zero mean and arbitrary covariance structure, 
\begin{align}
Y = \beta \boldsymbol{X} + \epsilon,
\end{align}
Then, for any non-empty $S \subset \left\{ 1,..., p \right\}$  and any $j \in \left\{ 1,..., p \right\} \backslash S$. 
Notice that if $S$ is a sufficient set of predictors, then $\rho( S, j ) = 0$ for any $j \not\in S$. To the best of our knowledge the proposed algorithm for higher-order conditional risk measures is novel and is based on distribution-free inference. 

See for reference the paper: \cite{chao2018multivariate}. In the particular paper the authors introduce the following estimation procedure. Let $\big\{ \left( \boldsymbol{X}_i, Y_{i1},..., Y_{im}   \right) \big\}_{ 1 \leq i \leq n }$ where $Y_{ij}$ represents the value observed from the response $j$ at the time point $i$, and $\left\{ \boldsymbol{X}_i \right\}_{i=1}^n$ are the covariates. Furthermore, we assume that the samples are \textit{i.i.d} over $i$. For $\uptau \in (0,1)$, the conditional expectile $e_j \left( \uptau | \boldsymbol{X}_i  \right)$ of $Y_{ij}$ given $\boldsymbol{X}_i$ is defined as below
\begin{align}
e_j \left( \uptau | \boldsymbol{X}_i  \right) = \boldsymbol{X}_i^{\prime} \boldsymbol{\gamma}_j ( \uptau ), 
\end{align}
where
\begin{align}
\boldsymbol{\gamma}_j (\uptau) \overset{ \mathsf{def}  }{ = } \ \underset{ \boldsymbol{\gamma} \in \mathbb{R}^p }{ \mathsf{arg \ min} } \ \mathbb{E} \big[ \rho_{\uptau} \left( Y_{ij} - \boldsymbol{X}^{\prime} \boldsymbol{\gamma} \right)  \big],
\end{align}

The proposed feature selection by node exclusion algorithm is considered as a dimension reduction methodology for the graph. Below we present some related algorithms from the economic theory and econometrics literature.

%\color{red}
%(to add and explain the procedure more formally)

%Below we provide an example from the literature. 
%\color{black}

\begin{itemize}

\item \textbf{Heuristic Pricing Algorithm:} The particular algorithm iteratively creates an ordering of the choice situations, which can correspond to a rational choice type. First, we explain the link between orderings of the choice situations and rational choice types. Next, we explain how to build an ordering that provides a good solution. 

\item \textbf{Generation of Choice Types for Tightening:} To tighten the set based on a subset of the rational choice types, we generate the subset in a semi-random way. First, we generate (likely irrational) choice types by randomly choosing one patch in each choice situation. If this choice type is rational, we add it to the subset for tightening. If it is not, we identify the subsets of choice situations for which preference cycle exists. For each such subset, we randomly pick one choice situation. For that choice situation, we look for a patch which (i) removes at least one preference relation within the subset, (ii) is as close as possible to the currently selected patch in that choice situation, and (iii) removes (rathern than adds) revealed preference relations. In this way, we slightly change the choice type, while increasing the probability that it is a rational choice type. If after these changes the choice type is not yet rational, the procedure is repeated until a rational choice type is found. The algorithm below contains the pseudo-code to generate these rational choice types in a semi-random way. In the algorithm, we again define $\mathcal{T} = \left\{ t | 1 \leq t \leq T \right\}$ as the set of all choice situations. 
\end{itemize}

%%-------------------------------------------------------------------------%%
%\newpage 

\subsection{Algorithm.}

We propose a novel feature ordering algorithm for a multivariate regression with a fixed number of covariates, using a backward stepwise algorithm which is applied to our network driven tail risk matrix. In other words, given a high dimensional response vector $\boldsymbol{Y} = ( Y_j )_{1 \leq j \leq p}$ and a common set of predictors $\boldsymbol{X} = ( X_j )_{1 \leq j \leq k}$ then, the high dimensional response vector is reduced to a lower dimensional space based on the same number of predictors. In other words, given a high dimensional response vector and a set of covariates, we conjecture that the dimension of the response vector can be reduced without having to extend the set of predictors in order to better explain variation in the high-dimensional response vector. 

\begin{example}
Consider a class of linear additive models
\begin{align}
y_t = B z_t + u_t, \ \ \ t = 1,..., T    
\end{align}
with $y_t = \big( y_{1t},..., y_{pt} \big)^{\prime}$ and $u_t = \big( u_{1t},..., u_{pt} \big)^{\prime}$
\end{example}

\begin{itemize}

\item Both $y_t$, the high-dimensional response vector, and $z_t$ are observable. Usually, the literature considers the estimation of $\Sigma$ associated with the unobserved error term $u_t$, which is a measure of uncertainty. On the other hand we consider the predictive ability using the one-ahead period forecasts. Our methodology reduces the dimensionality of the response vector for a fixed number of covariates at the quantile level. %On the other hand, under the assumption that $K$ diverges with $T$, penalized estimators can be applied. 

\item Notice that in practise we examined two different procedures here. The first procedure utilizes a suitable stopping rule with respect to a portfolio optimization problem. The second procedure, considers an alternative way of ordering the importance (with respect to the degree of centrality) of the high-dimensional response vector. %However, is this procedure a measure of conditional dependence or a procedure for variable selection? 

\item A relevant notion from the statistical theory perspective is the aspect of $\textit{sufficient dimension reduction in classical statistics}$. In the classical statistics setting, if one can find a small subset of predictors that is sufficient, then we can assume that these predictors contain all the relevant predictive information about $Y$ among the given set of predictors, and the statistician can then fit a predictive model based on this small subset of predictors. 

\item On the other hand, if the reduction of the dimensionality of the high dimensional response vector is what the statistician is interested to, then our approach can work regardless of the dimensionality of the predictors. This implies that a question of constructing a statistical procedure for testing the inclusion of high dimensional controls is replaced on whether a multivariate regression model should include or not a high dimensional response vector. 
\end{itemize}
The main idea of the proposed algorithm is a way of selecting features or reducing the number of nodes in a graph, since the algorithm chooses the most central node and then removes it from the graph iteratively using a stopping rule. In other words, the proposed procedure can be considered to be a methodology that bridge the gap between a node exclusion statistical methodology and a procedure for feature selection in graphs without employing Lasso regularization techniques.

\section{Numerical Studies.}
\label{Section4}

\subsection{Simulation study.}

In this section, we apply our method on the simulated data to evaluate the estimation performance on the factors and loadings, as the number of factor varies. 

We set with $n = m = p = 100$. For $i = 1,..., n, j = 1,...,m$, and let $\boldsymbol{X}_i \sim \mathcal{N} \big(  \boldsymbol{0}_{p \times 1}, \boldsymbol{\Sigma}_{p \times p} \big)$ with $\boldsymbol{\Sigma}_{ij} = 0.5^{ | j - k  | }$ and $\boldsymbol{\epsilon}_i \overset{ i.i.d }{ \sim } \mathcal{N} \big( \boldsymbol{0}_{m \times 1},  \boldsymbol{I}_{ m \times m} \big)$. Then, the response variables are generated by 
\begin{align}
Y_{ij} 
= 
\boldsymbol{X}_i^{\prime} \boldsymbol{\Gamma}_{.j} + \epsilon_{ij}
=
\sum_{k=1}^{r} \psi_{jk} f_k \left( \boldsymbol{X}_i \right)  + \epsilon_{ij} 
=
\sum_{k=1}^{r} \boldsymbol{V}_{jk} \boldsymbol{D}_{kk} \boldsymbol{X}_i^{\prime} \boldsymbol{U}_{.k} + \epsilon_{ij},   
\end{align}
where $r = \mathsf{rank} \left( \boldsymbol{\Gamma} \right)$. 

\color{black}

Let $\Gamma$ be the an $N \times N$ matrix whose $(i,j)$ entry is $\upgamma_{ij}$. Furthermore, let $\displaystyle \Gamma = \sum_{j=1}^N s_j u_j v_j^{\prime}$. Furthermore, notice that since the elements of the risk matrix $\Gamma$ are constructed based from population moments of the pairwise conditional distribution functions. Considering the fact that the entries of the risk matrix $\Gamma$ are estimated by the nodewise quantile predictive regression models. Thus, our matrix is considered to be a matrix with pairwise tail forecasts. 

\bigskip

Let $\Gamma$ be the $N \times N$ matrix whose $(i,j)-$th element is $f( \beta_i, \beta_j )$. Then, our data matrix $X$ has the following form 
\begin{align}
x_{ij} = f( \beta_i, \beta_j ) + \epsilon_{ij}
\end{align} 
where $\epsilon_{ij}$ are independent errors with zero mean, satisfying the restriction that $| x_{ij} | < 1$ almost surely. For example, X may be the adjacency matrix of a random graph where the probability of an edge existing between vertices $i$ and $j$ is $f( \beta_i, \beta_j )$.

%%-------------------------------------------------------------------------%%
\newpage

\section{Conclusion.}
\label{Section5}

Our contributions to the literature are twofold: \textit{(i)} we propose a novel regression-based risk matrix for modelling tail dependence based on graph structures; and \textit{(ii)} we propose a statistical procedure for feature selection in graphical models. Furthermore, we focus on the interpretation of the risk matrix for optimal portfolio allocation problems as well as a mechanism to introduce our novel variable selection algorithm. Our study discusses important aspects related to the robust estimation of large tail forecast covariance-type matrices as well as the implementation of feature ordering procedures using the proposed graph-based matrix.

Obviously, the drawback of our proposition is the related challenges in evaluating the predictive accuracy of the tail risk matrix since the elements of the matrix consist of risk measures such as the Value-at-Risk and Conditional-Value-at-Risk which are tail estimates of the underline distributions. Secondly, the formal study of our tail risk matrix as a suitable representation for optimal portfolio choice problems is crucial since this is a novel aspect not previously proposed in the literature. Our motivation in this paper although is from the optimal portfolio allocation perspective in financial networks, we focus on the aspect of the induced centrality measure for the proposed novel risk matrix.

Some future research worth mentioning include the formal study of the asymptotic properties of the proposed risk matrix. Although we assume that we employ stationary time series observations our framework can be extended to accommodate features of non-stationary time series. We leave the particular aspect for future research. A second aspect of interest are the time series properties of regressors when estimating the risk measures of VaR and CoVaR. Although the risk management literature usually operates under the assumption of stationarity, an interesting aspect for future research is the effect of non-stationarity when estimating our novel risk matrix. We leave the particular aspect for future research. Although in this paper, we do not consider a formal statistical methodology for edge exclusion this is certainly an aspect which worth future research in subsequent studies. Furthermore, our proposed framework can be employed as a methodology for covariate screening in high dimensional environments. We leave the particular aspect for future research.

\bigskip

\paragraph{Acknowledgements} 

\begin{small} I wish to thank my main PhD supervisor Prof. Jose Olmo for guidance and continuous encouragement throughout the PhD programme as well as Prof. Peter W. Smith for helpful discussion and for stimulating the further development of the current study in the direction of node exclusion and feature selection. In addition, I wish to thank Zudi Lu, Jean-Yves Pitarakis, Tassos Magdalinos, Hector Calvo Pardo, Zacharias  Maniadis, Ruben Sanchez Garcia and Tullio Mancini as well as Departmental Seminar Series speakers: Luis E. Candelaria, Juan Carlos Escanciano, Marcelo C. Medeiros, Sebastian  Engelke, Markus Pelger and Loriano Mancini for helpful conversations. 

The author acknowledge the use of Iridis 5 HPC Facility and associated support services at the University of Southampton in the completion of this work. 

\paragraph{Funding} Financial support from the Vice-Chancellor's PhD scholarship of the University of Southampton is gratefully acknowledged. The author declares no conflicts of interests. 
\end{small}

%%-------------------------------------------------------------------------%%
\newpage 

\section{Appendix A. Background Literature}

\begin{small}

\subsection{Centrality Measures from network theory}

Firstly, for the network representation we construct an adjacency matrix induced via the econometric specification (i.e., CoVaR regressions) of the VaR-CoVaR matrix. The particular pairwise stock-return to stock-return regressions of financial institutions give an economic interpretation to our proposed adjacency matrix and as a result the links between the nodes of the network can be  interpreted in terms of their level of financial connectedness. This approach allows to both test the degree of financial connectedness in the network, via the adjacency matrix which reflects the existence (in a statistical sense) of financial connectedness between the assets under tail events, and construct the network topology using centrality measures. Secondly, we propose a novel centrality measure which is applied directly on the financial connectedness matrix and thus provide a robust way of measuring the centrality of assets via the network tail driven risk matrix. Centrality measures are often used as a way to provide a statistical representation of how connected a node is and to access spillover effects within the network. The introduction of centrality measures can answer relevant questions within the framework of financial networks such as, "What is the most vulnerable to economic shocks node?" or "What is the level of interconnectedness of core versus periphery nodes?". Furthermore, these centrality measures can provide network information related to (i) the properties of local topology\footnote{For example, the weighted characteristic path length measures the average shortest path, from one node to any other node in the network. Thus, it can be interpreted as the shortest number of connections a shock from a node needs to reach another connected node in the network.} via measures such as degree centrality and page rank and (ii) global information such as closeness centrality and betweenness centrality. Let $\lambda_i$ and $v_i$ denote the eigenvalue and corresponding eigenvector of the adjacency matrix $\mathbf{A}$ for a set of nodes $N$ and let $\mathbf{1}$ represent the unit vector, i.e., $\mathbf{1} = (1,...,1)$. 

\subsubsection{Network centrality measures}

\begin{itemize}
    \item \textit{Katz centrality}. Proposed by \cite{katz1953new}, for a symmetric adjacency matrix $\mathbf{A}$ with a vector of centrality scores for the nodes of the network given by
\begin{align}
    KC_i(\alpha ) = [ ( \mathbf{I} - \alpha \mathbf{A}^{\top})^{-1} \mathbf{1}  ]_i = \sum_{j=0}^{\infty} \alpha^j \sum_{i=1}^N \lambda_i^k v_i v_i^{\top} \mathbf{1} 
\end{align}
The benefits of Katz centrality is that it the centrality score of nodes can be decomposed into two components, i.e., the idiosyncratic centrality and the system-related centrality (the centrality passed to it in proportion to how important its neighbours are). Due to the construction of katz centrality of capturing the influence of nodes (i.e., the nodes which a node is connected to) is considered as a robust centrality measure in capturing financial contagion and risk transmission since it captures the influence of financial institutions due to connectedness induced by the underline network topology. 

\item \textit{Page rank centrality}. A modification of the katz centrality is the page rank centrality, which corrects for the contribution of neighbouring nodes on the impact each node has within the network. More specifically, with the eigenvector and katz centrality, there is no distinction between degree centrality and the level of connectedness of these neighbouring nodes. For example, low degree nodes may receive a high score because they are connected to very high degree nodes, even though they may have low degree centrality. Let $d_j$ be the degree centrality of node $j$. Then, page rank centrality scales the contribution of node $i$'s neighbours, $j$, to the centrality of node $i$ by the degree of $i$ given by the following expression
\begin{align}
 PR_i = \alpha \sum_{j=1}^N A_{ji} \frac{v_j}{d_j} + \beta = \mathbf{\beta}(\mathbf{I} - \alpha\mathbf{D}^{-1} \mathbf{A})^{-1} 
\end{align}

%%-------------------------------------------------------------------------%%
\newpage

\item \textit{Closeness centrality}. It access the centrality of a node at the local neighbourhood level. For example, the larger the closeness centrality of an institution the faster the influence in the other nodes of the network since it requires fewer steps for an impact to reach other nodes. The normalized closeness centrality of a node is computed as
\begin{align}
   CC_i= \frac{N-1}{\sum_{j=1}^N d_{ij} }
\end{align}

\item \textit{Betweeness centrality}. Considered for example, two financial institutions which have large betweenness centrality, this implies that the pair is important is the transmission of shocks. It is defined as the ratio of the total number of all shortest paths in the network that go via this node and the number of all other shortest paths that do not pass this node. 
\begin{align}
\label{equ:beast}
  CB_i= \sum_{s \neq t \neq j} \frac{\sigma_{st}(i)}{\sigma_{st}}
\end{align}

\item \textit{Leverage centrality}. Leverage centrality considers the degree of a node relative to its neighbours and is based on the principle that a node in a network is central if its immediate neighbours rely on that node for information\footnote{A node with negative leverage centrality is influenced by its neighbors, as the neighbors connect and interact with far more nodes. A node with positive leverage centrality, on the other hand, influences its neighbors since the neighbors tend to have far fewer connections (e.g., see \cite{vargas2017graph} }). The leverage centrality is computed as 
\begin{align}
 LC_i =  \frac{1}{d_i} \sum_{i \in \mathcal{N}_i} \frac{ d_i - d_j }{ d_i + d_j } 
\end{align}

\item \textit{Eigenvector centrality}. The eigenvector of node $i$ is equal to the leading eigenvector $\mathbf{v}_i$ and is computed using the characteristic equation of the adjacency matrix. Thus, the EC is defined
\begin{align}
 \mathbf{v}_i = \sum_{ j \in N(i) } \mathbf{v}_j = \sum_j A_{ij}  \mathbf{v}_j
\end{align}
Thus, we can see that the above definition of the eigenvector centrality implies that it depends on both the number of neighbours $|N(i)|$ and the quality of its connections $\mathbf{v}_j$, for $j \in N(i)$.  

\end{itemize}

We mainly focus on the above centrality measures\footnote{Other centrality measures include for example, the Bonacich network centrality measure proposed by \cite{bonacich1987power} which counts the number of all paths (not just shortest paths) that emanate from a given
node, weighted by a decay factor that decreases with the length of these paths.} for constructing the network as explained in the next section. In terms of the optimal asset allocation problem, such centrality measures permits to access the sensitivity of assets to the network topology and thus the effect to the portfolio perfrormance\footnote{We should note also that when choosing multiple metrics to describe the network topology these may be correlated and so an appropriate normalization may be needed to have a complete set of orthogonal centrality metrics (see, e.g., \cite{van2014graph}).}.

%%-------------------------------------------------------------------------%%
\newpage 

\subsection{Variance-Covariance Matrix as Adjacency Matrix}

\begin{example}
Define the weighted adjacency matrix $\mathbf{\Omega} \equiv 1$ if $i=j$ and $\mathbf{\Omega} \equiv \rho_{ij}$ if $i \neq j$, then the covariance matrix of returns, $\mathbf{\Sigma} = [\sigma_{ij}]$ i.e., $\sigma_{ii}=\sigma_i^2 \ \forall i=j$ can be decomposed as below
\begin{align}
\mathbf{\Sigma} = \mathbf{D} \mathbf{\Omega} \mathbf{D} \ \ \text{where} \ \ \mathbf{D}= \text{diag} \big\{ \sigma_{11},...,\sigma_{NN} \big\}. 
\end{align} 
Also, the adjacency matrix can be decomposed further using the following decomposition
\begin{align}
\mathbf{\Omega} = \mathbf{P} \mathbf{\Lambda} \mathbf{P}^{\top} \ \ \text{where} \ \ \mathbf{\Lambda}= \text{diag}[ \sqrt{\lambda_{ii}}] 
\end{align}
where $\mathbf{P}= [\mathbf{v}_1,...,\mathbf{v}_N ]$ i.e., the columns of the orthogonal matrix $\mathbf{P}$ consists of the set of eigenvectors of the adjacency matrix. The above matrix decompositions induced specifically a correlation-based adjacency matrix and so the optimal weights are functions of the particular measures, inducing a correlation driven network representation of the interactions. These two decompositions allow to easily compute the inverse of the adjacency matrix as a function of its eigenvalue and eigenvectors which in turn can relate to the eigenvalue centrality, a measure of the network centrality, given by $\mathbf{\Omega}^{-1} = \sum_{k=1}^N \frac{1}{\lambda_k} \mathbf{v}_k \mathbf{v}^{\top}_k$.
\end{example}

More specifically, using the particular network representation, it can be proved that the optimal portfolio weights are functions of the centrality of the assets in the portfolio. 
\begin{align}
\mathbf{w}^{*}_{(s)} = \phi_{(s)} .\ \mathbf{m}_{(s)} + \phi_{(s)} \bigg(\frac{1}{ \lambda_1} - 1 \bigg) \mathbf{m}_M . \mathbf{v}_1 + \Gamma_{(s)}
\end{align}
where $s$ denotes the type of strategy, $\mathbf{m}_{(s)}$ corresponds to $\mathbf{\epsilon}$ in the case of minimum variance strategy and $\hat{\mathbf{\mu}}^{e}$ in the case of mean-variance strategy. The term $\Gamma_{(s)}$ represents a derived function of the eigenvalues and eigenvectors of the adjacency matrix. In the case of the minimum-variance strategy this is given by
\begin{align}
 \Gamma_{(minv)} = \displaystyle \phi_{(minv)} \bigg\{ \sum_{k=2}^N \bigg(\frac{1}{ \lambda_1} - 1 \bigg) \mathbf{v}_k \mathbf{v}^{'}_k \bigg\} \mathbf{\epsilon}
\end{align}
where $\phi_{(minv)}= \displaystyle \frac{1}{\mathbf{1}^{\top} \mathbf{\Sigma}^{-1} \mathbf{1}}$ and $\displaystyle  \lambda_{1} = \text{max}_{x \neq 0} \frac{x^{\top} \Omega x}{x^{\top}x}$ the largest eigenvalue of the adjacency matrix an important feature of the network topology. The eigenvector centrality of node $i$, used in the above formulation of the optimal weights, is a measure describing the network topology and is computed by $v_i = \lambda^{-1} \sum_{j=1}^N \Omega_{ij} v_j$. This measure induces the network topology by giving emphasis on highly connected nodes since a large value of $v_i$ corresponds to a node which is connected either to many other nodes or to just few highly central nodes.     

\begin{remark}
Based on the above formulations \cite{peralta2016network} find a negative relationship between optimal portfolio weights and the centrality of assets. Therefore, strongly embedded stocks in a correlation based network affect the market stability and thus the inclusion of such assets in a portfolio undermines the benefits of diversification resulting in larger variances. Generally, a network investment strategy according to the authors makes more efficient use of fundamental information, resulting in a substantial reduction of wealth misallocation. In other words, the current methodology simplifies the portfolio selection process by targeting a group of stocks within a certain range of network centrality.    
\end{remark}

Secondly, \cite{anufriev2015connecting}, provide also an examination of the financial connectedness of the Australian financial institutions. The particular study is close the literature regarding the construction of a financial network using time series data (e.g., \cite{billio2012econometric}, \cite{diebold2012better}, \cite{diebold2014network}, \cite{barigozzi2017network}). However, their approach is also close to \cite{peralta2016network} since the authors establish the various links of the financial dependencies using the metrics and centrality measures of the network theory, however the commonality of this stream of literature is that the network is induced via the use of aggregate information and publicly available information related to the financial conditions of the nodes and the economy as a whole. 

%%-------------------------------------------------------------------------%%
\newpage 

Moreover, \cite{gupta2017credit}, construct a financial network via the use of a loan network which allows to examine the credit market spillovers. This approach uses data on the firm level by combining information of corporate loans to US firms and capturing the dynamic evolution of the network via the connections between the loans of the firms in the network. The proposed construction of a partial correlation network of \cite{anufriev2015connecting} allows to observe how a unit variance shock may spread through the network with the use of an adjacency matrix which represents the partial correlations of share prices of financial institutions. The partial correlations are related to the unconditional covariance matrix and to the precision matrix or the inverse covariance matrix, i.e., $\mathbf{K} = \mathbf{\Sigma}^{-1}$,  
\begin{align}
\rho_{ij | . } = \frac{ - k_{ij} }{ \sqrt{ k_{ii} }{ k_{jj}}}
\end{align}
Thus, the adjacency matrix is defined as $\mathbf{P} = \rho_{ij | . }$ for $i \neq j$ and  $\mathbf{P} = 0$, for $i = j$ which can be further decomposed as below
\begin{align}
\mathbf{P} = \mathbf{I} - \mathbf{D}^{-1/2}_{\mathbf{K}} \mathbf{K} \mathbf{D}^{-1/2}_{\mathbf{K}}, \ \text{where} \ \mathbf{D}_{\mathbf{K}} = \text{diag}[k_{11},...,k_{NN}] \equiv \mathbf{D}^{-1}_{\mathbf{\Sigma}}
\end{align}
Furthermore, the authors show that the system of linear regression equations which capture the financial dependencies in the network, it contains the proposed partial correlation adjacency matrix and corrects for the endogenous effect of variables used to capture these linear dependencies between the nodes of the financial network is given as
\begin{align}
X_i - m =  \mathbf{D}^{-1/2}_{\mathbf{K}} \mathbf{K} \mathbf{D}^{-1/2}_{\mathbf{K}} ( X_j - m) + \epsilon
\end{align}
   
The particular study verifies the network effects and the almost stylized facts by now in the literature of financial connectedness that the Australian financial institutions show a dramatic increase in all the various measures of centrality post the financial crisis of 2008. From a policy maker perspective the knowledge of such an indication during periods of financial turbulence would be considered as an early warning indication of potential systemic risk amplification and financial contagion. However, a drawback of this method, is that the network is not directional and thus it does not allow to capture any conditional tail dependencies similar to the methodology we are proposing. On the other hand, we are interested to examine the effects of network topology i.e., how the level of financial connectedness affects the asset sensitivity and the dynamic asset allocation decisions of investors. For this reason, even though the above literature provides compelling methodologies for constructing a network by decomposing for example the variance-covariance matrix into an adjacency matrix. Further applications include, \cite{huttner2016portfolio} who examine the implementation of portfolio allocation in graphs. \cite{davison1991exact} examine an exact conditional test based on partial correlations. The idea of recursively updating the eigenvalues of the covariance matrix are presented by \cite{yu1991recursive}. 

\end{small}

%%-------------------------------------------------------------------------%%
\newpage 

\section{Appendix B. Statistical Inference for entries of large precision matrices}

As a learning exercise to obtain useful asymptotic theory for the proposed risk matrix we follow the excellent derivations presented in the study of \cite{chang2018confidence}. In particular, consider the statistical inference for high-dimensional precision matrices. Specifically, we propose a data-driven procedure for constructing a class of simultaneous confidence regions for a subset of the entries of a large precision matrix. These confidence regions can be applied to test for specific structures of a precision matrix, and to recover its nonzero components. Let $\boldsymbol{y}_1,..., \boldsymbol{y}_n$ be $n$ observations from an $\mathbb{R}^p-$valued time series, where $\boldsymbol{y}_t = \big( y_{1,t},..., y_{p,t} \big)^{\prime}$ and each $\boldsymbol{y}_t$ has the constant first two moments, such that $\mathbb{E} ( \boldsymbol{y}_t ) = \boldsymbol{\mu}$ and $\mathsf{Cov} \left( \boldsymbol{y}_t \right) = \boldsymbol{\Sigma}$ for each $t$. Moreover, let $\boldsymbol{\Omega} = \boldsymbol{\Sigma}^{-1}$ be the precision matrix. We assume that $\left\{ \boldsymbol{y}_t \right\}$ is $\beta-$mixing in the sence that $\beta_k \to 0$ as $k \to \infty$, where
\begin{align}
\beta_k = \underset{ t }{ \mathsf{sup} } \ \mathbb{E} \left\{ \underset{ B \in \mathcal{F}_{t+k}^{ \infty }  }{ \mathsf{sup} } \left| \mathbb{P} \big( B | \mathcal{F}_{- \infty}^t \big) - \mathbb{P} (B) \right| \right\}.     
\end{align}

Specifically, we have that $\mathcal{F}_{- \infty}^t$ and $\mathcal{F}_{t+k}^{\infty}$ are the $\sigma-$fields generated respectively by $\left\{ \boldsymbol{y}_u \right\}_{ u \leq t }$ and $\left\{ \boldsymbol{y}_u \right\}_{ u \geq t + u }$. Notice that $\beta-$mixing is a mild condition for time series. It is known that causal ARMA processes with continuous innovation distributions, stationary Markov chains under some mild conditions and stationary GARCH models with finite second momements and continuous innovation distributions are all $\beta-$mixing.  

\subsection{Main Estimation Results}

Recall the relationship between a precision matrix and nodewise regressions. For a random vector $\boldsymbol{y} = \big( y_1,..., y_p \big)^{\prime}$ with mean $\boldsymbol{\mu} = \boldsymbol{0}$ and covariance $\boldsymbol{\Sigma}$, we consider the $p$ nodewise regressions such that
\begin{align}
y_{j_1} := \sum_{ j_2 \neq j_1 } \alpha_{ j_1, j_2 } y_{j_2} + \epsilon_{j_1}, \ \ \ j_1 = 1,...,p.    
\end{align}

Consider that $\boldsymbol{y}_{}- j_1 := \left\{ y_{j_2}: j_2 \neq j_1  \right\}$. Moreover, the regression error $\epsilon_{j_1}$ is uncorrelated with $\boldsymbol{y}_{-j_1}$ if and only if $\alpha_{ j_1, j_2 } = - \frac{ \omega_{j_1, j_2} }{  \omega_{j_1, j_1}  }$ for any $j_2 \neq j_1$. Under this condition it holds that, $\mathsf{Cov} \big( \epsilon_{j_1}, \epsilon_{j_2} \big) = \frac{ \omega_{ j_1, j_2 } }{ \omega_{ j_1,j_1} \omega_{ j_2,j_2} }$ for any $j_1$ and $j_2$. 

Moreover, let $\boldsymbol{\epsilon} = \big( \epsilon_1,..., \epsilon_p    \big)^{\prime}$ and $\boldsymbol{V} = \mathsf{Cov} ( \boldsymbol{\epsilon} ) = \left( v_{j_1, j_2} \right)_{ p \times p }$. Then, it holds that
\begin{align}
\boldsymbol{\Omega} = \left\{ \mathsf{diag} (\boldsymbol{V}) \right\}^{-1} \boldsymbol{V} \left\{ \mathsf{diag} (\boldsymbol{V}) \right\}^{-1}   
\end{align}
This relationship between $\boldsymbol{\Omega}$ and $\boldsymbol{V}$ provides a way to learn $\boldsymbol{\Omega}$ by the regression errors. In particular, since the error vector $\boldsymbol{\epsilon}$ is unobserved in practice, its "proxy" - the residuals of the node-wise regressions - can be employed to estimate the matrix $\boldsymbol{V}$. Let $\boldsymbol{\alpha}_j = \big( \alpha_{j,1},..., \alpha_{j,j-1}, - 1, \alpha_{j,j+1},..., \alpha_{j,p}  \big)^{\prime}$. For each $j = 1,..., p$, we may fit the high-dimensional linear regression given by 
\begin{align}
y_{j,t} = \sum_{ k \neq j } \alpha_{j,k} y_{k,t} + \varepsilon_{j,t}, \ \ \ t = 1,...,n.    
\end{align}
by Lasso shrinkage. Moreover, in the case that $\boldsymbol{\mu} \neq \boldsymbol{0}$, the regression specification is fitted using the centered data, $\boldsymbol{y}_t - \bar{\boldsymbol{y}}$, where $\bar{\boldsymbol{y}} = n^{-1} \sum_{t=1}^n \boldsymbol{y}_t$ is the sample mean.

%%-------------------------------------------------------------------------%%
\newpage 

Let $\widehat{\boldsymbol{\alpha}}_j$ be the Lasso estimator of $\boldsymbol{\alpha}_j$ defined as follows
\begin{align}
\widehat{\boldsymbol{\alpha}}_j = \underset{ \boldsymbol{\gamma} \in \Theta_j  }{ \mathsf{argmin} } \ \left[ \frac{1}{n} \sum_{t=1}^n \left( \boldsymbol{\gamma}^{\prime} \boldsymbol{y}_t \right)^2 + 2 \lambda_j | \boldsymbol{\gamma} |_1 \right],   
\end{align}
where $\Theta_j = \big\{ \boldsymbol{\gamma} = \left( \gamma_1,..., \gamma_p  \right)^{\prime} \in \mathbb{R}^p: \gamma_j = - 1 \big\}$ and $\lambda_j$ is the tuning parameter. Moreover, for each $t$, the residual $\widehat{\epsilon}_{j,t} = - \widehat{\boldsymbol{\alpha}}_j^{\prime} \boldsymbol{y}_t$ provides an estimate of $\epsilon_{j,t}$. 

Write with $\widehat{\boldsymbol{\epsilon}}_t := \big( \widehat{\epsilon}_{1,t},...,  \widehat{\epsilon}_{p,t} \big)^{\prime}$ and let $\widetilde{\boldsymbol{V}} = \left( \widetilde{v}_{j_1,j_2} \right)_{p \times p}$ be the sample covariance of $\left\{ \widehat{\boldsymbol{\epsilon}}_t \right\}_{t=1}^n$, where
\begin{align}
\widetilde{v}_{j_1, j_2} = n^{-1} \sum_{t=1}^n \varepsilon_{j_1,t} \varepsilon_{j_2,t}   
\end{align}
In particular, it is well-known that $n^{-1} \sum_{t=1}^n \epsilon_{j_1,t} \epsilon_{j_2,t}$ is an unbiased estimator of $v_{j_1,j_2}$, however replacing $\epsilon_{j_1,t}$ by $\widehat{\epsilon}_{j_1,t}$ will incur a bias term. Specifically, based on some sparsity conditions on $\boldsymbol{\Omega}$ and the growth rate of $p$ with respect to $n$, it holds that
\begin{align*}
\widetilde{v}_{j_1,j_2} - \frac{1}{n} \sum_{t=1}^n \epsilon_{j_1,t} \epsilon_{j_2,t} 
&= 
- \left( \widehat{\alpha}_{j_1,j_2} - \alpha_{j_1,j_2}   \right) \left( \frac{1}{n} \sum_{t=1}^n \epsilon_{j_2,t}^2 \right) \boldsymbol{1} \left\{ j_1 \neq j_2 \right\}  
\\
&\ \ \ - \left( \widehat{\alpha}_{j_2,j_1} - \alpha_{j_2,j_1} \right) \left( \frac{1}{n} \sum_{t=1}^n \epsilon_{j_1,t}^2 \right) \boldsymbol{1} \left\{ j_1 \neq j_2 \right\} + o_p \left(  \left\{ n \mathsf{log}(p) \right\}^{-1/2} \right). 
\end{align*}
Thus, to eliminate the bias, we employ an estimator for $v_{j_1,j_2}$ such that
\begin{align}
\widehat{v}_{j_1,j_2} :=
\begin{cases}
\displaystyle - \frac{1}{n} \sum_{t=1}^n \left( \widehat{\epsilon}_{j_1,t} \widehat{\epsilon}_{j_2,t} + \widehat{\alpha}_{j_1,j_2} \widehat{\epsilon}^2_{j_2,t} + \widehat{\alpha}_{j_2,j_1} \widehat{\epsilon}^2_{j_1,t} \right),  & \ j_1 \neq j_2  
\\
\displaystyle \frac{1}{n} \sum_{t=1}^n \widehat{\epsilon}_{j_1,t} \widehat{\epsilon}_{j_2,t},  & \ j_1 = j_2  
\end{cases}
\end{align}
Therefore, this implies that the elements of the inverse of the covariance matrix (precision matrix), $\omega_{j_1,j_2}$, can be estimated as below
\begin{align}
\widehat{\omega}_{j_1,j_2} = \frac{ \widehat{v}_{ j_1, j_2 } }{ \widehat{v}_{ j_1,j_1} \widehat{v}_{ j_2,j_2} }
\end{align}
for any $j_1$ and $j_2$. 

Notice that due to the well-behaved asymptotic theory above, similar limit results are valid in the case we consider other penalized methods such as the Dantzig estimation and the scaled Lasso. 

\medskip

\begin{condition}
The eigenvalues of the covariance matrix $\boldsymbol{\Sigma}$ are uniformly bounded away from zero and infinity.     
\end{condition}

\begin{condition}
There exist constants $K_3 > 0$ and $\gamma_3 > 0$ independent of $p$ and $n$ such that the $\beta-$mixing coefficient is bounded for any positive $k$, with $\beta_k \leq \mathsf{exp} \left( - K_3 k^{\gamma_3} \right)$.      
\end{condition}

%%-------------------------------------------------------------------------%%
\newpage 

\subsection{Confidence Regions}

Denote with $\boldsymbol{\Delta} = - n^{-1} \sum_{t=1}^n \left( \boldsymbol{\epsilon}_t \boldsymbol{\epsilon}_t^{\prime} - \boldsymbol{\Delta}   \right)$. Then, it follows that 
\begin{align}
\widehat{\boldsymbol{\Omega}} - \boldsymbol{\Omega} = \boldsymbol{\Pi} + \boldsymbol{ \mathcal{Y} }, \ \ \ \text{with} \ \ \boldsymbol{\Pi} = \left\{ \mathsf{diag}(\boldsymbol{V})\right\}^{-1} \boldsymbol{\Delta} \left\{ \mathsf{diag}(\boldsymbol{V})\right\}^{-1}.  
\end{align}

\begin{condition}
There exists constant $K_4 > 0$ such that
\begin{align}
\underset{ b \to \infty }{ \mathsf{lim inf} } \ \underset{ 1 \leq \ell \leq n + 1 - b }{ \mathsf{inf} } \ \mathbb{E} \left( \left| \frac{1}{ b^{1/2} } \sum_{t= \ell}^{ \ell + b - 1 }  \eta_{j,t} \right| \right) > K_4,   
\end{align}
for each $j = 1,..., r$.
\end{condition}

\medskip

\begin{remark}
Notice that the Condition above is necessary for the validity of the Gaussian approximation for dependent data. Moreover, Davydov inequality entails that 
\begin{align}
\underset{ b \to \infty }{ \mathsf{lim sup} } \ \underset{ 1 \leq \ell \leq n + 1 - b }{ \mathsf{sup} } \ \mathbb{E} \left( \left| \frac{1}{ b^{1/2} } \sum_{t= \ell}^{ \ell + b - 1 }  \eta_{j,t} \right|^2 \right) <  K_5,    
\end{align}
for some universal constant $K_5 > 0$. These two conditions match the requirements of Gaussian approximation imposed on the long-run covariance of $\left\{ \eta_{j,t} \right\}_{ t = \ell }^{ \ell + b - 1 }$, for $j = 1,...,r$ and $\ell = 1,..., n+1 -b$.

Moreover, notice that if $\left\{ \eta_{j,t} \right\}$ is stationary, then it holds that 
\begin{align}
\mathbb{E} \left( \left| \frac{1}{ b^{1/2} } \sum_{t= \ell}^{ \ell + b - 1 }  \eta_{j,t} \right|^2 \right)  = \mathbb{E} \left( \eta_{j,1}^2 \right) + \sum_{k=1}^{b-1} \left(  1 - \frac{k}{b} \right) \mathsf{Cov} \big( \eta_{j,1}, \eta_{j,k+1} \big)
\end{align}
Therefore, under the stationarity assumption on each sequence $\left\{ \eta_{j,t}  \right\}$, the condition above is equivalent to $\sum_{k=0}^{ \infty } \mathsf{Cov} \big( \eta_{j,1}, \eta_{j,k+1} \big) > K_4$ for any $j \in \left\{ 1,..., r \right\}$. 
\end{remark}

\medskip

\begin{remark}
When $p$ grows with $n$, the concept of convergence in distribution does not apply, and thus different tools should be used to derive an appropriate critical value for the test. Furthermore, the distribution of $\mathsf{max}_{ 1 \leq j \leq p } Y_j$ is typically unknown because the covariance structure of $Y$ is unknown.     
\end{remark}

\medskip

\begin{remark}
Notice that within our setting the convergence rate could be slower due to the estimation of the generated covariate. Overall the design matrix is $\boldsymbol{X}_n$ is stochastic and dependent. In addition we might have less restriction on the moment assumption of the noise sequence compare to the strong sub-Gaussian assumption for instance in the change-in-mean of \textit{i.i.d} data discussed in the literature. Consequently, the asymptotics is more subtle and ideas from mixing have to be adopted to establish the tail probability bound. As a result the dependence and the relaxed moment assumption, the convergence rate of the tail-estimates driven covariance matrix is slower than in the case in which the covariance matrix is based on the error estimates of nodewise regressions.        
\end{remark}

%%-------------------------------------------------------------------------%%
\newpage

More specifically, for each $r-$dimensional vector whose $j-$th element is
\begin{align}
\frac{ \epsilon_{ \chi_1(j),t } \epsilon_{ \chi_2(j),t } - \nu_{ \boldsymbol{\chi}(j) } }{  \nu_{ \chi_1(j), \chi_1(j) }  \nu_{ \chi_2(j), \chi_2(j) }  }    
\end{align}
where $\boldsymbol{\chi} (.) = \big\{ \chi_1(.), \chi_2(.) \big\}$ is a bijective mapping from $\left\{ 1,..., r \right\}$ to $\mathcal{S}$ such that $\boldsymbol{\Omega}_{\mathcal{S}} = \left\{  \omega_{ \boldsymbol{\chi}(1) },...,  \omega_{ \boldsymbol{\chi}(r) }  \right\}$. Then, we have that 
\begin{align}
\boldsymbol{\Pi}_{ \mathcal{S} } = - \frac{1}{n} \sum_{t=1}^n \widetilde{\boldsymbol{\sigma}}_t  
\end{align}
Moreover, the long-run covariance matrix of $\left\{  \widetilde{\boldsymbol{\sigma}}_t \right\}_{t=1}^n$ is given by 
\begin{align}
\boldsymbol{W} = \mathbb{E} \left[ \left(  \frac{1}{ \sqrt{n} } \sum_{t=1}^n \widetilde{\boldsymbol{\sigma}}_t \right) \left(  \frac{1}{ \sqrt{n} } \sum_{t=1}^n \widetilde{\boldsymbol{\sigma}}_t \right)^{\prime} \right].    
\end{align}
Let $\boldsymbol{\eta}_t = \big( \eta_{1,t},..., \eta_{r,t}  \big)^{\prime}$, where $\eta_{j,t} = \epsilon_{\chi_1(j),t} \epsilon_{\chi_2(j),t} - \nu_{\boldsymbol{\chi}(j)}$. Then, the matrix $\boldsymbol{W}$ can be written as 
\begin{align}
\boldsymbol{W} = \boldsymbol{H} \mathbb{E} \left[ \left(  \frac{1}{ \sqrt{n} } \sum_{t=1}^n \boldsymbol{\eta}_t \right) \left(  \frac{1}{ \sqrt{n} } \sum_{t=1}^n \boldsymbol{\eta}_t \right)^{\prime} \right] \boldsymbol{H}
\end{align}
where 
\begin{align}
\boldsymbol{H} = \mathsf{diag} \left\{  \nu^{-1}_{ \chi_1(1), \chi_1(1) } \nu^{-1}_{ \chi_2(1), \chi_2(1) },..., \nu^{-1}_{ \chi_1(r), \chi_1(r) } \nu^{-1}_{ \chi_2(r), \chi_2(r) }   \right\}    
\end{align}

To study the asymptotical distribution of the average of the temporally dependent sequence $\left\{ \widetilde{\boldsymbol{\sigma}}_t \right\}_{t=1}^n$ and its long-run covariance $\boldsymbol{W}$, we introduce the following condition on $\left\{ \boldsymbol{\eta} \right\}_{t=1}^n$. 

\medskip

Suppose that we are interested in approximating the distribution of $n^{1/2} \left| \widehat{\boldsymbol{\Omega}}_{\mathcal{S}} - \boldsymbol{\Omega}_{\mathcal{S}} \right|_{\infty}$. 

\medskip

\begin{theorem}
Let $\boldsymbol{\xi} \sim \mathcal{N} \big( \boldsymbol{0}, \boldsymbol{W}  \big)$. Under the previous conditions we have that 
\begin{align}
\underset{ x > 0 }{ \mathsf{sup} } \left|  \mathbb{P} \left( n^{1/2} \left| \widehat{\boldsymbol{\Omega}}_{\mathcal{S}} - \boldsymbol{\Omega}_{\mathcal{S}} \right|_{\infty} > x \right) - \mathbb{P} \big( \left| \boldsymbol{\xi} \right|_{\infty} > x \big) \right| \to 0, \ \ \text{as} \ \ n \to \infty.    
\end{align}
In other words, the Kolmogorov distance between the distributions of $n^{1/2} \left| \widehat{\boldsymbol{\Omega}}_{\mathcal{S}} - \boldsymbol{\Omega}_{\mathcal{S}} \right|_{\infty}$ and $\left| \boldsymbol{\xi} \right|_{\infty}$ converges to zero. On the other hand, notice that $\left| \widehat{\boldsymbol{\Omega}}_{\mathcal{S}} - \boldsymbol{\Omega}_{\mathcal{S}} \right|$ may converge weakly to an extreme value distribution, which will require some further assumptions on the structure of $\boldsymbol{W}$. 
\end{theorem}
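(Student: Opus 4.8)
The plan is to reduce the statement to a high-dimensional central limit theorem for the leading linear term, and then transfer the Gaussian approximation across the smoothing-error gap by anti-concentration.

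\emph{Step 1 (linearization).} Starting from the decomposition $\widehat{\boldsymbol{\Omega}} - \boldsymbol{\Omega} = \boldsymbol{\Pi} + \boldsymbol{\mathcal{Y}}$ and its restriction $\boldsymbol{\Pi}_{\mathcal{S}} = - n^{-1} \sum_{t=1}^n \widetilde{\boldsymbol{\sigma}}_t$, the first task is to show that the remainder is negligible at the relevant scale, namely
\begin{align*}
n^{1/2} \big| \boldsymbol{\mathcal{Y}}_{\mathcal{S}} \big|_{\infty} = o_p \big( ( \mathsf{log} \, r )^{-1/2} \big).
\end{align*}
This follows by plugging the bias-corrected estimator $\widehat{v}_{j_1,j_2}$ into the expansion for $\widetilde{v}_{j_1,j_2} - n^{-1}\sum_t \epsilon_{j_1,t}\epsilon_{j_2,t}$ displayed above, and bounding, uniformly over the index pairs entering $\mathcal{S}$, the Lasso estimation errors $|\widehat{\alpha}_{j,k} - \alpha_{j,k}|$ and the empirical second moments $n^{-1}\sum_t \widehat{\epsilon}_{j,t}^2$. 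Under Conditions 1--2 (bounded spectrum of $\boldsymbol{\Sigma}$, geometric $\beta$-mixing), the sub-Gaussian tails of Assumption A1, and sparsity of $\boldsymbol{\Omega}$, the oracle inequalities for the Lasso under dependence give errors of order $\sqrt{s\,\mathsf{log}(p)/n}$, which is $o((\mathsf{log}\,r)^{-1/2})$ in the regime where $p$ (hence $r\le p^2$) grows at most polynomially in $n$; the product structure of the remainder then yields the displayed rate. Thus it suffices to approximate the law of $\big| n^{-1/2}\sum_{t=1}^n \widetilde{\boldsymbol{\sigma}}_t \big|_{\infty}$.

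\emph{Step 2 (Gaussian approximation for the $\beta$-mixing sum).} Partition $\{1,\dots,n\}$ into alternating ``big'' blocks of length $b=b_n\to\infty$ and ``small'' separating blocks of length $q_n=o(b_n)$, and write $n^{-1/2}\sum_t \widetilde{\boldsymbol{\sigma}}_t$ as the sum of the big-block sums plus a small-block remainder. Geometric $\beta$-mixing (Condition 2) lets one couple the big-block sums with an independent sequence up to an error $\asymp (n/b)\,\mathsf{exp}(-K_3 b^{\gamma_3})$, which is $o(1)$ for $b\asymp (\mathsf{log}\,n)^{2/\gamma_3}$; the small-block contribution is $o_p(1)$ in sup-norm by a union bound using the moments of $\widetilde{\boldsymbol{\sigma}}_t$ from Assumption A1. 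For the resulting row-independent array one invokes a high-dimensional central limit theorem over hyperrectangles (in the spirit of Chernozhukov, Chetverikov and Kato, and its dependent-data refinements), whose error is polynomial in $\mathsf{log}\,r$ and inverse-polynomial in the effective sample size $n/b$; the non-degeneracy this needs is supplied by Condition 4 and the Davydov lower bound, which force the diagonal of $\boldsymbol{W}$ away from zero. This yields
\begin{align*}
\underset{x>0}{\mathsf{sup}}\ \left| \mathbb{P}\Big( \big| n^{-1/2}\textstyle\sum_{t=1}^n \widetilde{\boldsymbol{\sigma}}_t \big|_{\infty} > x \Big) - \mathbb{P}\big( |\boldsymbol{\xi}|_{\infty} > x \big) \right| \to 0 , \qquad \boldsymbol{\xi}\sim\mathcal{N}(\boldsymbol{0},\boldsymbol{W}).
\end{align*}

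\emph{Step 3 (transfer via anti-concentration).} Choose a deterministic $\delta_n\to 0$ with $\delta_n\sqrt{\mathsf{log}\,r}\to 0$ and $n^{1/2}|\boldsymbol{\mathcal{Y}}_{\mathcal{S}}|_{\infty} = o_p(\delta_n)$, possible by Step 1. On an event of probability $1-o(1)$ one sandwiches $n^{1/2}|\widehat{\boldsymbol{\Omega}}_{\mathcal{S}}-\boldsymbol{\Omega}_{\mathcal{S}}|_{\infty}$ between $\big|n^{-1/2}\sum_t\widetilde{\boldsymbol{\sigma}}_t\big|_{\infty}\pm\delta_n$, and Nazarov's anti-concentration inequality, again using that the diagonal entries of $\boldsymbol{W}$ are bounded below, gives $\mathsf{sup}_x\,\mathbb{P}\big(x\le|\boldsymbol{\xi}|_{\infty}\le x+\delta_n\big)\lesssim \delta_n\sqrt{\mathsf{log}\,r}\to 0$. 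Combining this with the uniform bound of Step 2 closes the argument.

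\emph{Main obstacle.} The delicate part is Step 2: obtaining a genuinely uniform-in-$x$ Gaussian approximation over sup-norm events for a growing-dimensional, temporally dependent triangular array under only the stated mixing and sub-Gaussian conditions. The block length $b_n$ must be tuned to kill simultaneously the mixing coupling error $(n/b)\mathsf{exp}(-K_3 b^{\gamma_3})$, the lost-mass term from the small blocks, and the hyperrectangle-CLT error, which together pin down the admissible growth of $r$ relative to $n$. A secondary difficulty is that $\boldsymbol{\mathcal{Y}}_{\mathcal{S}}$ depends on generated regressors through the nodewise fits, so the rate in Step 1 rests on Lasso oracle inequalities under dependence and relaxed moments rather than the i.i.d.\ ones, which is where most of the bookkeeping goes.
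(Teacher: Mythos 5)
Your proposal is correct in outline and follows essentially the same route the paper intends: the theorem is stated without a self-contained proof, being imported from the reference on confidence regions for entries of large precision matrices cited at the start of Appendix B, and the surrounding material there (the decomposition $\widehat{\boldsymbol{\Omega}}-\boldsymbol{\Omega}=\boldsymbol{\Pi}+\boldsymbol{\mathcal{Y}}$ with $\boldsymbol{\Pi}_{\mathcal{S}}=-n^{-1}\sum_{t}\widetilde{\boldsymbol{\sigma}}_t$, the bias-corrected $\widehat{v}_{j_1,j_2}$ whose expansion leaves an $o_p\big(\{n\,\mathsf{log}(p)\}^{-1/2}\big)$ remainder, and the lower-bound condition described as necessary for the Gaussian approximation for dependent data) corresponds exactly to your Steps 1--3 of linearization, blockwise high-dimensional Gaussian approximation under $\beta$-mixing, and transfer by anti-concentration. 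The only caveat is that your Step 1 requires the paper's implicit sparsity-and-growth condition (roughly of the form $s\,\mathsf{log}^{3/2}(p)/\sqrt{n}\to 0$) rather than merely polynomial growth of $p$ in $n$, but this is the same assumption the paper itself invokes when asserting its displayed remainder bound.
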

Define an estimator of the matrix $\boldsymbol{\Gamma}_k$ such that
\begin{align}
\widehat{\boldsymbol{\Gamma}}_k = 
\begin{cases} 
\displaystyle \frac{1}{n} \sum_{t=k+1}^n \widehat{\boldsymbol{\eta}}_{t}  \widehat{\boldsymbol{\eta}}^{\prime}_{t-k}, k \geq 0  
\\
\displaystyle \frac{1}{n} \sum_{t=-k+1}^n  \widehat{\boldsymbol{\eta}}_{t+k}  \widehat{\boldsymbol{\eta}}^{\prime}_{t}, k < 0
\end{cases}
\end{align}

%%-------------------------------------------------------------------------%%
\newpage 

\section{Appendix C. Asymptotic theory for the VaR-$\Delta$CoVaR risk matrix}

\subsection{Stationary Case}

The asymptotic normality of the VaR estimator, under stationarity assumption, is derived by \cite{he2020inference} in their Theorem 1, which is provided below. Let $\left\{ D_{nt} \right\}$ be a martingale difference array adapted to the filtration 
\begin{align}
\left\{ \mathcal{F}_t \right\}_{ t \geq 0 } = \left\{ \sigma \left(\epsilon_1,..., \epsilon_t, \mathbf{Z}_1,..., \mathbf{Z}_t, \mathbf{Z}_{t+1} \right) \right\}, \ \text{where} \ \mathbf{Z}_1 = \left( 1, \mathbf{X}_1^{\prime} \right)^{\prime} 
\end{align}
such that $\mathbf{X}_1$ is a stationary regressor and $\Omega = \mathbb{E} \left( \mathbf{Z}_1 \mathbf{Z}_1^{\prime} \right)$. 

\begin{theorem}
Under conditions A1-A4 in the paper of He et al (AoS, 2021), for $\alpha \in (0,1)$, 
\begin{align}
\label{normal.expression}
\sqrt{n} \left\{ \widehat{ \text{VaR}}_{\mathbf{x}} \left( \alpha \right) -  \text{VaR}_{ \mathbf{x} } \left( \alpha \right)  \right\}
\overset{ d }{ \to } \mathcal{N} \left( 0, \omega^2 + \sigma^2 \mathbf{z}^{\prime} \Omega^{-1} \mathbf{z} + \Delta \right)
\end{align}
where $\Delta$ and $\omega^2$ as defined in the paper.  
\end{theorem}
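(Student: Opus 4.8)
The statement is quoted directly from \cite{he2020inference}; here I sketch the argument one reproduces, which is the classical Koenker--Bassett convexity route adapted to a stationary, serially dependent predictive-regression design. The estimator $\widehat{\mathrm{VaR}}_{\mathbf{x}}(\alpha)$ is the linear functional $\mathbf{z}'\widehat{\boldsymbol{\theta}}_n$ of a quantile-regression estimator $\widehat{\boldsymbol{\theta}}_n=\arg\min_{\boldsymbol{\theta}}\sum_{t=1}^{n}\rho_\alpha(R_t-\mathbf{Z}_t'\boldsymbol{\theta})$ (if $\mathrm{VaR}$ is built from a two-stage location-plus-residual-quantile procedure, the same steps are applied to each component and then combined through their joint limit). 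First I would recenter and rescale, setting $\widehat{\boldsymbol{\delta}}_n=\sqrt{n}\,(\widehat{\boldsymbol{\theta}}_n-\boldsymbol{\theta}_0)$, the minimiser of the convex process $Z_n(\boldsymbol{\delta})=\sum_{t=1}^{n}\{\rho_\alpha(\epsilon_t-n^{-1/2}\mathbf{Z}_t'\boldsymbol{\delta})-\rho_\alpha(\epsilon_t)\}$, where $\epsilon_t=R_t-\mathbf{Z}_t'\boldsymbol{\theta}_0$ satisfies the conditional quantile restriction $Q_\alpha(\epsilon_t\mid\mathcal{F}_{t-1})=0$. Knight's identity splits $Z_n$ into a linear term $-\boldsymbol{\delta}'\,n^{-1/2}\sum_t\mathbf{Z}_t\psi_\alpha(\epsilon_t)$, with $\psi_\alpha(u)=\alpha-\mathbf{1}\{u<0\}$, and a nonnegative remainder $R_n(\boldsymbol{\delta})=\sum_t\int_0^{\,n^{-1/2}\mathbf{Z}_t'\boldsymbol{\delta}}\{\mathbf{1}(\epsilon_t\le s)-\mathbf{1}(\epsilon_t\le 0)\}\,ds$.

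Next I would show that, for each fixed $\boldsymbol{\delta}$, $R_n(\boldsymbol{\delta})\xrightarrow{p}\tfrac{1}{2}\,f(0)\,\boldsymbol{\delta}'\Omega\boldsymbol{\delta}$: taking conditional expectations given $\mathcal{F}_{t-1}$ and Taylor-expanding the conditional distribution function of $\epsilon_t$ about its $\alpha$-quantile (this is where a boundedness/Lipschitz condition on the conditional error density enters) leaves $\tfrac{1}{2}\,n^{-1}\sum_t f_t(0)(\mathbf{Z}_t'\boldsymbol{\delta})^2$, which converges by the ergodic theorem to $\tfrac{1}{2}\,f(0)\,\boldsymbol{\delta}'\Omega\boldsymbol{\delta}$ with $\Omega=\mathbb{E}[\mathbf{Z}_1\mathbf{Z}_1']$ (with $f(0)\Omega$ replaced by $\mathbb{E}[f_t(0)\mathbf{Z}_t\mathbf{Z}_t']$ under conditional heteroskedasticity), while the centred martingale part of $R_n$ is $o_p(1)$ by a second-moment bound. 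For the linear term I would apply a martingale central limit theorem: $\{\mathbf{Z}_t\psi_\alpha(\epsilon_t)\}$ is a stationary martingale-difference array for the filtration $\{\mathcal{F}_t\}$ displayed above, since $\mathbf{Z}_t$ is $\mathcal{F}_{t-1}$-measurable and $\mathbb{E}[\psi_\alpha(\epsilon_t)\mid\mathcal{F}_{t-1}]=0$; checking the conditional Lindeberg condition (boundedness of $\psi_\alpha$, mild moments of $\mathbf{Z}_t$) and the ergodic convergence $n^{-1}\sum_t\mathbf{Z}_t\mathbf{Z}_t'\,\psi_\alpha(\epsilon_t)^2\xrightarrow{p}\boldsymbol{\Sigma}_\psi$ gives $n^{-1/2}\sum_t\mathbf{Z}_t\psi_\alpha(\epsilon_t)\xrightarrow{d}G\sim\mathcal{N}(0,\boldsymbol{\Sigma}_\psi)$. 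The convexity lemma for stochastic processes then transfers pointwise convergence of the convex $Z_n$ to its argmin, giving $\widehat{\boldsymbol{\delta}}_n\xrightarrow{d}[f(0)\Omega]^{-1}G$, and a continuous-mapping/delta argument yields $\sqrt{n}\,(\widehat{\mathrm{VaR}}_{\mathbf{x}}(\alpha)-\mathrm{VaR}_{\mathbf{x}}(\alpha))=\mathbf{z}'\widehat{\boldsymbol{\delta}}_n\xrightarrow{d}\mathcal{N}\big(0,\,\mathbf{z}'[f(0)\Omega]^{-1}\boldsymbol{\Sigma}_\psi[f(0)\Omega]^{-1}\mathbf{z}\big)$; identifying the three pieces of this variance --- the intrinsic (location / unconditional-quantile) part $\omega^2$, the pure quantile-slope part $\sigma^2\mathbf{z}'\Omega^{-1}\mathbf{z}$, and their interaction $\Delta$ --- recovers the stated form.

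The main obstacle is controlling the remainder $R_n$ under dependence: because $\mathbf{1}(\epsilon_t\le s)$ is discontinuous, passing from the fixed-$\boldsymbol{\delta}$ limit to the uniform-in-$\boldsymbol{\delta}$ control required for the argmin step (equivalently, stochastic equicontinuity of the rescaled objective) and bounding the centred martingale part of $R_n$ calls for a chaining or bracketing argument valid for stationary ergodic (or $\beta$-mixing) data, and it is here that the smoothness of the conditional error law and the moment/mixing conditions on $\mathbf{Z}_t$ are genuinely used. A second, problem-specific subtlety is that if $\mathrm{VaR}$ is evaluated at a data-dependent forecasting state (the most recent $\mathbf{X}_n$) or is assembled in two stages, one must obtain the \emph{joint} asymptotic behaviour --- and in particular the possibly non-zero asymptotic covariance --- of the forecasting-point fluctuation, the location estimator, and $\widehat{\boldsymbol{\delta}}_n$; that joint limit is precisely what produces the extra terms $\omega^2$ and $\Delta$ beyond the textbook $\sigma^2\mathbf{z}'\Omega^{-1}\mathbf{z}$.
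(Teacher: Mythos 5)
There is a genuine mismatch between the estimator your argument analyzes and the estimator the theorem is about. The statement is He et al.'s Theorem 1 for a \emph{two-step} construction: the predictive regression is fitted by least squares and $\widehat{\mathrm{VaR}}_{\mathbf{x}}(\alpha)$ adds the empirical $\alpha$-quantile of the \emph{residuals} to the fitted value at $\mathbf{z}=(1,\mathbf{x}^{\prime})^{\prime}$. That structure is visible in the variance: $\sigma^{2}\,\mathbf{z}^{\prime}\Omega^{-1}\mathbf{z}$ with $\sigma^{2}=\mathrm{Var}(\epsilon_{t})$ is the OLS prediction variance, $\omega^{2}=\alpha(1-\alpha)/\{F_{\epsilon}^{\prime}(F_{\epsilon}^{-1}(\alpha))\}^{2}$ is the sample-quantile variance, and $\Delta=\Delta_{1}+\Delta_{2}$ collects the cross terms, which involve $\mathbb{E}\big[\epsilon_{1}I\big(\epsilon_{1}\leq F_{\epsilon}^{-1}(\alpha)\big)\big]$ and $\mathbb{E}(\mathbf{Z}_{1})$ precisely because the quantile is taken of estimated residuals. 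The bulk of your sketch instead runs the Koenker--Bassett/Knight convexity argument for a one-step linear quantile regression of $R_{t}$ on $\mathbf{Z}_{t}$; that argument is fine for that estimator, but it delivers the sandwich variance $\mathbf{z}^{\prime}[f(0)\Omega]^{-1}\boldsymbol{\Sigma}_{\psi}[f(0)\Omega]^{-1}\mathbf{z}$, i.e.\ $\alpha(1-\alpha)f(0)^{-2}\,\mathbf{z}^{\prime}\Omega^{-1}\mathbf{z}$ under i.i.d.\ errors, which cannot be rearranged into $\omega^{2}+\sigma^{2}\mathbf{z}^{\prime}\Omega^{-1}\mathbf{z}+\Delta$: neither $\sigma^{2}$ nor the truncated moment of $\epsilon$ can arise from the check-function objective. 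Your closing ``identification of the three pieces'' is therefore not a valid step, and the parenthetical remark that a two-stage procedure would be handled ``by combining the components through their joint limit'' is exactly the part of the proof that is missing rather than a detail.

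The route the paper points to (following He et al.) makes this joint limit the whole argument: one establishes an asymptotically linear (Bahadur-type) representation in which $\sqrt{n}\{\widehat{\mathrm{VaR}}_{\mathbf{x}}(\alpha)-\mathrm{VaR}_{\mathbf{x}}(\alpha)\}$ equals $\sum_{t}D_{nt}+o_{p}(1)$ with the explicit martingale difference array $D_{nt}=n^{-1/2}\big\{-\big(I(\epsilon_{t}\leq\gamma)-\alpha\big)/F_{\epsilon}^{\prime}(\gamma)+\epsilon_{t}\mathbf{Z}_{t}^{\top}\Omega^{-1}\mathbf{z}-\epsilon_{t}\mathbf{Z}_{t}^{\top}\Omega^{-1}\mathbb{E}(\mathbf{Z}_{1})\big\}$, whose three summands correspond to the residual-quantile step, the OLS influence at the forecasting point, and the residual-centering correction; the Hall--Heyde martingale CLT then gives normality, and the limiting variance of $D_{nt}$ produces $\omega^{2}$, $\sigma^{2}\mathbf{z}^{\prime}\Omega^{-1}\mathbf{z}$ and $\Delta$ directly. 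To repair your proposal you would need to replace the quantile-regression objective with this two-step analysis: prove the Bahadur representation for the empirical quantile of \emph{estimated} residuals (this is where the $\mathbb{E}(\mathbf{Z}_{1})$ term and the truncated moment enter), combine it with the asymptotic linearity of the least-squares estimator, and only then invoke a martingale CLT.
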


\medskip

\begin{remark}
Notice that the "linear predictive regression" defined in the particular paper, considers the pair  $\left\{ Y_t, X_t \right\}$ at the same lag, so in our setting we aim to adapt the quantile predictive regression specification as in Lee (2016) e.g., to study stationary versus non-stationary case. Therefore, in that case we will need to adapt the particular limiting distribution given by expression \eqref{normal.expression}.  

Then, since the diagonal of the risk matrix includes random variables  $\text{VaR}^{+}_{1},...,\text{VaR}^{+}_{1}$ these elements should follow the same distribution as the corresponding limiting distribution for the case of stationarity (i.e., no LUR specification) or the mixed Gaussian in the case of LUR specification and a possible use of the instrumentation method to control for the abstract degree of persistence. 

The next step, would be to determine the limiting distribution of the elements $\text{CoVaR}_{(i,j)}$ under both stationarity and nonstationarity which will give us to different asymptotic results. More specifically, in the case we use the quantile predictive regression as defined by \cite{lee2016predictive} then, we will need to consider the asymptotic distribution for the IVX estimator under the assumption of generated regressor (i.e., the VaR in the first step specification). All elements in the matrix which correspond to $\text{CoVaR}_{(i,j)}$ should be identically distributed. A possible tail dependence might appear because of employing common predictors when estimating the elements of the risk matrix that correspond to CoVaR for different $i$ and $j$. 

Assuming all above steps are analytical tractable and we obtain the limiting distribution for the different cases, then the challenging step would be to determine the limiting distribution of the term 
\begin{align}
(\text{VaR}^{+}_{i} \text{VaR}^{+}_{j})^{1/2} \Delta \text{CoVaR}_{(i,j)}
\end{align} 
and obviously of the whole risk matrix $\mathbf{\Gamma}$. \end{remark}

%%-------------------------------------------------------------------------%%
\newpage 

Determining the limiting distribution of the proposed risk matrix, will also allow us to do statistical inference on the estimated parameters.  

We begin by examining the limiting distribution of the term 
\begin{align}
\sqrt{n} \left\{ \widehat{ \text{VaR}}_{\mathbf{x}} \left( \alpha \right) - \text{VaR}_{ \mathbf{x} } \left( \alpha \right)  \right\}
\overset{ d }{ \to } \mathcal{N}_1 
\end{align}
under the assumption that we have a quantile predictive regression model with stationary predictors (i.e., no LUR specification). For instance, by defining a suitable martingale difference sequence we can use Theorem 3.2 from Hall and Hyde (1980) to establish the asymptotic normality of our VaR estimator.

Let 
\begin{align}
D_{nt} = \frac{1}{ \sqrt{n} } \left\{ - \frac{ I \left( \epsilon_t \leq \gamma \right) - \alpha }{ F_{\epsilon}^{\prime} \left( \gamma \right)} + \epsilon_t \mathbf{Z}_T^{\top} \Omega^{-1} \mathbf{z} - \epsilon_t \mathbf{Z}_T^{\top} \Omega^{-1} \mathbb{E} \left( \mathbf{Z}_1 \right) \right\}.
\end{align}
For Theorem 1 above we have that $\mathbf{z} = \left( 1, \mathbf{x}^{\top} \right)^{\top}$, $\omega^2 = \frac{ \alpha ( 1 - \alpha )  }{ \left\{ F_{\epsilon}^{\prime} \left( F_{\epsilon}^{-1} \left( \alpha \right) \right) \right\}^2 }$, and $\Delta = \Delta_1 + \Delta_2$ with 
\begin{align}
\Delta_1 &= \sigma^2 \mathbb{E} \left( \mathbf{Z}_1^{\top} \right) \Omega^{-1}  \mathbb{E} \left( \mathbf{Z}_1 \right) + 2 \frac{ \mathbb{E} \left( \epsilon_1 I \left( \epsilon_1 \leq F_{\epsilon}^{-1} \left( \alpha \right) \right) \right) }{   F_{\epsilon}^{\prime} \left( F_{\epsilon}^{-1} \left( \alpha \right) \right)} \mathbb{E} \left( \mathbf{Z}_1^{\top} \right)  \Omega^{-1} \mathbb{E} \left( \mathbf{Z}_1 \right)
\\
\nonumber
\\
\Delta_2 &= - 2 \sigma^2 \mathbb{E} \left( \mathbf{Z}_1^{\top} \right) \Omega^{-1} \mathbf{z} - 2 \frac{ \mathbb{E} \left( \epsilon_1 I \left( \epsilon_1 \leq F_{\epsilon}^{-1} \left( \alpha \right) \right) \right) }{   F_{\epsilon}^{\prime} \left( F_{\epsilon}^{-1} \left( \alpha \right) \right)} \mathbb{E} \left( \mathbf{Z}_1^{\top} \right)  \Omega^{-1} \mathbf{z}. 
\end{align} 

\begin{theorem}
Under conditions B1-B6, for $\alpha \in (0,1)$, 
\begin{align}
\sqrt{n} \left\{ \widehat{ \text{VaR}}_{\mathbf{x}} \left( \alpha \right) - \text{VaR}_{ \mathbf{x} } \left( \alpha \right)  \right\}
\overset{ d }{ \to } \mathcal{N} \left( 0, \omega^2 + \sigma^2 \widetilde{z}^{\top} \widetilde{\Omega}^{-1} \widetilde{z} + \widetilde{\Delta} \right),  
\end{align}
where $\widetilde{z} = \left( 1, x_1,..., x_{k-1} \right)^{\top}$, $\omega^2 = \frac{ \alpha ( 1 - \alpha )  }{ \left\{ F_{\epsilon}^{\prime} \left( F_{\epsilon}^{-1} \left( \alpha \right) \right) \right\}^2 }$, and $\widetilde{ \Delta } = \widetilde{ \Delta }_1 + \widetilde{ \Delta }_2$ with 
\begin{align*}
\widetilde{ \Delta}_1 &= \sigma^2 \mathbb{E} \left( \widetilde{ \mathbf{Z}}_1^{\top} \right) \widetilde{ \Omega }^{-1}  \mathbb{E} \left( \widetilde{ \mathbf{Z}}_1 \right) + 2 \frac{ \mathbb{E} \left( \epsilon_1 I \left( \epsilon_1 \leq F_{\epsilon}^{-1} \left( \alpha \right) \right) \right) }{   F_{\epsilon}^{\prime} \left( F_{\epsilon}^{-1} \left( \alpha \right) \right)} \mathbb{E} \left( \widetilde{ \mathbf{Z}}_1^{\top} \right)  \Omega^{-1} \mathbb{E} \left( \widetilde{ \mathbf{Z}}_1 \right)
\\
\nonumber
\\
\widetilde{ \Delta}_2 &= - 2 \sigma^2 \mathbb{E} \left( \widetilde{ \mathbf{Z}}_1^{\top} \right) \widetilde{ \Omega }^{-1} \widetilde{ \mathbf{z}} - 2 \frac{ \mathbb{E} \left( \epsilon_1 I \left( \epsilon_1 \leq F_{\epsilon}^{-1} \left( \alpha \right) \right) \right) }{   F_{\epsilon}^{\prime} \left( F_{\epsilon}^{-1} \left( \alpha \right) \right)} \mathbb{E} \left( \widetilde{ \mathbf{Z}}_1^{\top} \right) \widetilde{ \Omega }^{-1} \widetilde{ \mathbf{z} }. 
\end{align*} 
\end{theorem}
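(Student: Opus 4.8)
The plan is to adapt the proof of Theorem~1 above (the stationary result quoted from He et al.) to the present quantile predictive regression setting, the one structural change being that the regressor vector is predetermined relative to the response, so that the linearized estimating equation, evaluated at the truth, is a martingale difference array for the filtration $\{ \mathcal{F}_t \}$. Throughout, write $\widetilde{\mathbf{z}} = ( 1, x_1,\dots,x_{k-1} )^{\top}$ for the query point (denoted $\widetilde{z}$ in the statement), $\mathbf{Z}_t = ( 1, \mathbf{X}_t^{\prime} )^{\prime}$, $\gamma = F_{\epsilon}^{-1}(\alpha)$, $f_{\epsilon} = F_{\epsilon}^{\prime}$, $\boldsymbol{\Omega} = \widetilde{\Omega} = \mathbb{E}( \mathbf{Z}_1 \mathbf{Z}_1^{\top} )$ and $\sigma^2 = \mathbb{E}( \epsilon_1^2 )$. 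The estimator is the two-step object $\widehat{ \text{VaR} }_{\mathbf{x}}(\alpha) = \widetilde{\mathbf{z}}^{\top} \widehat{\boldsymbol{\beta}} + \widehat{\gamma}$, where $\widehat{\boldsymbol{\beta}}$ is the least squares fit of the predictive regression and $\widehat{\gamma}$ is the empirical $\alpha$-quantile of the fitted residuals $\widehat{\epsilon}_t = Y_t - \mathbf{Z}_t^{\top} \widehat{\boldsymbol{\beta}} = \epsilon_t - \mathbf{Z}_t^{\top}( \widehat{\boldsymbol{\beta}} - \boldsymbol{\beta} )$, while the population target is $\text{VaR}_{\mathbf{x}}(\alpha) = \widetilde{\mathbf{z}}^{\top} \boldsymbol{\beta} + \gamma$.

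First I would establish two Bahadur--Kiefer representations. For the least squares step, under conditions B1--B6 (stationarity and ergodicity of $\{ \mathbf{Z}_t \}$, the martingale-difference and conditional quantile restrictions on $\{ \epsilon_t \}$, positive definiteness of $\boldsymbol{\Omega}$, and finiteness of $\mathbb{E}\norm{\mathbf{Z}_1}^{2+\delta}$ and $\mathbb{E}| \epsilon_1 |^{2+\delta}$ for some $\delta > 0$), $\sqrt{n}( \widehat{\boldsymbol{\beta}} - \boldsymbol{\beta} ) = \boldsymbol{\Omega}^{-1} n^{-1/2} \sum_{t=1}^n \epsilon_t \mathbf{Z}_t + o_p(1)$. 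For the residual-quantile step, expanding the approximate first-order condition $n^{-1} \sum_t I( \widehat{\epsilon}_t \leq \widehat{\gamma} ) \approx \alpha$ around $( \gamma, \boldsymbol{\beta} )$ and using smoothness and strict positivity of $f_{\epsilon}$ near $\gamma$ yields the generated-regressor correction
\begin{align*}
\sqrt{n}( \widehat{\gamma} - \gamma ) = - \frac{1}{ f_{\epsilon}(\gamma) } \, n^{-1/2} \sum_{t=1}^n \big\{ I( \epsilon_t \leq \gamma ) - \alpha \big\} \; - \; \mathbb{E}( \mathbf{Z}_1 )^{\top} \sqrt{n}( \widehat{\boldsymbol{\beta}} - \boldsymbol{\beta} ) \; + \; o_p(1).
\end{align*}
Adding the two contributions and using $\widetilde{\mathbf{z}}^{\top} \sqrt{n}( \widehat{\boldsymbol{\beta}} - \boldsymbol{\beta} ) - \mathbb{E}(\mathbf{Z}_1)^{\top} \sqrt{n}( \widehat{\boldsymbol{\beta}} - \boldsymbol{\beta} ) = n^{-1/2} \sum_t \epsilon_t \mathbf{Z}_t^{\top} \boldsymbol{\Omega}^{-1} ( \widetilde{\mathbf{z}} - \mathbb{E}(\mathbf{Z}_1) )$ gives $\sqrt{n}( \widehat{ \text{VaR} }_{\mathbf{x}}(\alpha) - \text{VaR}_{\mathbf{x}}(\alpha) ) = \sum_{t=1}^n D_{nt} + o_p(1)$, with $D_{nt}$ exactly the triangular array displayed just before the theorem.

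Next I would verify that $\{ D_{nt} \}$ is a martingale difference array for $\{ \mathcal{F}_t \}$: $\mathbf{Z}_t$ is $\mathcal{F}_{t-1}$-measurable, while $\mathbb{E}( \epsilon_t \mid \mathcal{F}_{t-1} ) = 0$ and $\mathbb{P}( \epsilon_t \leq \gamma \mid \mathcal{F}_{t-1} ) = \alpha$, so $\mathbb{E}( D_{nt} \mid \mathcal{F}_{t-1} ) = 0$. I would then apply the martingale central limit theorem of Hall and Heyde (1980, Theorem~3.2) by checking (i) the conditional Lindeberg condition, which reduces to uniform integrability controlled by the $(2+\delta)$-moments in B1--B6; and (ii) the convergence $\sum_{t=1}^n \mathbb{E}( D_{nt}^2 \mid \mathcal{F}_{t-1} ) \overset{p}{\to} \omega^2 + \sigma^2 \widetilde{\mathbf{z}}^{\top} \boldsymbol{\Omega}^{-1} \widetilde{\mathbf{z}} + \widetilde{\Delta}$. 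For (ii) I would expand $D_{nt} = n^{-1/2}( A_t + B_t + C_t )$ with $A_t = - \{ I( \epsilon_t \leq \gamma ) - \alpha \}/f_{\epsilon}(\gamma)$, $B_t = \epsilon_t \mathbf{Z}_t^{\top} \boldsymbol{\Omega}^{-1} \widetilde{\mathbf{z}}$ and $C_t = - \epsilon_t \mathbf{Z}_t^{\top} \boldsymbol{\Omega}^{-1} \mathbb{E}( \mathbf{Z}_1 )$, and invoke the ergodic theorem: $\mathbb{E}( A_1^2 ) = \alpha(1-\alpha)/f_{\epsilon}(\gamma)^2 = \omega^2$, $\mathbb{E}( B_1^2 ) = \sigma^2 \widetilde{\mathbf{z}}^{\top} \boldsymbol{\Omega}^{-1} \widetilde{\mathbf{z}}$, while $\mathbb{E}(C_1^2)$ together with the cross terms $2\,\mathbb{E}(A_1 C_1)$, $2\,\mathbb{E}(B_1 C_1)$, $2\,\mathbb{E}(A_1 B_1)$ — the last two evaluated via $\mathbb{E}[ \{ I(\epsilon_1 \leq \gamma) - \alpha \} \epsilon_1 ] = \mathbb{E}( \epsilon_1 I( \epsilon_1 \leq \gamma ) )$ since $\mathbb{E}( \epsilon_1 ) = 0$, together with independence of $\epsilon_t$ from $\mathcal{F}_{t-1}$ — reproduce $\widetilde{\Delta}_1 + \widetilde{\Delta}_2$ term by term. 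Slutsky's lemma then delivers the stated limit.

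The main obstacle I anticipate is the residual-quantile Bahadur representation with a genuine $o_p(1)$ remainder at the $\sqrt{n}$ scale: one must control the empirical process $n^{-1/2} \sum_t \{ I( \epsilon_t - \mathbf{Z}_t^{\top} \boldsymbol{\delta}/\sqrt{n} \leq \gamma + s/\sqrt{n} ) - I( \epsilon_t \leq \gamma ) - \mathbb{E}[\,\cdot\,] \}$ uniformly over $( \boldsymbol{\delta}, s )$ in compacts, which under the weakened dependence and moment conditions B1--B6 calls for a maximal inequality or a stochastic-equicontinuity/chaining argument for mixing sequences rather than the classical \emph{i.i.d.} bound; a secondary subtlety, absent in the pure \emph{i.i.d.} location-quantile case, is that both the remainder and the conditional-variance limit involve $\mathbb{E}( \epsilon_1 I( \epsilon_1 \leq \gamma ) )$, whose finiteness hinges on the moment conditions on $\epsilon_1$, so these must be tracked carefully — it is precisely this bookkeeping under B1--B6 that distinguishes the $\widetilde{\Delta}$ here from the $\Delta$ of Theorem~1.
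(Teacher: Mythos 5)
Your proposal is correct and follows essentially the same route the paper takes: it identifies the same influence-function/martingale-difference array $D_{nt}$ (your $A_t+B_t+C_t$ decomposition is exactly the paper's displayed $D_{nt}$) and invokes the Hall--Heyde martingale CLT after checking the conditional Lindeberg condition and the convergence of the conditional variance to $\omega^2 + \sigma^2 \widetilde{z}^{\top}\widetilde{\Omega}^{-1}\widetilde{z} + \widetilde{\Delta}$, which is precisely the adaptation of He et al.'s Theorem 1 that the paper sketches. Your write-up is in fact more explicit than the paper's (the two Bahadur representations, the generated-regressor correction, and the term-by-term matching of $\widetilde{\Delta}_1+\widetilde{\Delta}_2$ are only implicit there), so no discrepancy to flag.
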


%%-------------------------------------------------------------------------%%
\newpage 

\subsection{Nonstationary Case: (LUR specification)}

Let $\left( \mathbb{P}, \mathcal{F}, \Omega \right)$ be a probability space with $\mathcal{F}$ the information set and an equipped $\sigma-$algebra. For the nonstationary case, we consider that the predictors in the model are modelled by the Local-to-Unit root specification. To begin with, we consider the estimation procedure for the CoVaR risk measure, which is based on the two-step quantile predictive regression specification. Moreover, the estimation procedure takes into account the graph representation with nodes being the stocks.  

Below we provide a discussion in the case one would consider the estimation of a large covariance matrix of a multivariate nearly-integeated process with unknown degree of persistence. Consider that the $p-$dimensional regressors $\boldsymbol{X}_t$ follows a VAR(1) process with a LUR autoregression matrix as
\begin{align}
\boldsymbol{X}_t = \left( \boldsymbol{I}_p - \frac{ \boldsymbol{C}_p }{ n^{\alpha} } \right) \boldsymbol{X}_{t-1} + \boldsymbol{u}_t, \ \ \ t = 1,...,n 
\end{align}

The above equation can be also written as below
\begin{align}
\boldsymbol{R}_n ( c, \alpha )  \boldsymbol{X} = \boldsymbol{U}
\end{align}
where the non-random $n \times n$ matrix $\boldsymbol{R}_n ( c, \alpha )$ takes the bi-diagonal form as below
\begin{align}
\boldsymbol{R}_n ( c, \alpha ) = 
\begin{bmatrix}
\textcolor{red}{1} & 0 & 0 & \hdots & 0  & 0 
\\
\textcolor{blue}{- \rho_n (c, \alpha )} & \textcolor{red}{1} & 0 & \hdots & 0  & 0 
\\
0 & \textcolor{blue}{- \rho_n (c, \alpha )} & \textcolor{red}{1} & \vdots & 0  & 0 
\\
\vdots & \vdots & \vdots & \vdots & \vdots  & \vdots
\\
0 & 0 & 0 & 0 & \textcolor{blue}{- \rho_n (c, \alpha )}  & \textcolor{red}{1} 
\end{bmatrix}
\end{align}
with $\textcolor{blue}{- \rho_n (c, \alpha ) = 1 - c / n^{\alpha}} $.

Moreover, let $\boldsymbol{H}_n = \boldsymbol{I}_n - \frac{1}{n} \mathbf{1} \mathbf{1}^{\prime}$ denote the $n \times n$ centering matrix and $\boldsymbol{X}_H = \boldsymbol{H}_n \boldsymbol{X}$ the corresponding centred data matrix and therefore the corresponding centered covariance matrix can be written in the following form 
\begin{align}
\boldsymbol{S}_n = \frac{1}{p} \boldsymbol{X}_H \boldsymbol{X}_H^{\prime}. 
\end{align}

Notice also that we have that $\boldsymbol{X} = \boldsymbol{R}_n ( c, \alpha )^{-1} \boldsymbol{U}$ which implies that the covariance matrix can be written in the following form
\begin{align}
\boldsymbol{S}_n = \frac{1}{p} \left( \boldsymbol{H}_n \boldsymbol{R}_n ( c, \alpha )^{-1} \boldsymbol{U} \right) \left( \boldsymbol{H}_n \boldsymbol{R}_n ( c, \alpha )^{-1} \boldsymbol{U} \right)^{\prime}  
\end{align}
which provides a way of expressing the covariance matrix with respect to persistence, under the assumption that the data are generated as an VAR(1) model with a LUR coefficient matrix.

Furthermore, the uncentered matrix can be written as below 
\begin{align}
\boldsymbol{S}_n^{uc} = \frac{1}{p} \boldsymbol{X} \boldsymbol{X}^{\prime}.
\end{align}
and using the VAR(1) process we obtain
\begin{align}
\boldsymbol{S}_n^{uc} = \left( \boldsymbol{R}_n ( c, \alpha )^{-1} \boldsymbol{U} \right) \left( \boldsymbol{R}_n ( c, \alpha )^{-1} \boldsymbol{U} \right)^{\prime}
\end{align}

Denote with $\hat{ \lambda}_k$ and $\hat{ \lambda}^{uc}_k$ the $k-$th largest eigenvalues of $\boldsymbol{S}_n$ and $\boldsymbol{S}_n^{uc}$ respectively. Therefore, we are interested to examine the asymptotic behaviour of the eigenvalues of the matrices $\boldsymbol{S}_n$ and $\boldsymbol{S}_n^{uc}$ in a high-dimensional setting.

\bigskip

Notice that the above $p-$regressors are employed as lagged regressors in the various quantile predictive regression models. Furthermore, as additional covariate we consider the systemic risk covariate. Moreover, in the case when each of the individual equations is allowed to have different regressors then this implies the existence of different persistence properties across the node specific equations. However, we do not examine the particular scenario in our setting.

%%-------------------------------------------------------------------------%%
%\newpage 
   
\bibliographystyle{apalike}

{\small
\bibliography{myreferences1}}

%%-------------------------------------------------------------------------%%
\newpage

\centerline{ \large \textbf{SUPPLEMENTARY APPENDIX} }
\
\vspace*{-0.3 em}
\centerline{ \textbf{Statistical Estimation for Covariance Structures with Tail Estimates} }
\

\vspace*{-0.7 em}
\centerline{ \textbf{using Nodewise Quantile Predictive Regression Models} }

\paragraph{I. Discussion of Computational Aspects}

\

In this Section, we discuss some relevant computational aspects to our framework. In terms of the implementation of the algorithm we use the Statistical Package R. To reduce the execution time we utilize parallel programming techniques as well as related R packages for this purpose, such as the \texttt{Rcpp} library and the \texttt{Rccp Armadillo} linear algebra library. For example, we  use parallel computing techniques with R using the Iridis4 High Performance Computing Facility of the University of Southampton. An innovation of our proposed methodology which reduces significantly computational time is the algorithm we use for the estimation of the risk matrix. The R installation on the Iridis4 cluster allows the use of 16 cores per submitted job. Moreover, by submitting a number of jobs at the same time we were able to utilize in the order of 64 cores in parallel for our computation. 
We define with $\boldsymbol{\Gamma}$ the $\mathsf{VaR-\Delta CoVaR}$ risk matrix. Then, the elements of the risk matrix are estimated using the procedure below\footnote{Notice that the novelty of our Algorithm 2 is the parallel estimation of the elements of the matrix by avoiding restimating the VaR in each iteration we estimate the corresponding CoVaR. Researchers investigating estimation problems for high dimensional covariance and precision matrices have proposed various methodologies for estimating $\mathbf{\Sigma}^{-1}$ column-by-column implementable with parallel computing such as the nodewise Lasso and the constrained $\ell_1-$minimization for inverse matrix estimation (CLIME) (\cite{shu2019estimationmatrix}).}. Notice that we correct for possible estimation bias due to the existence of the generated regressor using a bootstrap procedure which is implemented internally in R when estimating each pair of $\mathsf{VaR}$ and $\mathsf{CoVaR}$'s using the R package \texttt{quantreg}.

\begin{verbatim}

VaR_DCoVar_forecast_function <- function( Nr_C = Nr_C, nhist = nhist, 
                          returns = returns_hist, macro = macro_hist, tau = tau )
{#begin of function
  
  # Initialize inputs
  Nr_C  <- Nr_C
  nhist <- nhist
  tau   <- tau
 
  returns <- returns_hist
  macro   <- macro_hist
  
  nr <- NROW(returns)
  p <- ncol(macro)
  
  # Step 1: Estimate the VaR-DCoVaR matrix (no change of signs in this step)
  VaR.DCoVar.forecast <- Risk_Matrix_forecast_function( Nr_C = Nr_C, nhist = nhist, returns=returns_hist, macro=macro_hist, tau = tau )
  
  # Step 2: Take the negative VaRs and DCoVaRs
  VaR.DCoVar.forecast.positive <- VaR.DCoVar.forecast
  for (i in 1: Nr_C )
  {
    for (j in 1: Nr_C )
    {  
      VaR.DCoVar.forecast.positive[i,j] <- - ( VaR.DCoVar.forecast[i,j] )
    }
  }
  
  # Second I construct the non-symmetric version of the proposed Gamma risk matrix
  # Take the estimated VaRs into a vector
  VaR.forecasts <- matrix( 0, nrow = Nr_C, ncol = 1 )
  for (i in 1:Nr_C )
  {
    VaR.forecasts[i,1] <- abs( VaR.DCoVar.forecast.positive[i,i] )
  }
  VaR.DCoVar.forecast.positive.new <- diag( Nr_C )
  
  # Step 3: Construct our proposed risk matrix 
  for (i in 1:Nr_C)
  {
    for (j in 1:Nr_C)
    {# begin for-loop
      if (j!=i)
      {# begin if 
        VaR.DCoVar.forecast.positive.new[j,i] <- ( ( VaR.forecasts[j,1]*VaR.forecasts[i,1] )^0.5 ) * VaR.DCoVar.forecast.positive[j,i]
      }# end if
    }# end for-loop
  }
  
  # Next I replace the diagonal with the positive estimated VaRs
  for (i in 1:Nr_C )
  {
    VaR.DCoVar.forecast.positive.new[i,i] <- VaR.forecasts[i,1]
  }
  
  # Step 4: Take the symmetric part of the Gamma matrix
  risk.matrix <- 0.5 * ( VaR.DCoVar.forecast.positive.new + t(VaR.DCoVar.forecast.positive.new) )
  
  return( risk.matrix )
  
}#end of function


\end{verbatim}

%%-------------------------------------------------------------------------%%
\newpage 

\paragraph{II. Discussion of Asymptotic Behaviour of Eigenvalues of Large Covariance Matrices}

\

Related studies include \cite{fan2017tracy} who consider estimating the unknown covariance structure with linear random effect models and construct an asymptotic test of the global sphericity null hypothesis, such that $H_0: \Sigma_r = \sigma_t^2$, for every $r \in \left\{ 1,..., k \right\}$.  

\begin{lemma}
Assume that assumptions holds. Then, there exists $n_0 \geq 1$ such that, for all $n > n_0$, $A_n$ is diagonalizable in the form 
\begin{align}
A_n = P_n D_n P_n^{-1}, \ \ \text{with} \ \ \ D_n = \mathsf{diag} \big( \lambda_{n,1},..., \lambda_{n,p} \big)    
\end{align}
containing order distinct eigenvalues $1 > | \lambda_{n,1} | \geq ... \geq | \lambda_{n,p} | > 0$. In addition, it holds that $\norm{ P_n } \leq \mathcal{C}$ and $\norm{ P_n^{-1} } \leq \mathcal{C}$.
\end{lemma}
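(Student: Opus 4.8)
The plan is to read the lemma as a uniform-perturbation statement: the underlying assumptions are that $A_n$ converges to a limit matrix $A$ whose $p$ eigenvalues are pairwise distinct and lie in the punctured open unit disk $\{z \in \mathbb{C} : 0 < |z| < 1\}$, and the assertion is that this spectral picture, together with uniform diagonalizability, persists for every large $n$. Accordingly, I would organize the argument around continuity of eigenvalues and of (a suitable choice of) eigenvectors.

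First I would pass from matrices to characteristic polynomials. Since $A_n \to A$ entrywise, the coefficients of $\chi_{A_n}(\lambda) = \det(\lambda I - A_n)$ converge to those of $\chi_A(\lambda)$, and because the eigenvalues $\lambda_1,\dots,\lambda_p$ of $A$ are simple roots, continuity of the roots of a polynomial in its coefficients produces an $n_0$ such that for every $n > n_0$ the polynomial $\chi_{A_n}$ has exactly $p$ distinct roots $\lambda_{n,1},\dots,\lambda_{n,p}$, which after relabelling satisfy $\lambda_{n,i} \to \lambda_i$. Enlarging $n_0$ if necessary, finiteness of the spectrum of $A$ then yields, for $n > n_0$, a uniform spectral gap $\min_{i \neq j} |\lambda_{n,i} - \lambda_{n,j}| \geq \tfrac12 \min_{i\neq j}|\lambda_i - \lambda_j| > 0$ together with the location bounds $0 < \tfrac12 \min_i |\lambda_i| \leq |\lambda_{n,p}| \leq \cdots \leq |\lambda_{n,1}| \leq \tfrac12\big(1 + \max_i |\lambda_i|\big) < 1$, which is exactly the ordering/location part of the statement. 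Having $p$ distinct eigenvalues, $A_n$ admits $p$ linearly independent eigenvectors and hence is diagonalizable, $A_n = P_n D_n P_n^{-1}$ with $D_n = \mathsf{diag}(\lambda_{n,1},\dots,\lambda_{n,p})$.

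The delicate point is choosing $P_n$ so that $\norm{P_n}$ and $\norm{P_n^{-1}}$ stay bounded, and this is where I expect the real work. Rather than taking arbitrary eigenvectors, I would build them from the Riesz spectral projector $\Pi_{n,i} = \frac{1}{2\pi i}\oint_{\Gamma_i}(zI - A_n)^{-1}\,dz$, where $\Gamma_i$ is a small circle about $\lambda_i$ of radius below half the gap; for $n > n_0$ this is a well-defined rank-one projector depending continuously on $A_n$, so $\Pi_{n,i} \to \Pi_i$, the corresponding projector for $A$. Picking a continuously varying unit generator $v_{n,i}$ of the range of $\Pi_{n,i}$ gives $v_{n,i} \to v_i$, hence $P_n := [\,v_{n,1}\mid\cdots\mid v_{n,p}\,] \to P := [\,v_1\mid\cdots\mid v_p\,]$; since $P$ diagonalizes $A$ it is invertible, and because convergent sequences are bounded and matrix inversion is continuous at invertible matrices, both $\norm{P_n}$ and $\norm{P_n^{-1}}$ are bounded for $n > n_0$, so taking $\mathcal{C}$ to be the supremum (after absorbing the finitely many remaining indices) closes the argument.

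The main obstacle is precisely this continuity-of-eigenvectors step: a careless normalisation can force $\det P_n \to 0$ and blow up $\norm{P_n^{-1}}$ even when the eigenvalues are well separated, so the spectral-projector construction — equivalently, using the nonzero columns of the adjugate $\operatorname{adj}(\lambda_{n,i} I - A_n)$, which is polynomial and hence continuous in the entries and nonzero by simplicity of $\lambda_{n,i}$ — is essential rather than cosmetic. A secondary caveat is that the whole scheme presumes the assumptions genuinely deliver $A_n \to A$ with $A$ having distinct eigenvalues in the punctured disk; if the hypotheses instead only guarantee $\norm{A_n} = \mathcal{O}(1)$ plus such a spectral property in the limit, one would argue along each convergent subsequence and use compactness of the set of limit points to render the constant $\mathcal{C}$ uniform.
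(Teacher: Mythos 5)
The paper never proves this lemma: it is stated with the placeholder hypothesis ``assume that assumptions holds'', and the remark that follows makes clear the result is imported from the cited study \cite{badreau2023consistency} on nearly unstable AR($p$) processes; the proof environment appearing right after it belongs to the subsequent asymptotic-normality theorem, not to the lemma. So there is no in-paper argument to compare yours against, and your proposal has to be judged on its own terms.

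On those terms it is a correct, standard perturbation argument under the reading you adopt ($A_n \to A$ entrywise with $A$ having $p$ distinct eigenvalues in the punctured open unit disk): convergence of the characteristic polynomials plus simplicity of the limit roots gives eventually distinct eigenvalues with a uniform gap, simplicity gives diagonalizability, and the Riesz-projector (or adjugate-column) construction is exactly the right device for producing eigenvector matrices $P_n \to P$ with $P$ invertible, hence uniform bounds on $\norm{P_n}$ and $\norm{P_n^{-1}}$; your point that a careless normalisation can drive $\det P_n \to 0$ is well taken. The one substantive caveat is the one you only gesture at in your closing paragraph: in the intended setting of the cited work, and of the surrounding appendix here where the autoregressive coefficient is $\rho_n(c,\alpha) = 1 - c/n^{\alpha} \to 1$, the limit matrix $A$ generically has unit-modulus eigenvalues, so the step $|\lambda_{n,1}| \le \tfrac12\big(1+\max_i|\lambda_i|\big) < 1$ is unavailable; the strict bound $|\lambda_{n,1}| < 1$ at each finite $n$ must instead be read off the explicit structure of $A_n$ (e.g.\ eigenvalues of the form $\lambda_i\,\rho_n$ with $\rho_n < 1$), while your projector argument for the uniform bounds on $\norm{P_n}$ and $\norm{P_n^{-1}}$ survives unchanged as long as the limiting eigenvalues are distinct, wherever they lie. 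Making that distinction explicit is what separates your proof from the lemma as it is meant in its source context; otherwise the argument is sound.
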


\medskip

\begin{theorem}
Assume that conditions hold and that $\mathbb{E} \big[ \left|  \epsilon_1 \right|^{ 2 + \nu } \big] = \eta_{\nu} < + \infty$ for some $\nu > 0$. Then, if the eigenvalues of $A$ are real, we have the asymptotic normality such that 
\begin{align}
\sqrt{n} V_n^{-1/2} P_n^{\top} \left(  \widehat{\theta}_n - \theta_n \right)  \overset{d}{\to}  \mathcal{N} \big( 0, H_0^{-1} \big)
\end{align}
where $\widehat{\theta}_n$ is the OLS estimator in the nearly unstable $AR(p)$ process, $V_n$ corresponds to the matrix of rates and $H_0$ is a standardized positive precision matrix. Furthermore, if some eigenvalues of $A$ are complex, we have the asymptotic normality such that 
\begin{align}
\sqrt{n \kappa_n } \langle L_n,  \widehat{\theta}_n - \theta_n \rangle \overset{d}{\to} \mathcal{N} \big( 0, h_0^2 \big),
\end{align}
where
\begin{align}
L_n = \left( 1, \frac{1}{ \lambda_1 \rho_n },...,  \frac{1}{ ( \lambda_1 \rho_n )^{p-1} } \right)^{\top} \ \ \ \text{and} \ \ \ h_0^2 = \frac{ 2c }{ \pi^2_{11} } > 0. 
\end{align}
\end{theorem}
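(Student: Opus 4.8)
The plan is to reduce the statement to a martingale central limit theorem carried out in the eigen-coordinates furnished by the diagonalisation $A_n = P_n D_n P_n^{-1}$ of the companion matrix. Writing $\mathbf{Y}_{t-1} = ( X_{t-1}, \dots, X_{t-p} )^{\top}$ for the state vector of the nearly unstable $AR(p)$ recursion, least squares satisfies the exact identity
\begin{align}
\widehat{\theta}_n - \theta_n = \left( \sum_{t=1}^{n} \mathbf{Y}_{t-1} \mathbf{Y}_{t-1}^{\top} \right)^{-1} \sum_{t=1}^{n} \mathbf{Y}_{t-1} \epsilon_t ,
\end{align}
so both displayed limits follow once we understand the sample information matrix $\mathbf{S}_n := \sum_{t=1}^{n} \mathbf{Y}_{t-1} \mathbf{Y}_{t-1}^{\top}$ and the score $\mathbf{m}_n := \sum_{t=1}^{n} \mathbf{Y}_{t-1} \epsilon_t$ after normalisation by $V_n$, $P_n$ and the scalar rate $n\kappa_n$.

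First I would pass to the transformed coordinates $Z_t := P_n^{-1} \mathbf{Y}_t$, which obey $Z_t = D_n Z_{t-1} + P_n^{-1} \mathbf{e}_1 \epsilon_t$ (modulo initial-condition terms that are asymptotically negligible under the stated conditions). Since $D_n = \mathsf{diag}( \lambda_{n,1}, \dots, \lambda_{n,p} )$ with $\lambda_{n,j} = \lambda_j \rho_n$ and $|\lambda_j| = 1$, each coordinate $Z_{t,j}$ behaves like a scalar near-unit-root $AR(1)$ of the moderate-deviation-from-unity type, with $\mathbb{E}| Z_{t,j} |^2 \asymp \kappa_n$. The key step is a law of large numbers for the information matrix: I would establish $( n\kappa_n )^{-1} P_n^{-1} \mathbf{S}_n P_n^{-\top} \overset{p}{\to} \Xi$ for an explicit deterministic positive-definite $\Xi$, by computing $\mathbb{E}[ Z_{t-1} Z_{t-1}^{\top} ]$ through the geometric sums $\sum_{k \ge 0} ( \lambda_j \rho_n )^{k} ( \overline{ \lambda_\ell \rho_n } )^{k}$, which are of order $\kappa_n$ on the diagonal $j = \ell$ but only $O(1)$ off the diagonal $j \ne \ell$ because distinct rotations do not resonate, and then bounding the variance of $( n\kappa_n )^{-1} \sum_t Z_{t-1} Z_{t-1}^{\top}$ to upgrade this to convergence in probability. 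The uniform bounds $\norm{ P_n } \le \mathcal{C}$ and $\norm{ P_n^{-1} } \le \mathcal{C}$ from the preceding Lemma, together with $A_n \to A$, are exactly what let these estimates be transferred back to $\mathbf{S}_n$ and organised into the rate matrix $V_n$ and the limiting matrix $H_0$.

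Next I would apply a martingale CLT to $( n\kappa_n )^{-1/2}$ times the transformed score: $\{ \mathbf{Y}_{t-1} \epsilon_t, \mathcal{F}_t \}$ is a martingale difference array with conditional covariance $\sigma^2 \mathbf{S}_n$, whose normalised version converges in probability to a deterministic limit proportional to $H_0$ by the previous step, while the conditional Lindeberg condition is verified using $\mathbb{E}[ | \epsilon_1 |^{2 + \nu} ] < \infty$ together with the uniform control on the eigen-coordinates. Combining the score CLT with the information LLN via Slutsky's lemma and the continuous mapping theorem yields $\sqrt{n}\, V_n^{-1/2} P_n^{\top} ( \widehat{\theta}_n - \theta_n ) \overset{d}{\to} \mathcal{N}( 0, H_0^{-1} )$ in the real-eigenvalue case, with $H_0^{-1}$ emerging from the usual sandwich cancellation $\Xi^{-1} ( \sigma^2 \Xi ) \Xi^{-1}$ after rescaling. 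For complex eigenvalues I would run the same machinery on the scalar projection $\langle L_n, \widehat{\theta}_n - \theta_n \rangle$ onto the dominant mode indexed by $\lambda_1$, the vector $L_n = ( 1, ( \lambda_1 \rho_n )^{-1}, \dots, ( \lambda_1 \rho_n )^{-( p-1 )} )^{\top}$ being precisely the companion-matrix eigen-direction that isolates that mode; this gives $\sqrt{ n \kappa_n }\, \langle L_n, \widehat{\theta}_n - \theta_n \rangle \overset{d}{\to} \mathcal{N}( 0, h_0^2 )$ with $h_0^2 = 2c / \pi_{11}^2$, where $2c$ is the limiting long-run variance of the martingale along that mode and $\pi_{11}$ is the corresponding limit of the normalised information.

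I expect the main obstacle to be the law of large numbers for $\mathbf{S}_n$ in the nearly unstable regime: the state vector is neither stationary nor ergodic at the scale $\kappa_n$, so no ergodic theorem applies, and the second-moment control must handle the near-cancellations in the geometric and rotating sums $\sum_k ( \lambda_j \rho_n )^{k} ( \overline{ \lambda_\ell \rho_n } )^{k}$ uniformly in $n$, carefully separating the resonant diagonal contributions of order $\kappa_n$ from the bounded off-diagonal ones. The complex case compounds this, since the oscillatory factors $e^{ \mathrm{i} k \omega }$ must be shown, via summation by parts or a spectral argument, to average out so that only the intended Gaussian limit survives; keeping $\norm{ P_n }$, $\norm{ P_n^{-1} }$ and the drift from $A_n \to A$ uniformly under control throughout is the recurring technical burden.
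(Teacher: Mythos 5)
Your strategy matches the paper's proof of this theorem: both work in the eigen-coordinates furnished by the diagonalisation $A_n = P_n D_n P_n^{-1}$, treat $a^{\top} V_n^{1/2} P_n^{-1} \Phi_{n,k-1} \varepsilon_k$ as a martingale difference array, establish convergence in probability of the predictable quadratic variation (your information-matrix law of large numbers), verify the Lindeberg condition via H\"older and Markov inequalities using the $(2+\nu)$-moment bound, and conclude with the martingale CLT plus the Cram\'er--Wold device. The only difference is that you additionally spell out the OLS sandwich/Slutsky step and sketch the complex-eigenvalue projection onto $L_n$, both of which the paper's written proof leaves implicit (it only carries out the real-eigenvalue case).
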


\begin{proof}
Suppose that the eigenvalues of $A$ are real, and consider the filtration $\mathcal{F}_{n,k} = \sigma \big( \Phi_{n,0}, \varepsilon_1,..., \varepsilon_k \big)$, $\forall n \geq 1, \forall 1 \leq k \leq n$. 

For all $a \in \mathbb{R}^p \backslash \left\{ 0 \right\}$, let also 
\begin{align}
m_{n,k}^{(a)} = a^{\top} V_n^{1/2} P_n^{-1} \Phi_{n,k-1} \varepsilon_k \ \ \ \text{and} \ \ \ M_n^{(a)} = \sum_{k=1}^n m_{n,k}^{(a)}   
\end{align}
Notice that the sequence $\left( m_{n,k}^{(a)} \right)$ is a scalar martingale difference array with respect to $\mathcal{F}_{n,k}$ at fixed $n$ and for $1 \leq k \leq n$. Moreover, the predictable quadratic variation of $M_n^{ (a) }$ is given by 
\begin{align*}
\langle M^{ (a) } \rangle_n &= \sum_{k=1}^n \mathbb{E} \left[ \left( m_{n,k}^{(a)} \right)^2 \big| \mathcal{F}_{n,k-1}  \right] 
\\
&=
\sigma^2 a^{\top} V_n^{1/2} P_n^{-1} \left[ \sum_{k=1}^n \Phi_{n,k-1} \Phi_{n,k-1}^{\top} \right] \left( P_n^{-1} \right)^{\top} V_n^{1/2} a
\end{align*}
since $\left( \varepsilon_k \right)$ is a white noise and $\Phi_{n,k-1}$ is $\mathcal{F}_{n,k-1}-$measurable, which results to the following convergence
\begin{align}
\frac{1}{n} \langle M^{ (a) } \rangle_n \overset{p}{\to} \sigma^2 a^{\top} H \alpha > 0.   
\end{align}
where $H$ is the covariance matrix which is positive definite. 

To apply the central limit theorem for arrays of martingales, it remains to show that the Linderberg's condition is satisfied such that 
\begin{align}
\forall \ \varepsilon > 0, \ \frac{1}{n} \sum_{k=1}^n \mathbb{E} \left[ \left( m_{n,k}^{(a)} \right)^2 \boldsymbol{1} \left\{  \left| m_{n,k}^{(a)} \right| > \epsilon \sqrt{n} \right\} \big| \mathcal{F}_{n,k-1}  \right] \overset{p}{\to} 0.
\end{align}
In order to prove the above result we first see that for any $1 \leq k \leq n$,
\begin{align*}
\mathbb{E} \left[ \left( m_{n,k}^{(a)} \right)^2 \boldsymbol{1} \left\{  \left| m_{n,k}^{(a)} \right| > \epsilon \sqrt{n} \right\} \big| \mathcal{F}_{n,k-1}  \right] &= T_{n,k-1}^{ (a) } \xi_{n,k}
\\
&\leq T_{n,k-1}^{ (a) } \underset{ 1 \leq k \leq n }{ \mathsf{sup} }  \xi_{n,k} 
\end{align*}
where $T_{n,k-1}^{ (a) } \overset{\Delta}{=} a^{\top} V_n^{1/2} P_n^{-1} \Phi_{n,k-1} \Phi_{n,k-1}^{\top} \left( P_n^{-1} \right)^{\top} V_n^{1/2} a > 0$ and using the Holder's and Markov's inequalities we get that
\begin{align*}
\xi_{n,k} &\overset{\Delta}{=} 
\mathbb{E} \left[ \varepsilon_k^2 \boldsymbol{1} \left\{  \left| m_{n,k}^{(a)} \right| > \epsilon \sqrt{n} \right\} \big| \mathcal{F}_{n,k-1}  \right] 
\\
&\leq \eta_{\nu}^{ \frac{2}{ (2 + \nu) } } \mathbb{P} \left( \left(  m_{n,k}^{(a)} \right)^2 > \epsilon^2 n \big| \mathcal{F}_{n,k-1} \right)^{ \frac{\nu}{ ( 2 + \nu) } } \leq \mathcal{C} \left(  \frac{ T_{n,k-1}^{(a)}  }{n} \right)^{ \frac{2}{ (2 + \nu) } }
\end{align*}
Therefore, it is sufficient to establish that 
\begin{align}
\forall \ a \in \mathbb{R}^p \backslash \left\{ 0 \right\}, \ \ \ \frac{ M_n^{(a)} }{ \sqrt{n} } \overset{d}{\to} \mathcal{N} \big( 0, \sigma^2 a^{\top} H a \big).     
\end{align}
Since $a$ is arbitrary, we can use the Cramer-Wold device to get the convergence of the $p-$dimensional vector such that 
\begin{align}
 \frac{ M_n^{(a)} }{ \sqrt{n} } \overset{d}{\to} \mathcal{N} \big( 0, \sigma^2 H \big).         
\end{align}
with 
\begin{align}
M_n = V_n^{1/2} P_n^{-1} \sum_{k=1}^n \Phi_{n,k-1} \varepsilon_k.     
\end{align}
   
\end{proof}

\begin{remark}
Notice that the above results are given in the excellent study of \cite{badreau2023consistency} and are particularly useful for obtaining relevant asymptotic theory for the eigenvalues of the proposed risk matrix in this article. Additional useful properties to be considered in this direction include the separation property of tail or extreme events. In particular the separation property shows that extremes induced from different factors belong to independent blocks. Further aspects relevant to the separation property are more formally in the study of \textcolor{blue}{Krizmani\'c and Katsouris (2023+)}: "\textit{Weak Convergence of Self-Normalized Partial Sum Processes in the $M_1$ topology}".  
\end{remark}

\end{document}